\documentclass[11pt,a4paper]{article}                           

%%%%%%%%%%%%%%%%%%%%%%%%%%%%%%%%%%%%%%%%%%%%%%%
%%%%%%%%%%%%%%%%%%%%%%%%%%%%%%%%%%%%%%%%%%%%%%%
%\usepackage[includeheadfoot, 
%            marginratio={1:1,2:3}, 
%            width=412pt, 
%            height=688pt,]{geometry}
%%%%%%%%%%%%%%%%%%%%%%%%%%%%%%%%%%%%%%%%%%%%%%%
%%%%%%%%%%%%%%%%%%%%%%%%%%%%%%%%%%%%%%%%%%%%%%%
%   before submission, change a4paper
%   to letterpaper and remove the a4-package.
%%%%%%%%%%%%%%%%%%%%%%%%%%%%%%%%%%%%%%%%%%%%%%%
%%%%%%%%%%%%%%%%%%%%%%%%%%%%%%%%%%%%%%%%%%%%%%%

%\usepackage{a4}          % Remove before submission

%\usepackage[UKenglish]{babel}
\usepackage{amsmath}
\usepackage{amssymb}
\usepackage[usenames]{color} 
\usepackage{multirow}
\usepackage{graphicx}
\usepackage{epstopdf}
\usepackage{array}
\usepackage{hhline}
\usepackage{cite}
\usepackage{microtype,colortbl}
%\usepackage[bf]{caption}
%for the colors------
\usepackage{color}
\usepackage[usenames,dvipsnames]{xcolor}
\usepackage{pstool}
\usepackage{tikz}
\usepackage{ytableau}
\usepackage{latexsym,stmaryrd}
\usepackage[bookmarks=false]{hyperref}
 \usepackage{lscape}
\usepackage{caption}
 \usepackage{subcaption}
 \usepackage{tensor}
 \usepackage{dsfont}
\usepackage{amsthm}
\usepackage{enumerate}
\usepackage{tkz-euclide}
\usepackage{pifont}

%%%%%%%%%%%%%%%%%%%%%%%%%%%%%%%%%%%%%%%%%%%%%%%
%%%%%%%%%%%%%%%%%%%%%%%%%%%%%%%%%%%%%%%%%%%%%%%
%%%%%%%%%%%%%%%%%%%%%%%%%%%%%%%%%%%%%%%%%%%%%%%
%%%%%%%%%%%%%%%%%%%%%%%%%%%%%%%%%%%%%%%%%%%%%%%

%%%%%%%%%%%%%%%%%%%%%%%%%%%%%%%%%%%%%%%%%%%%%%%
% Hyperref for direct PDF compilation
\usepackage{ifpdf}
%\ifpdf
%  \usepackage[pdftex,
%    pdftitle={},
%    pdfauthor={},
%    pdfsubject={},
%    bookmarksopen, bookmarksnumbered, bookmarksopenlevel=2]{hyperref}
%\fi
%
\def\hybrid{\topmargin -20pt    \oddsidemargin 0pt
        \headheight 0pt \headsep 0pt
        \textwidth 6.25in       % A4 paper
        \textheight 9 in       % A4 paper
        \marginparwidth .875in
        \parskip 5pt plus 1pt 
          \jot = 1.5ex
   }
 %       The default is set to be hybrid
\hybrid
\numberwithin{equation}{section}
\numberwithin{table}{section}\setlength{\multlinegap}{25pt}

%%%%%%%%%%%%%%%%%%%%%%%%%%%%%%%%%%%%%%%%%%%%%%%

\newcommand{\beq}{\begin{equation}}  \newcommand{\eeq}{\end{equation}}
\newcommand{\bal}{\begin{aligned}}   \newcommand{\eal}{\end{aligned}}
\newcommand{\bea}{\begin{eqnarray}}  \newcommand{\eea}{\end{eqnarray}}

\newcommand{\bmat}{\left(\begin{array}}
\newcommand{\emat}{\end{array}\right)}

%%%%%%%%%%%%%%%%%%%%%%%%%%%%%%%%%%%%%%%%%%%%%%%
% Cal & BB

%%%%%%%%%%%%%%%%%%%%%%%%%%%%%%%%%%%%%%%%%%%%%%%
% misc

%%%%%%%%%%%%%%%%%%%%%%%%%%%%%%%%%%%%%%%%%%%%%%%
% internal references

%%%%%%%%%%%%%%%%%%%%%%%%%%%%%%%%%%%%%%%%%%%%%%%%%%%%%%%%%%%%%%
%% frak

%\newcommand{\fb}{\mathfrak{b}}

%%%%%%%%%%%%%%%%%%%%%%%%%%%%%%%%%%%%%%%%%%%%%%%%%%%%%%%%%%%%%%%
% Cal

% \newcommand{\rP}{\mathrm{P}}

%\newcommand{\Gw}{\mathcal G}

\usepackage{cancel}

%%%%%%%%%%%%%%%%%%%%%%%%%%%%%%%%%%%%%%%%%%%%%%%%%%%%%%%
\newcommand{\be}{\begin{equation}}
\newcommand{\ee}{\end{equation}}

%%%%%%%%%%%%%%%%%%%%%%%%%%%%%%%%%%%%%%%%%%%%%%%%%%%%%%%

%%%%%%%%%%%%%%%%%%%%%%%%%%%%%%%%%%%%%%%%%%%%%%%%%%%%%%%

%%%%%%%%%%%%%%%%%%%%%%%%%%%%%%%%%%%%%%%%%%%%%%%%%%%%%%%

% \newcommand{\tabitem}{~~\llap{\textbullet}~~}

%%%%%%%%%%%%%%%%%%%%%%%%%%%%%%%%%%%%%%%%%%%%%%%%%%%%%%%

\newtheorem{lem}{Lemma}
\newtheorem{cor}{Corollary}
\newtheorem{thm}{Theorem}

\newtheorem{conjecture}{Conjecture}

%%%%%%%%%%%%%%%%%%%%%%%%%%%%%%%%%%%%%%%%%%%%%%%%%%%%%%

\definecolor{Gray}{gray}{0.95}

\setcounter{tocdepth}{2}

%%%%%%%%%%%%%%%%%%%%%%%%%%%%%%%%%%%%%%%%%%%%%%%
%%%%%%%%%%%%%%%%%%%%%%%%%%%%%%%%%%%%%%%%%%%%%%%
%%%%%%%%%%%%             %%%%%%%%%%%%%%%%%%%%%%
%%%%%%%%%%%%  TITLEPAGE  %%%%%%%%%%%%%%%%%%%%%%
%%%%%%%%%%%%             %%%%%%%%%%%%%%%%%%%%%%
%%%%%%%%%%%%%%%%%%%%%%%%%%%%%%%%%%%%%%%%%%%%%%%
%%%%%%%%%%%%%%%%%%%%%%%%%%%%%%%%%%%%%%%%%%%%%%%
%%%%%%%%%%%%%%%%%%%%%%%%%%%%%%%%%%%%%%%%%%%%%%%

\begin{document}
\baselineskip=14pt
\parskip 5pt plus 1pt 

\vspace*{3cm}
\begin{center}
	{\Large\bfseries Finiteness Theorems and Counting Conjectures\\[.3cm] 
    for the Flux Landscape}\\[.3cm]
	
	\vspace{1.5cm}
	{\bf Thomas W.~Grimm}\footnote{t.w.grimm@uu.nl},
	{\bf Jeroen Monnee}\footnote{j.monnee@uu.nl}
	
	{\small
		\vspace*{.5cm}
		Institute for Theoretical Physics, Utrecht University\\ Princetonplein 5, 3584 CC Utrecht, The Netherlands\\[3mm]
	}
\end{center}
\vspace{3cm}
\begin{abstract}\noindent

In this paper, we explore the string theory landscape obtained from type IIB and F-theory flux compactifications. We first give a comprehensive 
introduction to a number of mathematical finiteness theorems, indicate how they have been obtained, and clarify their implications for the structure of the locus of flux vacua. Subsequently, in order to address finer details of the locus of flux vacua, we propose three mathematically precise conjectures on the expected number of connected components, geometric complexity, and dimensionality of the vacuum locus. With the recent breakthroughs on the tameness of Hodge theory, we believe that they are attainable to rigorous mathematical tools and can be successfully addressed in the near future. 
The remainder of the paper is concerned with more technical aspects of the finiteness theorems. In particular, we investigate their local implications and explain how infinite tails of disconnected vacua approaching the boundaries of the moduli space are forbidden. To make this precise, we present new results on asymptotic expansions of Hodge inner products near arbitrary boundaries of the complex structure moduli space. 

\end{abstract}

\newpage

\tableofcontents
\setcounter{footnote}{0}

\newpage
%%%%%%%%%%%%%%%%%%%%%%%%%%%%%%%%%%%%%%%%%%%%%%%%%%%%%%
\setcounter{section}{0}

\section{Introduction}

String theory is known to have a plethora of solutions around which effectively four-dimensional quantum field theories coupled to classical gravity can be determined. The space of such lower-dimensional effective theories is often referred to as the string theory landscape. With this understanding, one might then inquire which of these theories can possibly describe our Universe. On a more basic level one might wonder if the number of such theories, after appropriately identifying equivalent theories, is at all finite. If this is not the case, one should seriously question the predictive capabilities of string theory. These issues were addressed at length in the seminal works of Douglas et al.~\cite{Douglas:2003um,Ashok:2003gk,Denef:2004ze,Acharya:2006zw}, which led to the general expectation that the string landscape is, in an appropriate sense, finite. This expectation is further corroborated by efforts in the swampland program \cite{Hamada:2021yxy,Grimm:2021vpn}, which aims to identify the fundamental properties an effective theory coupled to gravity should satisfy in order to admit a UV-completion, see \cite{Palti:2019pca,vanBeest:2021lhn} for reviews. Concurrently, there have been some major developments in the fields of algebraic geometry and logic that have lifted this expectation to the level of a mathematical theorem, at least in specific settings. The aim of the present work is to provide a collection of finiteness results, coming from the fields of Hodge theory and tame geometry, in a way that is hopefully accessible to physicists. In particular, we hope to clarify what has/has not been shown and to give some insights and new perspectives on the various proofs. We then draw from this knowledge to put forward a number of structural conjectures about the landscape.

To prove something about the whole string landscape is a daunting task. Therefore, we will focus our attention on a particular corner of the string landscape, namely those four-dimensional low-energy effective theories that arise from flux compactifications of type IIB/F-theory \cite{Dasgupta:1999ss,Giddings:2001yu,Grimm:2004uq,Grimm:2010ks}, see \cite{Grana:2005jc,Douglas:2006es,Denef:2008wq} for reviews. These compactifications, viewed from the dual M-theory perspective \cite{Becker:1996gj}, are specified by a family of Calabi--Yau fourfolds varying in moduli, together with a background flux $G_4$. The moduli are generically stabilized at the critical points of the scalar potential induced by the flux, leading to a typically large landscape of flux vacua. Such vacua are of great phenomenological interest, as they feature spontaneous supersymmetry breaking down to $\mathcal{N}=1$ or even $\mathcal{N}=0$, and may eventually lead to de Sitter solutions \cite{Kachru:2003aw,Balasubramanian:2005zx} with a small cosmological constant \cite{Demirtas:2019sip,Alvarez-Garcia:2020pxd,Demirtas:2020ffz,Honma:2021klo,Demirtas:2021ote,Broeckel:2021uty,Bastian:2021hpc}. A crucial point is that the flux has to satisfy a number of consistency conditions, as the effective theory originates from a UV complete theory of quantum gravity. These conditions include a quantization condition and the so-called tadpole cancellation condition. Consequently, the central question is whether there exists only a finite number of fluxes and associated critical points that simultaneously satisfy these consistency conditions. To be clear, we will consider the issue of finiteness within a given family of Calabi--Yau fourfolds, varying in moduli. In particular, we will not address whether there exist only finitely many distinct topological classes of Calabi--Yau fourfolds, which is an interesting question on its own. 

In the context of IIB/F-theory flux compactification, initial evidence for this suggested finiteness was presented in the works \cite{Ashok:2003gk,Denef:2004ze,Acharya:2006zw}, which where later formalized in the mathematical works \cite{Douglas:2004zu,Douglas:2004kc,Douglas:2005df}. The underlying approach in these studies involved approximating the total number or index of flux vacua by integrating a suitable distribution of flux vacua over the moduli space, and showing that the latter is finite \cite{Eguchi:2005eh,Douglas:2006zj,Lu:2009aw}. From this distribution one could also obtain rough estimates for the total number of flux vacua, leading to the infamous number $10^{500}$. However, one critical limitation in their analysis was the relaxation of the quantization condition on the flux. Indeed, in order to give a complete proof of the finiteness of flux vacua, one expects that the quantization condition is crucial. 

Let us be a bit more specific on the kinds of vacua we will consider in this work. Importantly, we will focus on the stabilization of the complex structure moduli. In contrast, the K\"ahler moduli, whether stabilized or not, will not play an important role. Our analysis will involve two qualitatively different classes of vacua. Both classes correspond to the global minima of the flux-induced scalar potential and yield Minkowski vacua. In terms of the four-dimensional $\mathcal{N}=1$ supergravity formulation, both classes satisfy $D_iW_{\mathrm{flux}}=0$, where $W_{\mathrm{flux}}$ denotes the flux superpotential. The two classes are distinguished by whether they satisfy $W_{\mathrm{flux}}=0$ or $W_{\mathrm{flux}}\neq 0$, and are referred to as Hodge vacua and self-dual vacua, respectively. This is summarized in table \ref{tab:vacua}. Let us also emphasize that, for the purpose of this work, it is not necessary that all complex structure moduli are stabilized. As such, the vacuum locus may consist of various connected components of different dimensionality. 

\begin{table}[t!]
    \centering
    \begin{tabular}{|c|c|c|}
    \hline
      \rule[-.15cm]{0cm}{.55cm}   \textbf{definition} & \textbf{single solution} & \textbf{collection of all solutions} \\ \hline
       \rule[-.15cm]{0cm}{.55cm}  $D_iW_{\mathrm{flux}}=0,\,W_{\mathrm{flux}}=0$ & Hodge vacuum & Hodge locus\\ \hline
       \rule[-.15cm]{0cm}{.55cm}  $D_iW_{\mathrm{flux}}=0,\,W_{\mathrm{flux}}\neq 0$ & self-dual vacuum & self-dual locus \\ \hline
    \end{tabular}
    \caption{Summary of the two classes of vacua considered in this work, together with the terminology employed to describe a single vacuum and the full vacuum locus. Here $W_{\mathrm{flux}}$ denotes the flux-induced superpotential for the complex structure moduli.}
    \label{tab:vacua}
\end{table}

We now provide some more details on the finiteness results we will discuss, starting with the case of Hodge vacua. In the mathematics literature, primitive self-dual fluxes that additionally satisfy $W_{\mathrm{flux}}=0$ are a particular example of ``Hodge classes''. These are integral classes of type (2,2) in the Hodge decomposition of the primitive middle cohomology of the Calabi--Yau fourfold. One of the major milestones of Hodge theory is a theorem of Cattani, Deligne, and Kaplan which states that the locus of Hodge classes is a countable union of algebraic varieties \cite{CDK}. Interestingly, the same result can also be derived by assuming the Hodge conjecture to be true. For this reason, the result of Cattani, Deligne, and Kaplan is often viewed as the strongest evidence for the Hodge conjecture. Furthermore, if the flux satisfies the tadpole cancellation condition, meaning it has a bounded self-intersection, then the locus is in fact a \textit{finite} union of algebraic varieties. In particular, its number of connected components, which counts the number of Hodge vacua with possibly flat directions, is finite. 

Let us now turn our attention to generic self-dual flux vacua, for which initial finiteness results were presented in \cite{Schnellletter}, see also \cite{Grimm:2020cda}, for the case of a single complex structure modulus. In these works a detailed description of the Hodge norm of the $G_4$ flux was obtained by employing deep results in asymptotic Hodge theory, such as the one-variable $\mathrm{Sl}(2)$-orbit theorem of Cattani, Kaplan and Schmid \cite{CKS}. The finiteness of self-dual vacua in the general multi-variable case was proven recently in \cite{Bakker:2021uqw}, see also \cite{Grimm:2021vpn}. In contrast to the one-variable case, the proof of the multi-variable case is much more involved and relied heavily on recent advances in the field of tame geometry, such as the definability of the period map \cite{BKT}. 

The main technical result of the present work is to provide another perspective on the finiteness of self-dual vacua in the multi-variable case, without relying on methods from tame geometry. Instead, we generalize the analysis performed in \cite{Grimm:2020cda,Schnellletter} by considering the $\mathrm{Sl}(2)$-orbit theorem in its full multi-variable glory. In particular, we present general formulas for asymptotic Hodge inner products of arbitrary fluxes that include infinite towers of corrections. The derivation of these formulae utilizes a multi-variable generalization of the CKS recursion \cite{CKS}, see also \cite{Grimm:2021ikg}.
We then apply these results to prove the finiteness of self-dual flux vacua within a well-defined approximation that is often used in the study of asymptotic Hodge theory. This provides a good intuition for why finiteness is likely to persist, even when there are multiple moduli at play. Additionally, our improved expressions for Hodge inner products are of separate interest and may be used to generalize previous analyses to sub-leading orders. As an example, we derive an asymptotic formula for the central charge of D3-particles in the context of type IIB compactifications, which is valid near any boundary of the complex structure moduli space and generalizes the results of \cite{Bastian:2020egp}. 

Finally, in order to study more detailed features of the locus of flux vacua beyond just its finiteness, we outline a set of three concrete mathematical conjectures which may be addressed in the near future by combining techniques from asymptotic Hodge theory, (sharply) o-minimal geometry and the theory of unlikely intersections. The first two conjectures concern the enumeration of flux vacua, in particular Hodge vacua, as well as a candidate notion of geometric complexity, as developed by Binyamini and Novikov in \cite{binyamini2022}, for the locus of self-dual flux vacua. The third conjecture is a modified version of the tadpole conjecture of \cite{Bena:2020xrh}, adapted to the special class of Hodge vacua and is instead concerned with the dimensionality of the vacuum locus. In other words, it is related to the existence of a flat direction in the scalar potential. For related work on the tadpole conjecture, we refer the reader to \cite{Braun:2020jrx,Bena:2021wyr,Marchesano:2021gyv,Lust:2021xds,Plauschinn:2021hkp,Grana:2022dfw,Lust:2022mhk,Coudarchet:2023mmm,Braun:2023pzd}.

\subsubsection*{Outline of the paper}
The paper can be roughly divided into three different parts, which are organized as follows.
\begin{enumerate}[I.]
    \item Sections \ref{sec:flux_compactification} and \ref{sec:finiteness_intro}: this comprises the main physics content of the paper.
    \begin{enumerate}[i.]
        \item In section \ref{sec:flux_compactification} we provide a brief review of flux compactification in the language of F-theory. In particular, we recall the quantization and tadpole cancellation conditions that the flux should satisfy, and define the two classes of vacua that will be studied in the rest of the paper. 
        \item In section \ref{sec:finiteness_intro} we present a general discussion on the issue of finiteness of flux vacua and illustrate the main difficulties that arise. We then formulate and discuss a number of precise finiteness theorems, for both Hodge vacua and self-dual vacua, which have been established in the literature. 
    \end{enumerate}        
    \item Section \ref{sec:future_questions}: here we turn towards some future prospects and challenges that we believe to be worthy of further study. In particular, we present three concrete mathematical conjectures concerning the counting of Hodge vacua, the complexity of the landscape and the tadpole conjecture, and propose how these matters may be investigated using the methods of (sharply) o-minimal structures, unlikely intersection theory and Hodge theory. 
    \item Sections \ref{sec:asymp_Hodge_inner_products} and \ref{sec:self_dual_locus}: this comprises the main mathematical content of the paper. 
    \begin{enumerate}[i.]
        \item In section \ref{sec:asymp_Hodge_inner_products} we perform a general analysis of asymptotic Hodge inner products using the machinery of the multi-variable $\mathrm{Sl}(2)$-orbit theorem. Additionally, to exemplify possible applications of our general formulae we derive the following asymptotic expansion for the central charge of D3-particles with charge $q$ in type IIB compactifications
\begin{equation}
\label{eq:central_charge_intro}
    | \mathcal{Z} |= y^\ell\left|\frac{\langle q, \Omega_\infty\rangle_\infty}{||\Omega_\infty||_\infty}+\sum_{k=1}^\infty\sum_{s\leq k-1} y^{-k+\frac{1}{2}s}\frac{\langle f_k^s q, \Omega_\infty\rangle_\infty}{||\Omega_\infty||_\infty}\right|\,,
\end{equation}
which is valid in the region where a single saxion $y$ is large, and present the full multi-variable generalization in equation \eqref{eq:central_charge_nilpotent}. The meaning of the various objects appearing in \eqref{eq:central_charge_intro} is explained in detail in section \ref{subsec:nilpotent_orbit_expansion}.
        \item In section \ref{sec:self_dual_locus} we present additional evidence for the finiteness of self-dual flux vacua, by employing the methods of asymptotic Hodge theory introduced in section \ref{sec:asymp_Hodge_inner_products}. 
    \end{enumerate}
\end{enumerate}
For a first reading, we suggest the reader to focus on section \ref{sec:finiteness_intro} (and section \ref{sec:flux_compactification}, if necessary), as this contains all the main results that will be discussed in this work. The reader who is interested in outstanding questions on the structure of the vacuum locus and suggestions for future endeavours in o-minimal geometry and Hodge theory, formulated as a set of three concrete mathematical conjectures, is highly encouraged to read section \ref{sec:future_questions}. Those who would like to delve deeper into some aspects of multi-variable asymptotic Hodge theory, as well as their usage in the computation of asymptotic Hodge inner products, or the proof of some of the finiteness theorems, are invited to read sections \ref{sec:asymp_Hodge_inner_products} and \ref{sec:self_dual_locus}. Some additional details as well as some illustrative examples are collected in appendices \ref{app:additional_proofs} and \ref{app:Hodge_norms}. Finally, for the brave readers who already have some familiarity with (mixed) Hodge theory, we have included a reformulated version of the classic proof of the finiteness of Hodge classes in appendix \ref{app:Hodge_locus}. 

\section{F-theory Flux Compactification}
\label{sec:flux_compactification}

In this section we provide a brief review of F-theory flux compactification. For further details we refer the reader to \cite{Grana:2005jc,Douglas:2006es,Denef:2008wq}. We establish our notation and conventions but present no new results. The reader familiar with the topic can safely skip this section. 

\subsubsection*{Low-energy effective theory}

It is well known that compactification of F-theory on a Calabi--Yau fourfold $Y_4$, elliptically fibered over a base $B_3$, yields an effective four-dimensional $\mathcal{N}=1$ supergravity theory at low energies. In the absence of fluxes, the resulting theory contains a (typically large) number of massless fields/moduli. Throughout this work, we will be concerned with only a subset of the spectrum of the low-energy effective theory. To be precise, we will consider the complex scalar fields $z^i$, $i=1,\ldots, h^{3,1}(Y_4)$, that correspond to the complex structure deformations of the Calabi--Yau fourfold. In the orientifold or weak-coupling limit, these deformations collectively describe the complex structure deformations of the Calabi-Yau threefold $Y_3$ that is a double cover of $B_3$, as well as the deformations of the D7-branes and the type IIB axio-dilaton $\tau$. 
Besides the complex structure moduli, the low-energy effective theory contains other massless fields as well, most notably the $h^{1,1}(Y_4)$ K\"ahler moduli that correspond to K\"ahler deformations of the Calabi--Yau fourfold. In the F-theory limit, one of these K\"ahler moduli, playing the role of the volume modulus of the elliptic fibre, should be sent to an appropriate limit, while the remaining $h^{1,1}(Y_4)-1$ K\"ahler moduli may be stabilized by various methods. Typical methods involve stabilization through non-perturbative $g_s$ corrections to the superpotential as in the KKLT scenario \cite{Kachru:2003aw}, or through perturbative $\alpha'$ corrections to the K\"ahler potential as in the Large Volume Scenario \cite{Balasubramanian:2005zx}. For the purpose of this work, however, the K\"ahler moduli, whether stabilized or not, will not play an important role. 

\subsubsection*{Fluxes}
In the absence of fluxes, there is no energetic obstruction to performing a complex structure deformation of the underlying Calabi--Yau manifold. In the effective theory, this manifests itself in the fact that the complex structure moduli $z^i$ are massless fields. In the presence of fluxes, however, this is no longer the case. Recall that a flux in F-theory corresponds to a harmonic four-form $G_4$ on the Calabi--Yau fourfold. Furthermore, Dirac quantization imposes that $G_4$ is integral, hence it can be viewed an element of the integral middle de Rham cohomology $H^4(Y_4,\mathbb{Z})$.\footnote{To be precise, it is the quantity $G_4-\frac{p_1(Y_4)}{4}$, where $p_1(Y_4)$ denotes the first Pontryagin class of $Y_4$. This shift will not change any of the arguments made in this paper, hence we assume for simplicity that $G_4$ itself is integral.} In particular, this means that integrals of $G_4$ over closed 4-cycles are integers. This quantization condition will play a central role in all of the finiteness results we will describe. It is therefore important to highlight that the quantization condition arises from the fact that we are considering a low-energy effective theory of \textit{quantum} gravity.

\subsubsection*{Tadpole cancellation condition}
Besides the quantization condition, there is another condition that is imposed on the four-form flux $G_4$. This condition originates from the fact that on the compact Calabi--Yau $Y_4$ the total D3-brane charge (or equivalently M2-brane charge) has to vanish. Since the $G_4$ flux itself also induces this charge, this result in the tadpole cancellation condition
\cite{Sethi:1996es}
\begin{equation}\label{eq:tadpole_condition}
    N_{\text{D3}}+\frac{1}{2}\int_{Y_4}G_4\wedge G_4 = \frac{\chi(Y_4)}{24}\,,
\end{equation}
where $\chi(Y_4)$ denotes the Euler characteristic of the Calabi--Yau $Y_4$ and $N_{\text{D3}}$ denotes the net number of spacetime-filling $\left(\text{D3}-\overline{\text{D3}}\right)$ branes. In the remainder of this work, we will view the tadpole cancellation condition as giving an upper bound on the self-intersection of $G_4$ and write it as
\begin{equation}\label{eq:tadpole_bound}
    \int_{Y_4}G_4\wedge G_4 \leq L\,,
\end{equation}
for some fixed integer $L$ which we will refer to as the tadpole bound. We stress that the assumption of compactness of $Y_4$ is crucial in deriving \eqref{eq:tadpole_condition}, and is motivated by the need for gravity (i.e.~a finite lower-dimensional Planck mass) in the effective theory. Therefore, in contrast to the quantization condition, the tadpole cancellation condition arises from the fact that we are considering a low-energy effective theory of quantum \textit{gravity}. 

\subsection{Moduli stabilization: Hodge theory formulation}

\subsubsection*{Scalar potential}

The presence of a non-trivial four-form flux induces a scalar potential in the low-energy effective theory given by \cite{Haack:2001jz,Grimm:2010ks} 
\begin{equation}
\label{eq:scalar_potential}
    V_{\mathrm{flux}}(z) = \frac{1}{\mathcal{V}_b^2}\int_{Y_4}\left(G_4\wedge\star\,G_4-G_4\wedge G_4 \right)\,,
\end{equation}
where $\mathcal{V}_b$ denotes the volume of the base $B_3$ and $\star$ denotes the Hodge star operator on $Y_4$, which is itself a function of the complex structure moduli. The first term in \eqref{eq:scalar_potential} corresponds to the integrated kinetic energy of the M-theory 3-form gauge field, while the second term originates from the integrated Bianchi identity for $G_4$. Since the potential depends on the complex structure moduli via the Hodge star operator, there will be energetically favoured combinations of $z^i$ and $G_4$ for which the potential \eqref{eq:scalar_potential} is minimized. Such configurations are referred to as \textit{flux vacua}.

\subsubsection*{Self-dual vacua}

The scalar potential \eqref{eq:scalar_potential} is positive semi-definite and attains a global minimum whenever the four-form flux is \textit{self-dual}, i.e.
\begin{equation}
    \label{eq:self_dual}
    G_4=\star\,G_4\,.
\end{equation}
A self-dual vacuum corresponds to a Minkowski vacuum, since $V=0$. We stress that one should regard the condition $G_4=\star\,G_4$ as a condition in cohomology. To elucidate the self-duality condition \eqref{eq:self_dual}, we recall that the middle cohomology of $Y_4$ admits a Hodge decomposition
\begin{equation}\label{eq:Hodge_decomp}
    H^4\left(Y_4,\mathbb{C}\right) = H^{4,0}\oplus H^{3,1}\oplus H^{2,2}\oplus H^{1,3}\oplus H^{0,4}\,,
\end{equation}
into harmonic $(p,q)$-forms. One can show that the self-duality condition \eqref{eq:self_dual} implies that $G_4$ has no $(3,1)$ component. In other words, $G_4$ has a decomposition (recall that $G_4$ is real)
\begin{equation}
    G_4 = \left(G_4\right)^{4,0}+\left(G_4\right)^{2,2}+\left(G_4\right)^{0,4}\,.
\end{equation}
The self-duality condition therefore comprises $h^{3,1}$ complex equations for the $h^{3,1}$ complex structure moduli and hence one expects that a generic choice of $G_4$ stabilizes all moduli. It is, however, not at all obvious whether this holds true if $G_4$ is constrained by the tadpole cancellation condition \eqref{eq:tadpole_condition}. In fact, it was recently suggested that indeed this naive expectation may fail when $h^{3,1}$ becomes sufficiently large \cite{Bena:2020xrh}, leading to the so-called tadpole conjecture. See also \cite{Braun:2020jrx,Bena:2021wyr,Marchesano:2021gyv,Lust:2021xds,Plauschinn:2021hkp,Grana:2022dfw,Lust:2022mhk,Coudarchet:2023mmm,Braun:2023pzd} for related works. 

\subsubsection*{Hodge vacua}
A self-dual vacuum will be referred to as a \textit{Hodge vacuum} if, in addition, $G_4$ only has a (2,2)-component and is primitive. The latter means that 
\begin{equation}\label{eq:primitive}
    J\wedge G_4=0\,,
\end{equation}
where $J$ denotes the K\"ahler $(1,1)$-form on $Y_4$. In mathematics, cohomology classes of this type are referred to as Hodge classes. As will be elaborated upon in section \ref{sec:finiteness_intro}, such classes play a very special role in Hodge theory. 

\subsection{Moduli stabilization: superpotential formulation}

A possibly more familiar formalism to describe the vacua of four-dimensional $\mathcal{N}=1$ supergravity theories is the superpotential formalism. Although we will not use this language much for the rest of this work, we include it here for the reader's convenience.

\subsubsection*{Scalar potential}

In any four-dimensional $\mathcal{N}=1$ supergravity theory, the F-term contribution to the scalar potential can be written as
\begin{equation}
\label{eq:scalar_potential_W}
    V = e^K\left(G^{I\bar{J}}D_I W \overline{D_J W}-3|W|^2\right)\,,\qquad D_IW = \left(\partial_I+\partial_I K\right) W\,,
\end{equation}
where $K$ is a K\"ahler potential that determines a K\"ahler metric $G_{I\bar{J}}$ and $W$ is the holomorphic superpotential.

\subsubsection*{F-theory realization}

In the context of F-theory compactification, the indices $I,\bar{J}$ in \eqref{eq:scalar_potential_W} run over both the complex structure moduli and the K\"ahler moduli. To clarify the relation between the scalar potentials \eqref{eq:scalar_potential} and \eqref{eq:scalar_potential_W} we need to specify the K\"ahler potential and superpotential.
\begin{itemize}
\item The K\"ahler potential $K$ is given by
\begin{equation}
\label{eq:Kahler_potential}
    K = -2\log\mathcal{V}_b-\log \int_{Y_4}\Omega\wedge\overline{\Omega}\,.
\end{equation}
The first term is the tree-level K\"ahler potential for the complex coordinates $T^\alpha$ that depend on the K\"ahler moduli. The second term is the K\"ahler potential for the complex structure moduli, depending on the holomorphic $(4,0)$-form $\Omega(z)$. The tree-level K\"ahler potential enjoys the no-scale property
\begin{equation}\label{eq:no_scale}
    G^{\alpha\bar{\beta}}\partial_\alpha K \overline{\partial_\beta K}=3\,.
\end{equation}
It is important to stress, however, that $K$ receives both perturbative corrections, coming e.g.~from $\alpha'$ corrections to the ten-dimensional IIB supergravity action, as well as non-perturbative corrections coming from worldsheet instantons. These corrections will generically break the no-scale structure \eqref{eq:no_scale} of the K\"ahler potential. 
\item The superpotential $W$ is given by the flux-induced superpotential $W_{\mathrm{flux}}$, where \cite{Gukov:1999ya,Haack:2001jz}
\begin{equation}
\label{eq:superpotential}
    W_{\mathrm{flux}}(z) = \int_{Y_4}G_4\wedge \Omega(z)\,.
\end{equation}
In contrast to the K\"ahler potential $K$, the superpotential $W$ is perturbatively exact and only receives non-perturbative corrections coming e.g.~from Euclidean D3-brane instantons and gaugino condensation. Note that, since our discussion is restricted to the perturbative level, $W$ does not depend on the K\"ahler moduli. In particular, we have
\begin{equation}\label{eq:DW_Kahler}
    D_\alpha W_{\mathrm{flux}} = \left(\partial_\alpha K\right)W_{\mathrm{flux}}\,,
\end{equation}
where again $\alpha$ runs over the complex coordinates involving the K\"ahler moduli. 
\end{itemize}
Combining the no-scale condition \eqref{eq:no_scale} together with the simplification \eqref{eq:DW_Kahler}, the scalar potential reduces to
\begin{equation}
\label{eq:scalar_potential_noscale}
    V = e^K G^{i\bar{\jmath}}D_i W_{\mathrm{flux}}\overline{D_jW_{\mathrm{flux}}}\,,
\end{equation}
where $i,\bar{\jmath}$ run over the complex structure moduli only. In particular, note that the $-3|W|^2$ term has dropped out. As a result, the scalar potential \eqref{eq:scalar_potential_noscale} is positive semi-definite and can be seen to be equivalent to \eqref{eq:scalar_potential}. 

\subsubsection*{Vacua}

To stabilize the complex structure moduli, it then remains to solve the condition $D_iW_{\mathrm{flux}}=0$. This is equivalent to the condition that $G_4$ has no $(3,1)$ nor $(1,3)$ component in the Hodge decomposition \eqref{eq:Hodge_decomp}. If $G_4$ is primitive, as we will assume throughout this work, this is in turn equivalent to the self-duality condition \eqref{eq:self_dual}. If additionally $W_{\mathrm{flux}}=0$ then $G_4$ also has no $(4,0)$ and $(0,4)$ components, so $G_4$ is purely of type $(2,2)$. In particular, in this case the vacuum corresponds to a Hodge vacuum. This is summarized in table \ref{tab:vacuum_conditions}.

\begin{table}[t!]
\centering
\begin{tabular}{|c|c|c|}
\hline
\rule[-.15cm]{0cm}{.55cm}                   & \quad Hodge decomposition of $G_4$ \quad  & superpotential \\ \hline
\rule[-.15cm]{0cm}{.55cm} Hodge vacuum     & $(2,2)$ & $D_i W_{\mathrm{flux}}=W_{\mathrm{flux}}=0$               \\ \hline
\rule[-.15cm]{0cm}{.55cm} self-dual vacuum &  $(4,0)+(2,2)+(0,4)$ & $D_i W_{\mathrm{flux}}=0\,,W_{\mathrm{flux}}\neq 0$\\
\hline
\end{tabular}
\caption{Overview of the conditions for a Hodge/self-dual flux vacuum in terms of the Hodge decomposition of $G_4$, see \eqref{eq:Hodge_decomp}, and the flux-induced superpotential $W_{\mathrm{flux}}$.}
\label{tab:vacuum_conditions}
\end{table}

\section{Finiteness of Flux Vacua}
\label{sec:finiteness_intro}

In section \ref{sec:flux_compactification} we have reviewed the conditions on the four-form flux $G_4$ and the complex structure moduli $z^i$ that determine the locus of self-dual flux vacua. In the remainder of this work, we will be interested in gaining a more detailed understanding of what this locus looks like. In particular, our aim is to ascertain whether it consists of a finite number of points (or, more precisely, a finite number of connected components). The purpose of this section is two-fold. First, we provide a general discussion to emphasize the main non-trivial aspects of the problem, illustrated with a simple example of a rigid $Y_3\times T^2$ compactification. Second, we formulate the problem within the broader framework of Hodge theory and present a number of exact finiteness theorems that have been established in the literature. The subsequent sections will delve into a more detailed examination of these theorems and their proofs.  

\subsection{Why finiteness is non-trivial}
\label{subsec:finiteness_intro}

\subsubsection*{Infinite tails of vacua?}
First, let us emphasize again that we are investigating the finiteness of vacua within a fixed topological class of Calabi--Yau fourfolds, but varying in complex structure moduli. In this setting, we recall from section \ref{sec:flux_compactification} that a self-dual flux vacuum consists of a pair $(z^i,G_4)$, where $z^i$ are the complex structure moduli and $G_4$ is the four-form flux, satisfying three conditions 
\begin{equation}
\label{eq:vacuum_conditions}
\boxed{ \quad 
    G_4\in H^4\left(Y_4,\mathbb{Z}\right)\,,\qquad G_4=\star\,G_4\,,\qquad \int_{Y_4}G_4\wedge G_4 \leq L\,\quad }
\end{equation}
where $\star$ is to be evaluated at $z^i$, and $L$ is some positive integer that reflects the tadpole bound. Note that, for some choices of the flux, it may happen that not all $z^i$ are stabilized, meaning that the scalar potential has flat directions, in which case we count each connected component of the higher-dimensional vacuum locus as a single vacuum. The question, then, is how many solutions to \eqref{eq:vacuum_conditions} exist as one varies over all possible choices of $G_4$. Naively, it appears that $G_4$ varies over an infinite lattice. However, upon combining the self-duality condition and the tadpole condition, one finds the relation
\begin{equation}\label{eq:tadpole_self-dual}
    \int_{Y_4}G_4\wedge\star\,G_4 \leq L\,.
\end{equation}
At a non-singular point in the moduli space, the left-hand side of \eqref{eq:tadpole_self-dual} is a manifestly positive-definite quadratic form in the fluxes. Therefore, at a fixed point $z^i$ the constraint \eqref{eq:tadpole_self-dual} restricts the fluxes to lie in the interior of some ellipsoid inside the flux lattice, whose exact shape depends on the chosen value of the moduli. Clearly such a region contains only finitely many discrete lattice points and hence finitely many self-dual flux vacua. Furthermore, this remains true as long as one varies the moduli $z^i$ over a compact subset of the moduli space. 

However, it is not at all obvious what happens as the moduli vary over an unbounded set, as is typically the case in the context of Calabi--Yau compactifications. In other words, one might find an accumulation of vacua as one approaches a boundary of the moduli space. Along such limits the Hodge star operator may degenerate, causing some directions of the ellipsoid to become arbitrarily large and thus include arbitrarily many lattice points. In order to address the fate of these potentially infinite tails of vacua, one has to deal with the following two major roadblocks:
\begin{itemize}
    \item \textbf{Hodge star behaviour}: It is necessary to understand all possible ways in which the Hodge star can degenerate as one approaches an arbitrary boundary in the moduli space of any Calabi--Yau fourfold, in particular with an arbitrary number of moduli. 
    \item \textbf{Path-dependence}: When there are multiple moduli at play, the degeneration of the Hodge star is highly dependent on how one approaches a given boundary in the moduli space. 
\end{itemize}
The possible degenerations of the Hodge star are well-studied in the field of asymptotic Hodge theory, as will be reviewed in the next sections. The issue of path-dependence is, however, a bit more subtle. For Hodge vacua this issue can in fact be dealt with using just Hodge theoretic techniques. Essentially, one applies a clever inductive reasoning to range over all possible hierarchies between the moduli. In section \ref{sec:self_dual_locus} we employ a new strategy to tackle this issue, which is valid within a certain approximation that will be made precise. However, in order to address the fate of self-dual vacua in full generality, these techniques are likely to be insufficient. Recently, these issues were overcome by incorporating deep results in o-minimal geometry on the tameness of Hodge theory \cite{BKT, Bakker:2021uqw}. 

\subsubsection*{An example: rigid $Y_3\times T^2$}

So far our discussion has been rather abstract. In order to illustrate some of the points we have made above, let us consider a simple example. The point of the example will be to highlight the possible presence of infinite tails of vacua and to give an idea why such tails nevertheless cannot appear. However, we stress that, due to its simplicity, the example will not give an adequate indication for the complexity of the general problem. In particular, the issue of path-dependence will not play a role here.

\begin{figure}[t!]
		\begin{tikzpicture}[scale=0.8]
			
			% MODULI SPACE
			
			\draw[step=1cm, gray, very thin] (-4.1,0) grid (4.1,7.1);
			\draw[thick,->] (-4,0) -- (4,0) node[anchor =  west] {$\mathrm{Re}\;\tau$};
			\draw[thick,->] (0,0) -- (0,7) node[anchor = south ] {$\mathrm{Im}\;\tau$};   
			
			\draw[very thick] (2,0) arc (180:360: -2);
			\draw[dashed] (4,0) arc (180:360: -2);
			\draw[dashed] (0,0) arc (180:360: -2);
			\draw[dashed,thin] (-2,0) arc (180:270: -2);
			\draw[dashed,thin] (4,2) arc (270:360: -2);
			
			\draw[very thick] (1,1.73) -- (1,7);
			\draw[very thick] (-1,1.73) -- (-1,7);
			
			\draw[dashed] (3,1.73) -- (3,7);
			\draw[dashed] (-3,1.73) -- (-3,7);
			
			\fill[gray,opacity=0.3] (1,1.73) arc(60:120:2) -- (-1,7) -- (1,7) -- (1,1.73);
			
			\fill[gray,opacity=0.1] (3,1.73) arc(60:120:2) -- (1,7) -- (3,7) -- (3,1.73);
			\fill[gray,opacity=0.1] (-1,1.73) arc(60:120:2) -- (-3,7) -- (-1,7) -- (-1,1.73);
			\fill[gray,opacity=0.1] (1,1.73) arc(60:120:2) arc(60:0:2) arc(0:-60:-2);
			
			\foreach \x in {-2,-1,0,1,2}
			\draw (2*\x,1pt) -- (2*\x,-1pt) node[anchor=north]{$\x$};
			
			\filldraw[red!60] (0,4) circle (4pt);
			
		\end{tikzpicture}
		\qquad        
		\raisebox{0.4cm}{
			\begin{tikzpicture}[scale=0.7]
				
				% Tadpole constraint #2
				
				\draw[color=red!60, fill=red!5,very thick] (0,0) ellipse (4.0/3.0 and 3.0);
				
				\draw[step=1cm, gray, very thin] (-4.2,-4.2) grid (4.2,4.2);
				\draw[thick,->] (-4,0) -- (4,0) node[anchor=west]{$v_1$};
				\draw[thick,->] (0,-4) -- (0,4) node[anchor=south]{$v_2$};
				
				\filldraw[black] (0,0) circle (2pt);
				\filldraw[black] (1,0) circle (2pt);
				\filldraw[black] (0,1) circle (2pt);
				\filldraw[black] (1,1) circle (2pt);
				\filldraw[black] (-1,-1) circle (2pt);
				\filldraw[black] (-1,0) circle (2pt);
				\filldraw[black] (0,-1) circle (2pt);
				\filldraw[black] (1,-1) circle (2pt);
				\filldraw[black] (-1,1) circle (2pt);
				\filldraw[black] (0,-2) circle (2pt);
				\filldraw[black] (0,2) circle (2pt);
				
			\end{tikzpicture}
		}
		\caption{A geometric depiction of the tadpole bound \eqref{eq:tadpole_torus} for the two-torus. Left: a fundamental domain for the Teichm\"uller parameter $\tau$. Right: the corresponding region inside the flux lattice where the tadpole bound is satisfied. For simplicity, we have considered a point with $\mathrm{Re}\,\tau=0$.}
		\label{fig:tadpole_torus}		
\end{figure}
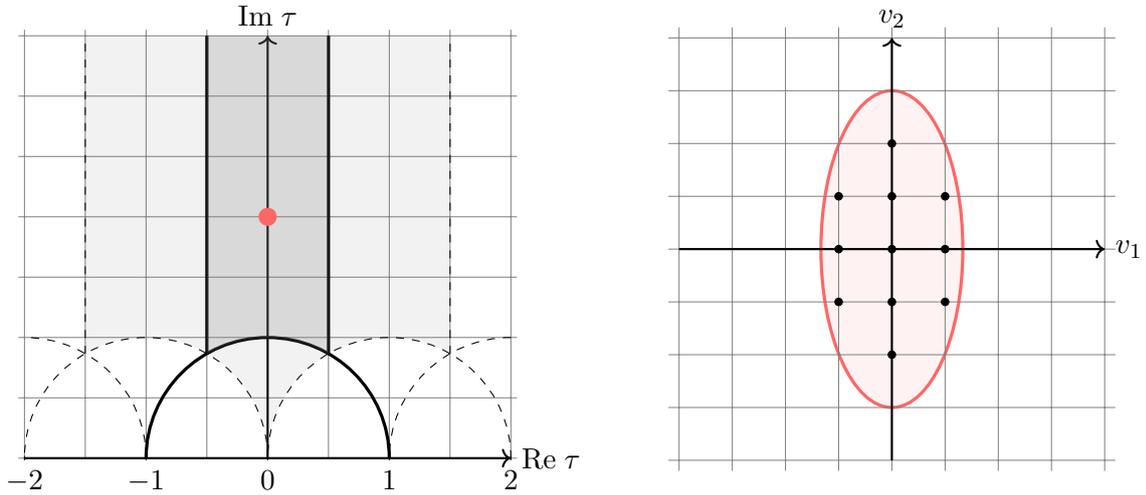

We take $Y_4$ to be a direct product
\begin{equation}
	Y_4 = Y_3\times T^2\,,
\end{equation}
with $Y_3$ a rigid Calabi--Yau threefold (i.e.~having no moduli) and $T^2$ a two-torus, whose complex structure modulus will be denoted by $\tau$, with $\mathrm{Im}\,\tau>0$. We consider a one-form flux on the torus 
\begin{equation}
	v\in H^1(T^2,\mathbb{Z}[i])\,,\qquad v=\begin{pmatrix} v_1\\v_2\end{pmatrix}\,,
\end{equation}
where $v_1,v_2\in\mathbb{Z}[i]$ are Gaussian integers. The vector representation of $v$ is taken with respect to the standard basis of 1-cycles on the torus, in terms of which the period vector is given simply by $(1,\tau)$. Then one readily computes
\begin{equation}\label{eq:tadpole_torus}
	||v||^2 = \int_{T^2} v\wedge\star\,\bar{v} = |v_1|^2\,\mathrm{Im}\,\tau+\frac{|v_2-v_1\mathrm{Re}\,\tau|^2}{\mathrm{Im}\,\tau}\,.
\end{equation}
As expected, for a fixed value of $\tau$ a region inside the flux lattice of bounded $||v||$ corresponds to the interior of an ellipsoid. Furthermore, the semi-major and semi-minor axes of the ellipsoid scale as $\mathrm{Im}\,\tau$ and $1/\mathrm{Im}\,\tau$, respectively. The situation is illustrated in figure \ref{fig:tadpole_torus}.\footnote{It should be noted that not necessarily all fluxes choices depicted in figure \ref{fig:tadpole_torus} satisfy the vacuum conditions.} Indeed, as one approaches the weak-coupling point $\mathrm{Im}\,\tau\rightarrow\infty$, corresponding to the boundary of the moduli space, one of the axes of the ellipsoid blows up, while the other shrinks. Therefore, by letting $\mathrm{Im}\,\tau$ become arbitrarily large, it appears that one can reach an infinite amount of different fluxes and thus an infinite number of vacua. 

The crucial point, however, is that when $\mathrm{Im}\,\tau$ becomes too large, it becomes impossible to satisfy both the self-duality condition and the tadpole condition. This can be seen as follows. Since the fluxes are quantized, the quantity $|v_1|$ cannot become arbitrarily small. Therefore, as $\mathrm{Im}\,\tau$ increases, at some point one must set $v_1=0$ in order to satisfy the tadpole bound $||v||^2<L$. At this point, one is left with
\begin{equation}
    ||v||^2 = \frac{|v_2|^2}{\mathrm{Im}\,\tau}\,.
\end{equation}
It appears that $|v_2|$ can become arbitrarily large, without $||v||$ exceeding the tadpole bound. However, at this point we should recall the self-duality condition\footnote{To be precise, the analogous condition is that $v$ is imaginary anti self-dual, i.e.
\begin{equation}
    \star\,v = -i v\,.
\end{equation}}, which can be solved explicitly to give
\begin{equation}
    v_1 = \frac{v_2}{\bar{\tau}}\,. 
\end{equation}
Indeed, one immediately sees that if $v_1=0$, then the only solution to the self-duality condition is that also $v_2=0$. In other words, beyond some critical value of $\mathrm{Im}\,\tau$, the only possible vacuum is the trivial one, hence no infinite tails of vacua can occur. Furthermore, the critical value is around $\mathrm{Im}\,\tau = L$. The situation is depicted in figure \ref{fig:tadpole_torus_2}.

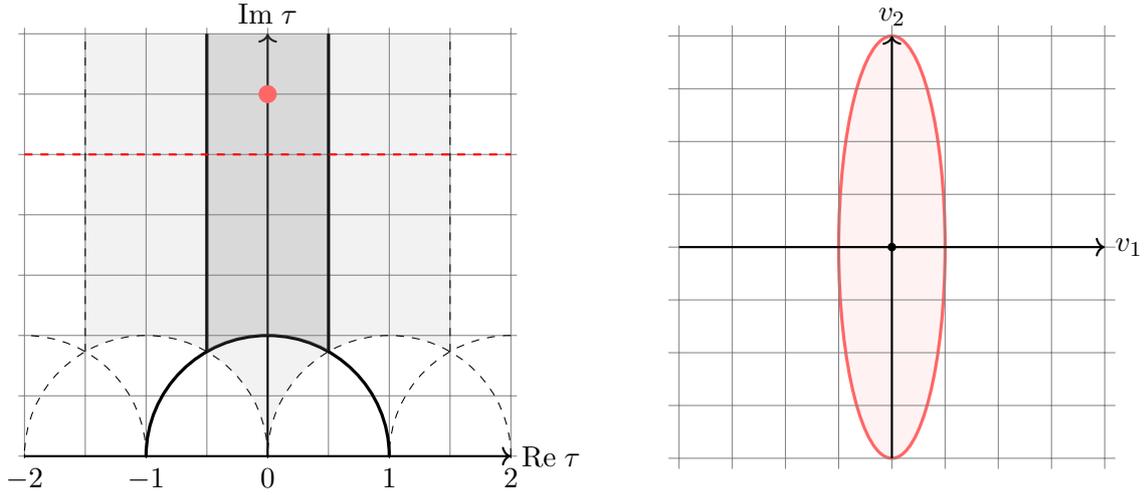
\begin{figure}[t!]
		\begin{tikzpicture}[scale=0.8]
			
			% MODULI SPACE
			
			\draw[step=1cm, gray, very thin] (-4.1,0) grid (4.1,7.1);
			\draw[thick,->] (-4,0) -- (4,0) node[anchor =  west] {$\mathrm{Re}\;\tau$};
			\draw[thick,->] (0,0) -- (0,7) node[anchor = south ] {$\mathrm{Im}\;\tau$};   
			
			\draw[very thick] (2,0) arc (180:360: -2);
			\draw[dashed] (4,0) arc (180:360: -2);
			\draw[dashed] (0,0) arc (180:360: -2);
			\draw[dashed,thin] (-2,0) arc (180:270: -2);
			\draw[dashed,thin] (4,2) arc (270:360: -2);
			
			\draw[very thick] (1,1.73) -- (1,7);
			\draw[very thick] (-1,1.73) -- (-1,7);
			
			\draw[dashed] (3,1.73) -- (3,7);
			\draw[dashed] (-3,1.73) -- (-3,7);
			
			\fill[gray,opacity=0.3] (1,1.73) arc(60:120:2) -- (-1,7) -- (1,7) -- (1,1.73);
			
			\fill[gray,opacity=0.1] (3,1.73) arc(60:120:2) -- (1,7) -- (3,7) -- (3,1.73);
			\fill[gray,opacity=0.1] (-1,1.73) arc(60:120:2) -- (-3,7) -- (-1,7) -- (-1,1.73);
			\fill[gray,opacity=0.1] (1,1.73) arc(60:120:2) arc(60:0:2) arc(0:-60:-2);
			
			\foreach \x in {-2,-1,0,1,2}
			\draw (2*\x,1pt) -- (2*\x,-1pt) node[anchor=north]{$\x$};
			
			\filldraw[red!60] (0,6) circle (4pt);

            \draw[dashed, red, thick] (-4,5)--(4,5);
			
		\end{tikzpicture}
		\qquad        
		\raisebox{0.4cm}{
			\begin{tikzpicture}[scale=0.7]
				
				% Tadpole constraint #2
				
				\draw[color=red!60, fill=red!5,very thick] (0,0) ellipse (1.0 and 4.0);
				
				\draw[step=1cm, gray, very thin] (-4.2,-4.2) grid (4.2,4.2);
				\draw[thick,->] (-4,0) -- (4,0) node[anchor=west]{$v_1$};
				\draw[thick,->] (0,-4) -- (0,4) node[anchor=south]{$v_2$};
				
				\filldraw[black] (0,0) circle (2pt);
				
			\end{tikzpicture}
		}
		\caption{The same setup is depicted as in figure \ref{fig:tadpole_torus}, but now $\mathrm{Im}\,\tau$ has exceeded the value of $L$, shown with the red dashed line. Correspondingly, there are no non-trivial self-dual fluxes beyond this point.}
		\label{fig:tadpole_torus_2}		
\end{figure}

\subsection{Finiteness theorems: global}
\label{subsec:finiteness_theorems_global}
Having discussed some general features of the problem of finiteness, let us now turn to a concrete description of the known results. This will first be done from a global point of view, meaning we focus on properties such as algebraicity and definability. We introduce the locus of Hodge classes and the locus of self-dual classes using the language of variations of Hodge structures. We briefly recall the important definitions, but refer the reader to \cite{Grimm:2018cpv,Grimm:2021ckh} for a more detailed introduction. 

\subsubsection{Hodge theory}
To state the results in full generality, we will consider the setting of an abstract variation of Hodge structure. For the convenience of the reader, we have summarized the main ingredients and their F-theory realization in table \ref{tab:F-theory}. 

\subsubsection*{Hodge structure}
As our starting point, we let $H_{\mathbb{Z}}$ be a vector space over $\mathbb{Z}$, generalizing the (primitive) flux lattice of the $G_4$ flux in F-theory. Then a Hodge structure on $H_{\mathbb{Z}}$ is a decomposition of its complexification $H_{\mathbb{C}}=H_{\mathbb{Z}}\otimes\mathbb{C}$ into $D+1$ complex subspaces
\begin{equation}\label{eq:Hodge_decomp_general}
    H_{\mathbb{C}} = H^{D,0}\oplus \cdots \oplus H^{0,D}=\bigoplus_{p+q=D}H^{p,q}\,,
\end{equation}
satisfying $H^{p,q}=\overline{H^{q,p}}$ with respect to complex conjugation. The integer $D$ is referred to as the weight of the Hodge structure. We speak of a variation of Hodge structure when the decomposition \eqref{eq:Hodge_decomp_general} varies over some parameter space $\mathcal{M}$ in a particular way which will be specified in a moment. For example, in the F-theory setting the parameter space $\mathcal{M}$ corresponds to the complex structure moduli space of the underlying Calabi--Yau fourfold. Since a variation of the complex structure changes the notion of what we call holomorphic and anti-holomorphic, this induces a variation of the decomposition \eqref{eq:Hodge_decomp_general}. 

More abstractly, one can think of a variation of Hodge structure as being defined in terms of the Hodge bundle
\begin{equation}\label{eq:Hodge_bundle}
    E\rightarrow \mathcal{M}\,,
\end{equation}
The fibres of the bundle \eqref{eq:Hodge_bundle} are the vector space $H_{\mathbb{C}}$, and the fibration encodes the variation of the $(p,q)$-decomposition of $H_{\mathbb{C}}$ as one moves in the base space $\mathcal{M}$. Locally, one may think of points in $E$ as a pair $(z^i,v)$, with $z^i\in\mathcal{M}$ and $v\in H_{\mathbb{C}}$. 

\subsubsection*{Hodge filtration}
The Hodge decomposition \eqref{eq:Hodge_decomp_general} can equivalently be expressed in terms of a so-called Hodge filtration. This is a decreasing filtration of vector spaces
\begin{equation}
    0\subseteq F^D\subseteq F^{D-1}\subseteq \cdots \subseteq F^0=H_{\mathbb{C}}\,,
\end{equation}
such that $H_{\mathbb{C}}=F^p\oplus \overline{F^{D-p+1}}$. One can pass between the two formulations by using the relations
\begin{equation}
\label{eq:decomp_filtration}
    H^{p,q} = F^p\cap\overline{F}^q\,,\qquad F^p=\bigoplus_{k=p}^D H^{k, D-k}\,.
\end{equation}
The properties of a variation of Hodge structure are neatly encoded in terms of the Hodge filtration. Indeed, given a set of local coordinates $z^i$ on $\mathcal{M}$, the filtration must satisfy the following conditions
\begin{equation}
\label{eq:horizontality}
    \frac{\partial F^p}{\partial z^i}\subseteq F^{p-1}\,,\qquad \frac{\partial F^p}{\partial\bar{z}^i}\subseteq F^p\,.
\end{equation}
The former condition implies that when taking a holomorphic derivative of a vector in $F^p$, the resulting vector ends up at most one step down in the filtration. The latter condition means that the Hodge filtration varies \textit{holomorphically} as a function of the moduli. This is in contrast to the Hodge decomposition $H^{p,q}$, for which only $H^{D,0}$ varies holomorphically while the rest of the components do not. 

\subsubsection*{Polarization}
We are interested in the case of a variation of polarized Hodge structure. This means that $H_{\mathbb{C}}$ is endowed with a $(-1)^D$-symmetric bilinear form
\begin{equation}
    (\cdot,\cdot):\,H_{\mathbb{C}}\times H_{\mathbb{C}}\rightarrow \mathbb{C}\,,
\end{equation}
satisfying the following polarization conditions with respect to the decomposition \eqref{eq:Hodge_decomp_general}
\begin{align}
    &\text{(i)}:\qquad \left(H^{p,q},H^{r,s}\right)=0\,,\qquad \text{unless $(p,q)=(s,r)$\,,}\\
    &\text{(ii)}:\quad \hspace{0.6cm}i^{p-q}\left(v,\bar{v}\right)>0\,,\qquad \text{for $v\in H^{p,q}$ and $v\neq 0$\,.}
\end{align}
We will often refer to $(\cdot,\cdot)$ as the intersection form. For future convenience, we also introduce the notation
\begin{equation}\label{eq:H_bounded}
    H_{\mathbb{Z}}(L):=\left\{v\in H_{\mathbb{Z}}\,:\,(v,v)\leq L\right\}\,,
\end{equation}
for the set of integral vectors whose self-intersection is bounded by a given real number $L$.

\subsubsection*{Symmetry group/algebra}
Let us write $G_{\mathbb{R}}$ for the real automorphism group of the pairing $\left(\cdot,\cdot\right)$, and denote its algebra by $\mathfrak{g}_{\mathbb{R}}$. This means that
\begin{align}
    &g\in G_{\mathbb{R}}:\qquad \left(gv, gw\right) = (v,w)\,,\\
    &X\in\mathfrak{g}_{\mathbb{R}}:\qquad \left(Xv, w\right)+\left(v,Xw\right)=0\,,
\end{align}
for all $v,w\in H$.

\subsubsection*{Weil operator}
Finally, we introduce the Weil operator $C\in G_{\mathbb{R}}$, defined to act on the various components of the Hodge decomposition as 
\begin{equation}
    Cv = i^{p-q}v\,,\qquad v\in H^{p,q}\,.
\end{equation}
In general, the Weil operator satisfies $C^2=(-1)^D$, hence its eigenvalues are $\pm 1$ when $D$ is even, and $\pm i$ when $D$ is odd. Correspondingly, we employ the following terminology for its eigenvectors:
\begin{itemize}
    \item \textbf{(anti) self-dual:} $Cv=\pm v$\,,
    \item \textbf{imaginary (anti) self-dual:} $Cv=\pm i v$.
\end{itemize}
The main relevance of the Weil operator is that it induces a natural inner product on $H_\mathbb{C}$ that is compatible with the Hodge decomposition. Indeed, as a result of the second polarization condition, one finds that
\begin{equation}\label{eq:Hodge_inner_product}
    \langle v,w\rangle:=\left(v, C\bar{w}\right)\,,\qquad ||v||^2:=\langle v,v\rangle\,,
\end{equation}
respectively define an inner product and a norm on $H_{\mathbb{C}}$. Furthermore, as a consequence of the first polarization condition, the Hodge decomposition \eqref{eq:Hodge_decomp_general} is orthogonal with respect to this Hodge inner product. 

\subsubsection*{Calabi--Yau fourfold realization}
For the reader who is mostly interested in the Calabi--Yau fourfold setting, which is the setting relevant for studying F-theory flux vacua, we have summarized the corresponding realization of the various Hodge-theoretic objects in table \ref{tab:F-theory}.

\begin{table}[h!]
    \centering
    \begin{tabular}{|c|c|} \hline
         \rule[-.15cm]{0cm}{.55cm}  &  F-theory setting\\ \hline
        \rule[-.2cm]{0cm}{.7cm} $H_{\mathbb{Z}}$ & \hspace*{.3cm} primitive middle cohomology $H^4_{\mathrm{prim}}\left(Y_4,\mathbb{Z}\right)$ \hspace*{.4cm} \\
        \rule[-.2cm]{0cm}{.7cm} pairing $\left(v,w\right)$ & $\int_{Y_4}v\wedge w$\\
        \rule[-.2cm]{0cm}{.7cm} Weil operator $C$ & Hodge star $\star$\\
        \rule[-.2cm]{0cm}{.7cm} Hodge inner product  $\langle v, w\rangle$ & $\int_{Y_4} v\wedge\star\,\bar{w}$\\
        \rule[-.3cm]{0cm}{.8cm} symmetry group $G_{\mathbb{R}}$ & $\mathrm{SO}\big(2+h_{\text{prim}}^{2,2}, 2h^{3,1}\big)$\\
        \hline
    \end{tabular}
    \caption{Realization of the various Hodge-theoretic objects in the Calabi--Yau fourfold setting, relevant for the study of F-theory flux compactifications.}
    \label{tab:F-theory}
\end{table} 

\subsubsection{Locus of Hodge classes}
\label{subsubsec:locus_Hodge}
Recall from the discussion in section \ref{sec:flux_compactification} that a self-dual flux vacuum is called a Hodge vacuum if the flux $G_4$ only has a (2,2)-component. In other words,
\begin{equation}
    G_4\in H^4(Y_4,\mathbb{Z})\cap H^{2,2}\,.
\end{equation}
Classes of this type are so special that they have a name: they are referred to as \textit{Hodge classes}. More generally, given a variation of Hodge structure of even weight $2k$, a Hodge class is an integral class of type $(k,k)$. In view of the tadpole condition, it is natural to consider the subset of Hodge classes whose self-intersection is bounded, for which we recall the notation \eqref{eq:H_bounded}. The set of all Hodge classes with self-intersection bounded by $L$ defines a subspace of the Hodge bundle $E$ which will be denoted by 
\begin{equation}\label{eq:Hodge-locus}
    E_{\mathrm{Hodge}}(L) = \{(z^i, v)\in E\,|\,v\in H_{\mathbb{Z}}(L)\cap H^{k,k}  \}\,,\qquad D=2k\,.
\end{equation}
We will refer to $E_{\mathrm{Hodge}}(L)$ as the \textit{locus of bounded Hodge classes}. The full locus of Hodge classes is then the countable union of $E_{\mathrm{Hodge}}(L)$ over all integers $L$ and is denoted simply by $E_{\mathrm{Hodge}}$. It is relatively easy to see that $E_{\mathrm{Hodge}}$ defines a complex-analytic subspace of $E$. There are two ways to see this:
\begin{enumerate}
    \item \textbf{Superpotential:} In the F-theory setting, a Hodge vacuum is alternatively defined by the equations $\partial_i W = W=0$, which are holomorphic in the complex structure moduli.
    \item \textbf{Hodge filtration:} More generally, it follows from the relation \eqref{eq:decomp_filtration} that
    \begin{equation}
        H_{\mathbb{Z}}\cap H^{k,k} = H_{\mathbb{Z}}\cap F^k\,.
    \end{equation}
    Note that the reality condition is crucial here. By definition of a variation of Hodge structure, the filtration $F^p$ depends holomorphically on the moduli. 
\end{enumerate}
The fact that the locus of Hodge classes is complex-analytic is already quite special, as this property is not retained for generic self-dual vacua, as will be explained later. At the same time, due to the additional condition $W=0$, the locus is defined by $h^{3,1}+1$ generically independent equations, hence one expects solutions to be relatively rare. Said differently, in order for a vacuum to exist, something special must occur in order for some of the equations to become dependent. The special thing that needs to happen is captured by the following striking theorem of Cattani, Deligne and Kaplan.

\begin{thm}[Cattani, Deligne, Kaplan \cite{CDK}]
\label{thm:CDK}
$E_{\mathrm{Hodge}}(L)$ is an algebraic variety, finite over $\mathcal{M}$.    
\end{thm}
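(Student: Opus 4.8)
The plan is to proceed in three stages: first establish that $E_{\mathrm{Hodge}}(L)$ is a complex-analytic subvariety of $E$ that is proper over $\mathcal{M}$; second upgrade analyticity to algebraicity via a generalization of Chow's theorem (the GAGA-type argument of Cattani--Deligne--Kaplan); and third deduce finiteness of the map to $\mathcal{M}$ from properness plus the fact that the fibres are discrete. The local-analyticity has essentially already been argued in the excerpt via the identity $H_{\mathbb{Z}}\cap H^{k,k}=H_{\mathbb{Z}}\cap F^k$ together with the holomorphicity of the Hodge filtration: locally on $\mathcal{M}$, for each fixed integral vector $v$ in the flux lattice the condition $v\in F^k$ cuts out a closed analytic subset, and the self-intersection bound $(v,v)\le L$ restricts $v$ to finitely many lattice points on any compact set, so $E_{\mathrm{Hodge}}(L)$ is locally a finite union of analytic sets, hence analytic.

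The heart of the matter — and the step I expect to be the main obstacle — is controlling the behaviour of $E_{\mathrm{Hodge}}(L)$ near the boundary of $\mathcal{M}$, i.e.\ showing that it does not accumulate there and in fact extends to a proper analytic subvariety of a suitable compactification $\overline{\mathcal{M}}$ (after passing to a finite cover where the local monodromies are unipotent, and working on a polydisc $\Delta^n$ with the boundary given by $\prod z_i = 0$). This is precisely where one must invoke asymptotic Hodge theory: the nilpotent orbit theorem of Schmid controls $F^k$ near the boundary, and the crucial input is the norm estimate for Hodge classes. One shows that if a sequence of Hodge classes $v_m$ with $(v_m,v_m)\le L$ approaches the boundary, then the Hodge norm $\|v_m\|^2 = (v_m, C\bar v_m)$, which is bounded by $L$ along the locus, forces $v_m$ to lie in a fixed finite set — morally because the growth of the Hodge norm in the various ``directions'' of the limiting mixed Hodge structure (governed by the weight filtration $W(N)$ attached to the monodromy logarithms $N_i$) is polynomial in $-\log|z_i|$, so a bounded-norm integral class cannot have components that grow, and the surviving components lie in a rational subspace. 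This is the multi-variable $\mathrm{SL}(2)$-orbit theorem of Cattani--Kaplan--Schmid at work, and assembling the estimates simultaneously over all the monodromy-weight filtrations (the path-dependence issue flagged earlier in the excerpt) is the genuinely hard technical core.

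Granting the properness: $E_{\mathrm{Hodge}}(L)$ is then a closed analytic subvariety of the analytic space $\bar E \to \overline{\mathcal{M}}$ which is proper over the quasi-projective base $\mathcal{M}$ (indeed extends properly over a projective $\overline{\mathcal{M}}$), so by the algebraization theorem for proper analytic families over an algebraic base — Grothendieck's GAGA relative form, or equivalently the extension of Chow's theorem used in \cite{CDK} — it is an algebraic variety. Finally, finiteness over $\mathcal{M}$: the projection $E_{\mathrm{Hodge}}(L)\to\mathcal{M}$ is proper (just shown) and quasi-finite, since over any fixed $z^i$ the fibre is $\{v\in H_{\mathbb{Z}}(L)\cap H^{k,k}\}$, a set of lattice points inside the ellipsoid $\langle v,v\rangle \le L$ and hence finite; a proper quasi-finite morphism is finite, which gives the claim. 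In particular the number of connected components of $E_{\mathrm{Hodge}}(L)$ is finite, recovering the finiteness of Hodge vacua with possibly flat directions.
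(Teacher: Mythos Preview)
Your proposal is correct and follows essentially the same strategy as the paper (and the original \cite{CDK}): establish complex-analyticity from $H_{\mathbb{Z}}\cap H^{k,k}=H_{\mathbb{Z}}\cap F^k$, reduce the algebraicity question via Chow/GAGA to a local study near the boundary divisor of a compactification $\overline{\mathcal{M}}$, and there invoke the nilpotent orbit and multi-variable $\mathrm{Sl}(2)$-orbit theorems to show that bounded integral Hodge classes cannot accumulate. The paper's detailed local argument (its Theorem~\ref{thm:finiteness_Hodge_loci_local} and Appendix~\ref{app:Hodge_locus}) fleshes out exactly the step you flag as the technical core.
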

By the phrase `finite over $\mathcal{M}$' it is meant that restriction of the projection $p:E\rightarrow\mathcal{M}$ to $E_{\mathrm{Hodge}}(L)$ has finite fibers. In other words, for each $z\in\mathcal{M}$ the fiber over $z$ consists of finitely many points. Furthermore, the algebraicity of $E_{\mathrm{Hodge}}(L)$ means that it can each be represented by a finite set of algebraic equations in $E$, i.e.~\textit{polynomials} in the moduli and the fluxes. In other words, it is of the form
\begin{equation}
    P_i(x_1,\ldots, x_k)=0\,,
\end{equation}
for some polynomials $P_i$. It should be stressed that this is truly remarkable, as the superpotential $W$ itself is typically a complicated transcendental function in the moduli. Nevertheless, the locus where $\partial_i W=W=0$ enjoys a comparatively simple description. This can be made very explicit in concrete examples, and we refer the reader to the upcoming work \cite{Grimm_vdHeisteeg_algebraicity_to_appear} where this is investigated in detail. 

For the purpose of the present work, the crucial observation is that the algebraicity of $E_{\mathrm{Hodge}}(L)$ automatically implies the finiteness of Hodge vacua. Indeed, it is clear that the zero-set of a finite collection of polynomials has only finitely many connected components. This should be contrasted with the full locus of Hodge classes $E_{\mathrm{Hodge}}$, which is only a countable union of algebraic varieties and hence does not have such a finiteness property.\footnote{See however \cite{baldi2022distribution} for recent refinements of this statement.} In this regard, it is interesting to point out that when the variation of Hodge structure under consideration comes from a family of smooth projective varieties, the same conclusion follows from the famous Hodge conjecture. However, the Hodge conjecture does not predict the stronger statement that $E_{\mathrm{Hodge}}(L)$ is algebraic. In other words, it does not predict the finiteness of Hodge vacua. It is therefore rather curious that the string-theoretic setting imposes the additional crucial constraint, namely the tadpole condition, to exactly ensure finiteness.

For the interested reader, let us give a very rough idea of how one would approach a proof Theorem \ref{thm:CDK}, following the original work of Cattani, Deligne, and Kaplan. In particular, we focus on how one would reduce this to a local statement, which will then be discussed in more detail in section \ref{subsec:finiteness_theorems_local} and appendix \ref{app:Hodge_locus}. The reduction is performed by employing a comparison theorem which connects algebraic geometry and analytic geometry known as Chow's theorem, which states that any closed analytic subspace of a complex projective space is algebraic.\footnote{This now falls within the broader domain of so-called GAGA results, which encompasses various types of comparison results between algebraic and analytic geometry in terms of comparisons of categories of sheaves. Here GAGA stands for \textit{G\'eometrie Alg\'ebrique et G\'eom\'etrie Analytique}.} Very roughly, this means that if some closed analytic subspace is well-behaved enough in the asymptotics, then it is in fact algebraic. Indeed, we have seen that the Hodge locus is complex-analytic on $\mathcal{M}$. Furthermore, it is well-known that $\mathcal{M}$ is quasi-projective, so that its closure can be embedded in a complex projective space \cite{Viehweg}. The strategy, then, is to show that the closure of the Hodge locus in $\overline{\mathcal{M}}$ is analytic as well and to then apply Chow's theorem to establish the desired algebraicity. Hence, one reduces the question to a study of the Hodge locus locally at the divisor $\overline{\mathcal{M}}\setminus \mathcal{M}$, which brings one into the realm of degenerations of Hodge structures and asymptotic Hodge theory. Physically, this means one is studying the structure of Hodge vacua as one approaches the boundary of the moduli space, which, following our initial discussion in section \ref{subsec:finiteness_intro}, is exactly the question we are interested in.

Finally, let us mention a generalization of Theorem \ref{thm:CDK} by Schnell, who introduced the ``extended locus of Hodge classes'' \cite{schnell2014extended}. The rough goal was construct a natural compactification of the Hodge locus to also incorporate so-called ``limit Hodge classes''. These are, as the name suggests, integral classes that become Hodge in an appropriate limit and should therefore lie on the boundary of the Hodge locus. 

\subsubsection{Locus of self-dual classes}
As soon as one moves towards generic self-dual flux vacua, the situation becomes more complicated. Indeed, since the $G_4$ flux is now allowed to have also $(4,0)$ and $(0,4)$ components, it no longer corresponds to a Hodge class. In a similar fashion as before, let us denote by
\begin{equation}
    E_{\text{self-dual}}(L) = \{(z^i,v)\in E\,:\,v\in H_{\mathbb{Z}}(L),\,C(z)v = v \}\,,
\end{equation}
the set of all integral self-dual fluxes with a bounded self-intersection. We will refer to $E_{\text{self-dual}}$ as the \textit{locus of bounded self-dual classes}. In contrast to the locus of Hodge classes, the locus of self-dual classes is a priori only a \textit{real}-analytic subspace of $E$. Again, one can see this by noting that a generic self-dual vacuum is defined by the equation $D_iW_{\mathrm{flux}}=0$, which now involves the real K\"ahler potential $K$. Nevertheless, in analogy with the algebraicity of the locus of bounded Hodge classes, it was shown in \cite{Bakker:2021uqw} that the locus of bounded self-dual classes has a lot more structure than one might at first expect, as captured in the following
\begin{thm}[Bakker, Grimm, Schnell, Tsimerman \cite{Bakker:2021uqw}]
\label{thm:finiteness_self_dual}
    The set $E_{\rm {self\text{-}dual}}(L)$ is a definable in the o-minimal structure $\mathbb{R}_{\mathrm{an,exp}}$. Furthermore, it is a closed, real-analytic subspace of $E$, finite over $\mathcal{M}$.
\end{thm}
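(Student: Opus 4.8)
The plan is to deduce Theorem~\ref{thm:finiteness_self_dual} from the definability of the period map, letting o-minimality upgrade the easy \emph{local} finiteness --- the ``an ellipsoid contains finitely many lattice points'' observation of Section~\ref{subsec:finiteness_intro} --- into the asserted global tameness, closedness and finiteness over $\mathcal{M}$. Equip $\mathcal{M}$ with its canonical definable structure as a quasi-projective variety, and the period domain quotient $\Gamma\backslash D$ with the definable structure of Bakker--Klingler--Tsimerman \cite{BKT}; the key external input is their theorem that the period map $\Phi\colon\mathcal{M}\to\Gamma\backslash D$ is then definable in $\mathbb{R}_{\mathrm{an,exp}}$. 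Because the Weil operator $C_F\in G_{\mathbb{R}}$ attached to a flag $F\in D$ can be written rationally in $(F,\bar F)$ --- through the projectors onto $H^{p,q}_F=F^p\cap\overline{F}^{\,q}$ --- it depends semi-algebraically, in particular definably and real-analytically, on $F$; hence $z\mapsto C(z)$ is a definable real-analytic map on $\mathcal{M}$, and pulling back the definable structure makes the total space $E$ of the Hodge bundle a definable manifold.

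First I would isolate the \emph{unconstrained} locus
\[
    \mathcal{X}(L):=\{(z,v)\in E\ :\ C(z)v=v,\ (v,v)\le L\}\,,
\]
which is definable: it is cut out of the definable bundle $E$ by the definable equation $C(z)v=v$ together with the polynomial inequality $(v,v)\le L$. On the eigenspace $V^+_z:=\ker\!\big(C(z)-\mathrm{id}\big)$ the intersection form restricts to the Hodge inner product $\langle\cdot,\cdot\rangle_z$ and is therefore positive definite by the second polarization condition; compare \eqref{eq:tadpole_self-dual}. Thus each fibre $\mathcal{X}(L)_z$ is a compact ellipsoid depending real-analytically on $z$, so $\mathcal{X}(L)\to\mathcal{M}$ is a definable family of compact sets and $\mathcal{X}(L)$ is closed in $E$. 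Since $E_{\text{self-dual}}(L)=\mathcal{X}(L)\cap E_{\mathbb{Z}}$, with $E_{\mathbb{Z}}$ the sub-locus of integral classes in the Hodge bundle, closedness of $E_{\text{self-dual}}(L)$ is immediate from the discreteness of $H_{\mathbb{Z}}$, real-analyticity is clear because it is locally cut out by the real-analytic equations $C(z)v=v$, and the finiteness of every fibre over $\mathcal{M}$ is exactly the ellipsoid observation recalled above. Everything therefore reduces to showing that $E_{\text{self-dual}}(L)$ is \emph{definable}: then its number of connected components is finite by o-minimality of $\mathbb{R}_{\mathrm{an,exp}}$, and the fibre cardinality of $E_{\text{self-dual}}(L)\to\mathcal{M}$, being a definable $\mathbb{Z}_{\ge0}$-valued function, is uniformly bounded, so the map is finite over all of $\mathcal{M}$.

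To prove definability I would argue $\Gamma$-equivariantly. By the construction of \cite{BKT}, $\mathcal{M}$ admits a finite cover by definable opens on each of which the lifted period map takes values in a single Siegel-type set $\mathfrak{S}_a\subset D$ adapted to a boundary nilpotent orbit, and is definable there. On such a piece the relevant integral self-dual classes lie in
\[
    S_a(L):=\{\,v\in H_{\mathbb{Z}}\ :\ (v,v)\le L\ \text{and}\ C_F v=v\ \text{for some}\ F\in\mathfrak{S}_a\,\}\,.
\]
The claim to establish is that \textbf{$S_a(L)$ is finite}. Granting it, $E_{\text{self-dual}}(L)$ over that piece equals the finite union over $v\in S_a(L)$ of the definable sets $\{z:C(z)v=v\}\times\{v\}$ and is thus definable; taking the finite union over the cover yields definability in general. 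Finiteness of $S_a(L)$ is where asymptotic Hodge theory enters: expanding an integral $v$ along the Deligne splitting of the limiting mixed Hodge structure and invoking the multi-variable $\mathrm{Sl}(2)$-orbit theorem of Cattani--Kaplan--Schmid \cite{CKS} together with the attendant Hodge-norm estimates, the joint requirements that $v$ be self-dual at some $F\in\mathfrak{S}_a$ and that $(v,v)=\langle v,v\rangle_F\le L$ force the components of $v$ along the directions in which the Hodge norm degenerates to be bounded; being integral, such $v$ then lie in a fixed finite set. (In the rigid $Y_3\times T^2$ example this is precisely the statement that, once $v\neq0$, integrality $v_1\in\mathbb{Z}[i]$ combined with $\|v\|^2\le L$ forces $\mathrm{Im}\,\tau$ to be bounded of order $L$.)

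The hard part is exactly this finiteness of $S_a(L)$ with \emph{several} moduli present: the degeneration of the Hodge norm on $\mathfrak{S}_a$ is governed by several commuting $\mathfrak{sl}(2)$-triples and is sharply sensitive to the hierarchy among the saxions --- the path-dependence stressed in Section~\ref{subsec:finiteness_intro} --- so the bound on the degenerating components of $v$ must be made uniform over the entire multi-variable sector. The point of routing through the definability of $\Phi$ \cite{BKT} is that it absorbs precisely this multi-variable asymptotic input as a black box, leaving only the soft definable-set manipulations above; a bare-hands proof of the finiteness of $S_a(L)$ in arbitrarily many variables is the genuinely difficult analytic problem, which we ourselves address only within a controlled approximation in Section~\ref{sec:self_dual_locus}.
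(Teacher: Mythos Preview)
The paper does not actually prove Theorem~\ref{thm:finiteness_self_dual}; it is quoted from \cite{Bakker:2021uqw}, and the surrounding text only unpacks the statement and sketches why definability implies finiteness (the three-step fibre argument following the theorem). Your proposal therefore goes well beyond what the paper itself does, and your outline---definability of the Weil operator via \cite{BKT}, reduction to a finite definable cover by Siegel-type sets, and the key step that only finitely many lattice vectors $S_a(L)$ can be self-dual somewhere on a given Siegel piece---is essentially the architecture of the proof in \cite{Bakker:2021uqw}.

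One point in your final paragraph is muddled and worth flagging. You write that routing through \cite{BKT} ``absorbs precisely this multi-variable asymptotic input as a black box, leaving only the soft definable-set manipulations above,'' yet in the preceding paragraph you (correctly) invoke the multi-variable $\mathrm{Sl}(2)$-orbit theorem and the attendant Hodge-norm estimates to argue finiteness of $S_a(L)$. These are not soft definable manipulations; they are precisely the hard asymptotic analysis. Definability of $\Phi$ does not by itself yield finiteness of $S_a(L)$: the projection of your definable set $A\subset\mathfrak{S}_a\times H_{\mathbb{R}}$ to $H_{\mathbb{R}}$ is definable but genuinely \emph{unbounded}---the ellipsoids elongate without limit toward the boundary, exactly as in your $T^2$ example---so intersecting with $H_{\mathbb{Z}}$ is not automatically finite from o-minimality alone. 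What is true is that both \cite{BKT} and the finiteness of $S_a(L)$ rest on the \emph{same} package of CKS norm estimates, so no analytic input is required beyond what already goes into \cite{BKT}; but the finiteness of $S_a(L)$ must still be argued separately, as \cite{Bakker:2021uqw} in fact does. With that clarification your sketch is sound.
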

Let us briefly elaborate on the phrase `definable in the o-minimal structure $\mathbb{R}_{\mathrm{an,exp}}$'. For a more detailed explanation we refer the reader to \cite{Grimm:2021vpn}. Roughly, this means that the locus of bounded self-dual classes can be described by a finite set of polynomial equations and inequalities that involve not only the moduli and fluxes, but also any restricted analytic function and real exponential function of the moduli. More precisely, the o-minimal structure $\mathbb{R}_{\mathrm{an,exp}}$ is generated (through finite products, unions, intersections and projections) by sets of the form
\begin{align}
    P(x_1,\ldots, x_k, f_1,\ldots, f_m, e^{x_1},\ldots, e^{x_k})=0\,,
\end{align}
where the $f_i$ are restricted analytic functions and $P$ is a polynomial.  

Importantly for our purposes, the fact that the locus of bounded self-dual classes is definable implies that it also has an inherent finiteness property, which we now explain briefly in three steps. 
\begin{enumerate}
    \item The fact that the restriction of $p:E\rightarrow\mathcal{M}$ to $E_{\mathrm{self-dual}}(L)$ has finite fibers means that for each point $z\in\mathcal{M}$, its preimage under this map consists of a finite number of points. In other words, for fixed $z$ the size of the fiber $p^{-1}(z)$ is bounded. This is, of course, not enough to prove finiteness completely, since $z$ itself ranges over an infinite set.
    \item Due to the special properties of definable functions, one can show that in fact the size of the fiber is uniformly bounded. Hence, there exists an integer $N_{\mathrm{max}}$ such that
    \begin{equation}
        |p^{-1}(z)|\leq N_{\mathrm{max}}\,,
    \end{equation}
    for all $z\in\mathcal{M}$.
    \item Finally, one can show that since the map $p$ is itself definable, the set
    \begin{equation}
        \{z\in\mathcal{M}: |p^{-1}(z)|\leq N_{\mathrm{max}}\}\,,
    \end{equation}
    is definable as well. In particular, it cannot contain infinitely many discrete points. 
\end{enumerate}
As a final remark, let us mention that Theorem \ref{thm:finiteness_self_dual} actually implies Theorem \ref{thm:CDK}, namely that the locus of bounded Hodge classes is algebraic. This was shown in \cite{BKT} by Bakker, Klingler and Tsimerman using the so-called definable Chow theorem of Peterzil and Starchenko \cite{Peterzil:2009}. The latter is an alternative version of Chow's theorem adapted to the setting of o-minimal geometry and roughly states that a complex-analytic set which is also definable is in fact algebraic. Recalling that the locus of Hodge classes is clearly complex-analytic, one recovers Theorem \ref{thm:CDK}.

\subsection{Finiteness theorems: local}
\label{subsec:finiteness_theorems_local}
In this section we discuss some local manifestations of the finiteness theorems presented in section \ref{subsec:finiteness_theorems_global}. Arguably, when it comes to developing further intuition for the finiteness of vacua, the local analysis is more illuminating. Indeed, in section \ref{subsec:finiteness_intro} it was argued that, as far as finiteness is concerned, the main question is whether it is possible for vacua to accumulate near the boundaries of the moduli space. Furthermore, in section \ref{subsubsec:locus_Hodge} we gave a rough idea of how the proof of the theorem of Cattani, Deligne, and Kaplan heavily relies on a local analysis near the boundaries of the moduli space. This brings us into the realm of asymptotic Hodge theory.

\subsubsection{Asymptotic Hodge theory (1)}
\label{subsubsec:asymp_Hodge_theory_1}
Since we are interested in a local description of $\mathcal{M}$ in the near-boundary regime, we may assume that $\mathcal{M}$ is given by the direct product of $r$ punctured disks $\Delta^*$ and $m-r$ disks $\Delta$, where $m$ denotes the complex dimension of $\mathcal{M}$ and $r$ denotes the number of coordinates that approach the boundary. Without loss of generality, we may and will assume that $m=r$. We choose local coordinates $z^i$ on the punctured disks such that the punctures are located at $z^i=0$, corresponding to the locations of singular divisors in the moduli space. Furthermore, we denote by 
\begin{equation}
    t^i = \frac{1}{2\pi i}\log z^i\,,
\end{equation}
the corresponding coordinates on the universal covering space of $(\Delta^*)^m$. The $t^i$ coordinates each take value in the complex upper half-plane $\mathbb{H}$, and the singularities are located at $\mathrm{Im}\,t^i\rightarrow\infty$. In the following, we will decompose $t^i$ into its real and imaginary parts as
\begin{equation}
    t^i = x^i + i y^i\,,
\end{equation}
with $x^i$ and $y^i$ corresponding to the axions and saxions, respectively. Note that, due to the periodic nature of the axionic coordinates, a fundamental domain of the $x^ i$ is the bounded interval $[0,1]$. The two descriptions of the near-boundary regime of $\mathcal{M}$ are illustrated in figure \ref{fig:disc}.

\begin{figure}[t!]
	\begin{center}
		\includegraphics[width=0.8\textwidth]{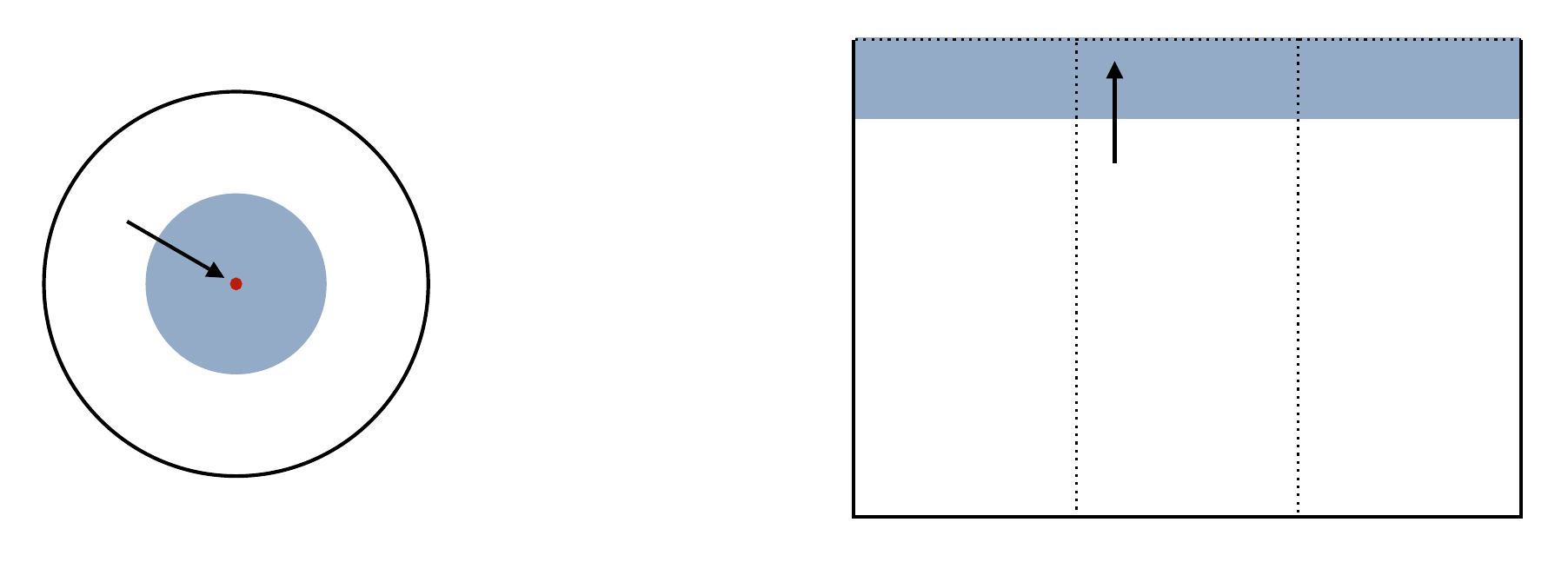}
		\vspace*{-1cm}
	\end{center}
	\begin{picture}(0,0)
	\put(70,82){\rotatebox{-31}{\small $z \rightarrow 0$}}
	\put(307,88){\small $y \rightarrow \infty$}
	\put(10,110){(a)}
	\put(200,110){(b)}
	\end{picture}
	
	\caption{Two local descriptions of the near-boundary regime of the moduli space $\mathcal{M}$. Figure (a): Poincar\'e disc with the singularity located at $z=0$. Figure (b): upper half-plane, with the singularity located at $y\rightarrow\infty$.\label{fig:disc}}
	
\end{figure}

\subsubsection*{Monodromy}
Of vital importance is the local monodromy behaviour of the variation of Hodge structure when encircling the singularity. This is obtained by sending $z^i\mapsto z^i e^{2\pi i }$ or equivalently $t^i\mapsto t^i+1$ and asking how the Hodge filtration transforms under this map. There are in total $m$ monodromy operators $T_i\in G_{\mathbb{R}}$, which act on the Hodge filtration as
\begin{equation}
\label{eq:Hodge_monodromy}
    T_i F^p(t^i) = F^p(t^i+1)\,.
\end{equation}
To be precise, by the action of $T_i$ on a given filtration we simply mean the action of $T_i$, as a matrix, on the vectors that span that filtration. After an appropriate coordinate redefinition, the monodromy operators may be taken to be unipotent and of the form
\begin{equation}
\label{eq:log_monodromy}
    T_i = e^{N_i}\,,\qquad [N_i,N_j]=0\,,
\end{equation}
where $N_i\in\mathfrak{g}_{\mathbb{R}}$ are commuting nilpotent operators, whose nilpotency degree lies between $0$ and the weight $D$ of the Hodge structure. The log-monodromy matrices $N_i$ play a central role in the study of asymptotic Hodge theory, as we explain in the following.

\subsubsection*{Nilpotent orbit approximation}
From the preceding discussions of (asymptotic) Hodge theory, we would like to highlight two important features of the Hodge filtration $F^p$. Namely, (1) it is holomorphic, recall equation \eqref{eq:horizontality}, and (2) it undergoes a monodromy transformation when encircling a singularity in the moduli space, recall equation \eqref{eq:Hodge_monodromy}. Intuitively, the simplest types of Hodge filtrations that exhibit these features are of the form
\begin{equation}
\label{eq:nilpotent_orbit}
    F_{\mathrm{nil}}^p = e^{t^i N_i}F_0^p\,,
\end{equation}
where $F_0^p$ is some moduli-independent filtration.\footnote{Of course, there are conditions that should be placed on $F_0^p$ to ensure that $F_{\mathrm{nil}}^p$ is a proper polarized variation of Hodge structure. Notably, the first condition in \eqref{eq:horizontality} restricts how the log-monodromy matrices $N_i$ can act on $F_0^p$. However, it should be stressed that generically $F_0^p$ itself does not constitute a polarized Hodge filtration.} Hodge filtrations of the form \eqref{eq:nilpotent_orbit} are referred to as ``nilpotent orbits'', since they correspond  to the orbit of some fixed filtration under the action of the nilpotent operators $N_i$. One of the striking results of asymptotic Hodge theory, due to Schmid \cite{schmid}, is that \textit{any} polarized variation of Hodge structure asymptotes to a nilpotent orbit as one approaches a singularity in the moduli space. In other words, in the regime  where some $\mathrm{Im}\,t^i\gg 1$, for $i=1,\ldots, r$, one has
\begin{equation}
\label{eq:nilpotent_orbit_theorem}
    F^p(t,\zeta)\approx F_{\mathrm{nil}}^p(t,\zeta) = e^{t^i N_i} F_0^p(\zeta)\,,
\end{equation}
with the corrections being exponentially small in $\mathrm{Im}\,t^i$.\footnote{The more precise statement is that, in terms of a natural notion of distance $d(\cdot,\cdot)$ on the space of all polarized Hodge filtrations, one has
\begin{equation}
    d\left(F, F_{\mathrm{nil}}\right)\leq K \sum_{j=1}^r \left(y^i\right)^\beta e^{-2\pi y^i}\,,\qquad y^i\gg 1\,,
\end{equation}
for some constants $K,\beta$. In other words, in the regime $y^i\gg 1$ the two filtrations are exponentially close in this distance.} Here $\zeta$ denotes those remaining $m-r$ moduli which are not sent to the boundary, sometimes referred to as ``spectator moduli''. As mentioned earlier, we will assume without loss of generality that $m=r$ and will therefore ignore such spectator moduli.

The result \eqref{eq:nilpotent_orbit_theorem}, known as the nilpotent orbit theorem, is an incredibly powerful tool to study the properties of general variations of Hodge structure. For example, one might first attempt to prove a given statement for the case that the variation of Hodge structure in question is described exactly by a nilpotent orbit. Then, one may study whether the result survives upon the inclusion of exponential corrections. This is exactly the strategy that is employed in some of the mentioned finiteness proofs. Indeed, one may first study the self-duality condition for the fluxes using the approximate Weil operator $C_{\mathrm{nil}}$ associated to $F_{\mathrm{nil}}$, as will be demonstrated in section \ref{sec:self_dual_locus}. Importantly, using the second main result of asymptotic Hodge theory, the $\mathrm{Sl}(2)$-orbit theorem, it is possible to characterize $C_{\mathrm{nil}}$ in complete generality. This will be explained in detail in section \ref{subsec:nilpotent_orbit_expansion}. 

\subsubsection{Finiteness of Hodge classes}

We can now formulate local versions of the finiteness theorems discussed in section \ref{subsec:finiteness_theorems_global}. In this section, we focus on the case of Hodge vacua. Our goal is to consider a sequence of such vacua that approaches the boundary of $\mathcal{M}$ and ask whether this sequence can take on infinitely many values. To this end, we state the following

\begin{thm}[{{\cite[Theorem 3.3]{CDK}}}]
\label{thm:finiteness_Hodge_loci_local}
    Let $t^i(n)\in\mathbb{H}^m$
    be a sequence of points such that $x^i(n)$ is bounded and $y^i(n)\rightarrow\infty$ as $n\rightarrow\infty$. Suppose furthermore that
    \begin{equation*}
        v(n)\in  H_{\mathbb{Z}}(L)\cap H^{k,k}\,,\qquad D=2k\,,
    \end{equation*}
    is a sequence of integral bounded Hodge classes. Then $v(n)$ can only take on finitely many values. 
\end{thm}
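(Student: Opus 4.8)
The plan is to reduce the statement to a finiteness property of the limiting mixed Hodge structure attached to the nilpotent orbit. First I would pass from the sequence $t^i(n)$ to the nilpotent orbit approximation \eqref{eq:nilpotent_orbit_theorem}: since $y^i(n)\to\infty$ and the $x^i(n)$ stay bounded, the Weil operator $C(t(n))$ converges (after extracting a subsequence in the compact variables) to the nilpotent-orbit Weil operator $C_{\mathrm{nil}}$, up to exponentially small corrections. The key quantitative input is that the Hodge norm $\|v\|^2_{t(n)}=(v,C(t(n))\bar v)$ is, by the $\mathrm{Sl}(2)$-orbit theorem (equivalently the several-variable norm estimates of Cattani--Kaplan--Schmid \cite{CKS}), comparable to a sum of monomials $\prod_i (y^i)^{\ell_i(v)}$ with exponents determined by the weight filtration of the $N_i$ acting on the component of $v$ in the associated graded of the limiting MHS. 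This is the statement that replaces, in the multi-variable case, the one-variable growth estimate used in \cite{Grimm:2020cda,Schnellletter}.

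Next I would use the bound $(v(n),v(n))\le L$ together with the Hodge class condition $v(n)\in H^{k,k}$. The point is that for a Hodge class the self-intersection and the Hodge norm are, in an appropriate sense, comparable: $(v,v)=\langle v, C^{-1}v\rangle$ and on the $(k,k)$-piece $C$ acts as $(-1)^0=1$ in the unpolarized sense, so $(v(n),v(n))$ controls $\|v(n)\|^2$ up to constants that remain bounded along the sequence. Combined with the previous step, boundedness of $(v(n),v(n))$ forces all the growing monomials $\prod_i (y^i)^{\ell_i}$ with positive exponents to have vanishing coefficient; i.e.\ for $n$ large the projection of $v(n)$ onto every graded piece of the weight filtration with strictly positive $\mathrm{Sl}(2)$-weight must vanish. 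Hence $v(n)$ is confined, for large $n$, to a fixed finite-dimensional subspace $W\subseteq H_{\mathbb{R}}$ on which the Hodge norm stays bounded uniformly in $n$.

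Once $v(n)$ lies in such a fixed subspace $W$ with $\|v(n)\|^2_{t(n)}$ uniformly bounded, finiteness follows from the lattice argument already sketched in Section~\ref{subsec:finiteness_intro}: on $W$ the limiting Hodge inner product $\langle\cdot,\cdot\rangle_\infty$ (from the limiting MHS, polarized by the relevant $N$) is positive definite, the integral structure $H_{\mathbb{Z}}\cap W$ is a lattice, and a uniformly bounded norm selects only finitely many lattice points. Running this over finitely many subsequences (one for each limiting behaviour of the bounded data $x^i(n)$, and for each of the finitely many relevant subspaces $W$ indexed by subsets of the graded pieces) gives the claim that $v(n)$ takes only finitely many values.

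\textbf{Main obstacle.} The crux is the multi-variable norm estimate: controlling $\|v(n)\|^2_{t(n)}$ uniformly as $t(n)$ approaches the boundary along an \emph{arbitrary} path, not just a path where a single $y^i\to\infty$. This is exactly the ``path-dependence'' issue flagged in Section~\ref{subsec:finiteness_intro}: the exponents $\ell_i(v)$ and even which graded piece $v$ effectively sits in depend on the relative growth rates $y^1\gg y^2\gg\cdots$, so one must either invoke the full several-variable $\mathrm{Sl}(2)$-orbit theorem of \cite{CKS} (which organizes all these sectors simultaneously) or carry out an inductive argument over all orderings of the $y^i$. Making the constants in the comparison $\|v\|^2\asymp \sum \prod_i (y^i)^{\ell_i}$ uniform over the whole sector $y^i\gg 1$ — rather than just asymptotically along rays — is the technically delicate step, and it is precisely where the deeper results of asymptotic Hodge theory enter.
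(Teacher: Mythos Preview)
Your outline captures the opening moves correctly: for a Hodge class $v\in H^{k,k}$ one has $Cv=v$, so $(v,v)=\|v\|^2$, and the $\mathrm{Sl}(2)$-norm estimate then forces all strictly positive $\mathfrak{sl}(2)$-weight components of $\hat v(n)$ to vanish. But there is a genuine gap at the next step. Having confined $v(n)$ to the subspace $W_{2k}$ of $\ell\le 0$ weights, you assert that a uniformly bounded norm selects only finitely many lattice points. The norm that is bounded, however, is the \emph{moduli-dependent} Hodge norm $\|\cdot\|_{t(n)}$, and on the strictly negative weight pieces this norm degenerates: $\|\hat v_\ell(n)\|^2_{t(n)}\sim y(n)^{\ell}\,\|\hat v_\ell(n)\|_\infty^2\to 0$ for $\ell<0$. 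So the bound $\|v(n)\|^2_{t(n)}\le L$ places no constraint on the size of the negative-weight components in the boundary norm $\|\cdot\|_\infty$, and your lattice argument does not close. Boundedness of the $\ell=0$ component alone is not enough.

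What is missing is a further use of the Hodge-class condition beyond self-duality: one must exploit that $v(n)\in F^k$. The paper does this via a key lemma on mixed Hodge structures (Lemma~\ref{lem:Fsharp_W0}): any real element of $F^k_\infty\cap W_{2k}$ is annihilated by both $N$ and $N^0$. Applied to the limit $w=\lim_{n\to\infty} e(n)v(n)$ this yields $N\hat v_0=0$; a second, more delicate argument handling the exponential corrections $v_{\mathrm{inst}}(n)$ then upgrades this to $N\hat v(n)=0$ for large $n$. Only at this point can the moduli be removed entirely---since $Nv(n)=0$ one has $v(n)=e^{(t_*-t(n))N}v(n)\in F^k_{\mathrm{nil}}(t_*)$ for any fixed $t_*$, and the norm at $t_*$ is non-degenerate, whence finiteness. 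In the multi-variable case the paper proceeds not by a single uniform norm estimate over all growth sectors, as your ``main obstacle'' paragraph anticipates, but by induction on the number $d$ of distinct growth scales among the $y^i$: at each step one passes to the graded piece $\mathrm{Gr}^{(1)}_D$ and re-applies the one-variable mechanism to establish $M_1v(n)=0$.
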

Here we stress that the Hodge decomposition $H^{k,k}$ is itself a function of the moduli. However, in order not to clutter the notation we will often omit this dependence. The upshot of Theorem \ref{thm:finiteness_Hodge_loci_local} is that it is indeed impossible to have an accumulation of Hodge vacua near the boundary of $\mathcal{M}$. In appendix \ref{app:Hodge_locus} we will describe the proof of Theorem \ref{thm:finiteness_Hodge_loci_local} in some detail. 

\subsubsection{Finiteness of self-dual classes}
\label{subsubsec:finiteness_self-dual_classes}
Finally, let us come to the finiteness of self-dual vacua. In contrast to Theorem \ref{thm:finiteness_Hodge_loci_local}, there has not yet appeared a fully general directly local proof for the finiteness of self-dual flux vacua. Nevertheless, the following statement clearly follows as a corollary of the global statement given in Theorem \ref{thm:finiteness_self_dual}.
\begin{cor}
\label{cor:finiteness_self-dual_local}
    Let $t^i(n)\in\mathbb{H}^m$
    be a sequence of points such that $x^i(n)$ is bounded and $y^i(n)\rightarrow\infty$ as $n\rightarrow\infty$. Suppose furthermore that $v(n)\in H_{\mathbb{Z}}(L)$ is a sequence of integral fluxes with bounded self-intersection, such that
    \begin{equation}
        C(t(n))v(n) = v(n)\,,
    \end{equation}
    for all $n$. Then $v(n)$ can only take on finitely many values. 
\end{cor}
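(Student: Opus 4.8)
The statement is an immediate consequence of Theorem~\ref{thm:finiteness_self_dual}, so the plan is to unpack how local finiteness is extracted from global definability together with the discreteness of the flux lattice.

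First I would use the periodicity of the axions to reduce to the case $x^i(n)\in[0,1]$. Since the sequence $x^i(n)$ is bounded, there are integers $k^i(n)$, ranging over a finite set, with $\tilde{t}^i(n):=t^i(n)-k^i(n)$ satisfying $\mathrm{Re}\,\tilde{t}^i(n)\in[0,1]$. Under the integral monodromy $T_i=e^{N_i}$ the Hodge structure, and hence the Weil operator, transforms by conjugation, $C(t(n))=e^{k^i(n)N_i}\,C(\tilde{t}(n))\,e^{-k^i(n)N_i}$. Setting $\tilde{v}(n):=e^{-k^i(n)N_i}v(n)$, the operator $e^{-k^i(n)N_i}$ is integral and preserves the intersection form, so $\tilde{v}(n)\in H_{\mathbb{Z}}(L)$, and one checks that $C(\tilde{t}(n))\tilde{v}(n)=\tilde{v}(n)$. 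Since $k^i(n)$ takes only finitely many values, $v(n)$ assumes finitely many values if and only if $\tilde{v}(n)$ does, so I may replace $(t(n),v(n))$ by $(\tilde{t}(n),\tilde{v}(n))$ and assume from the start that $x^i(n)\in[0,1]$.

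Next I would package the sequence into the relevant definable set. Each pair $(t(n),v(n))$ now lies in the restriction $Z$ of $E_{\text{self-dual}}(L)$ to the fundamental domain $\{\mathrm{Re}\,t^i\in[0,1]\}$. By Theorem~\ref{thm:finiteness_self_dual} the set $E_{\text{self-dual}}(L)$ is definable in $\mathbb{R}_{\mathrm{an,exp}}$, and cutting it with the semialgebraic condition on the axions keeps $Z$ definable. Because o-minimal structures are closed under projection, the image $\pi(Z)\subseteq H_{\mathbb{Z}}$ of $Z$ under the forgetful map $(t,v)\mapsto v$ is a definable subset of the lattice, hence a definable discrete set; and any definable discrete set in an o-minimal structure is finite. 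As $v(n)\in\pi(Z)$ for every $n$, the sequence $v(n)$ takes on only finitely many values, which is the claim.

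The step I expect to be the genuine obstacle --- and which has in fact already been surmounted in Theorem~\ref{thm:finiteness_self_dual} --- is the definability of $E_{\text{self-dual}}(L)$ itself, which ultimately rests on the definability of the period map \cite{BKT,Bakker:2021uqw}; everything downstream of that is purely formal. One could alternatively hope for a self-contained local argument that degenerates $C(t)$ directly via the multi-variable $\mathrm{Sl}(2)$-orbit theorem and shows that $C(t)v=v$ together with $(v,v)\le L$ confines $v$ to a bounded, hence finite, set as $y^i\to\infty$; this is the route taken in Section~\ref{sec:self_dual_locus}, but it is presently controlled only within the nilpotent-orbit approximation, so for the unconditional corollary the deduction from the global theorem is the one to carry out.
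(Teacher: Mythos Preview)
Your argument is correct and is essentially what the paper intends: the corollary is stated without proof and is simply asserted to follow from Theorem~\ref{thm:finiteness_self_dual}. You have supplied a clean way to make that deduction, using the monodromy reduction to a fundamental domain and then the fact that the projection of a definable set to the flux coordinates is a definable subset of the discrete lattice $H_{\mathbb{Z}}$, hence finite.

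One minor remark on the comparison: the paper's own sketch of how finiteness is extracted from definability (the three-step argument following Theorem~\ref{thm:finiteness_self_dual}) is phrased in terms of the projection $p:E\to\mathcal{M}$ having uniformly bounded finite fibers, which is geared towards counting connected components of the vacuum locus. Your route via the other projection $(t,v)\mapsto v$ is more direct for the specific conclusion of Corollary~\ref{cor:finiteness_self-dual_local}, since it lands immediately in a discrete set. Both are standard consequences of o-minimality once Theorem~\ref{thm:finiteness_self_dual} is in hand, and your final paragraph correctly identifies that all the actual work lives in the definability of $E_{\text{self-dual}}(L)$.
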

An independent proof of Corollary \ref{cor:finiteness_self-dual_local} was given in \cite{Grimm:2020cda,Schnellletter} for the case of a single variable using methods from asymptotic Hodge theory. In section \ref{sec:self_dual_locus} we will extend these methods to the multi-variable setting in order to give some intuition for the finiteness of self-dual vacua in the general case, without using results from o-minimality. To be precise, we will provide a proof within the nilpotent orbit approximation. To be absolutely clear, we will prove the following
\begin{thm}
\label{thm:finiteness_selfdual_nilpotent}
    Let $t^i(n)\in\mathbb{H}^m$
    be a sequence of points such that $x^i(n)$ is bounded and $y^i(n)\rightarrow\infty$ as $n\rightarrow\infty$. Suppose furthermore that $v(n)\in H_{\mathbb{Z}}(L)$ is a sequence of integral fluxes with bounded self-intersection, such that
    \begin{equation}
        C_{\mathrm{nil}}(t(n))v(n) = v(n)\,,
    \end{equation}
    for all $n$. Then $v(n)$ can only take on finitely many values. 
\end{thm}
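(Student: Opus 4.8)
The plan is to reduce the statement to a finite-dimensional counting problem by exploiting the explicit structure of the nilpotent-orbit Weil operator $C_{\mathrm{nil}}$, which, via the multi-variable $\mathrm{Sl}(2)$-orbit theorem, admits a completely explicit asymptotic characterization in terms of the commuting $\mathfrak{sl}(2)$-triples attached to the cone of log-monodromy matrices $N_i$. First I would fix a \emph{growth sector} of the form $y^1/y^2 \gg 1,\ \dots,\ y^{m-1}/y^m \gg 1,\ y^m \gg 1$; since any sequence with $x^i(n)$ bounded and $y^i(n)\to\infty$ has a subsequence lying in (a translate of) one such sector, and there are finitely many orderings of the saxions, it suffices to prove the finiteness claim along each sector separately. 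In such a sector the $\mathrm{Sl}(2)$-orbit theorem provides a splitting $H_{\mathbb{R}} = \bigoplus_{\ell} V_\ell$ into eigenspaces of the commuting grading operators, and an asymptotic formula for $C_{\mathrm{nil}}(t)$ whose leading term is the Weil operator $C_\infty$ of the limiting (sl(2)-split) Hodge structure, with corrections organized in powers of the ratios $y^i/y^{i+1}$ and $1/y^m$ — precisely the kind of expansion advertised in Section~\ref{sec:asymp_Hodge_inner_products} and equation~\eqref{eq:central_charge_intro}.

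The core of the argument is then the same mechanism illustrated by the $Y_3\times T^2$ example: combine the self-duality condition $C_{\mathrm{nil}}v = v$ with the tadpole bound $(v,v)\le L$. I would show that along the sector the Hodge norm $\|v(n)\|^2 = (v(n), C_{\mathrm{nil}} \overline{v(n)})$ — which for a self-dual class equals $(v(n),v(n)) \le L$ — controls each graded piece of $v(n)$ with a definite power of the saxions. Concretely, writing $v = \sum_\ell v_\ell$ in the $\mathrm{Sl}(2)$-splitting, the polarization positivity gives an estimate of the schematic form $\|v\|^2 \gtrsim \sum_{\ell} \big(\prod_i (y^i)^{\ell_i}\big)\, |v_\ell|^2$ up to the limiting inner product, where the exponents $\ell_i$ are the weights under the $i$-th $\mathfrak{sl}(2)$. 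Any graded component of $v(n)$ sitting in a subspace with a \emph{strictly positive} weight in some $y^i$ must then vanish for $n$ large (its contribution would violate $\|v(n)\|^2\le L$ once that saxion is large enough), and any component with a strictly \emph{negative} weight is forced to zero by the self-duality equation itself, exactly as $v_1=0 \Rightarrow v_2 = 0$ in the torus example — here one uses that $C_{\mathrm{nil}}v=v$ pairs the positive- and negative-weight spaces, so killing one side kills the other. What survives is the weight-zero part with respect to every $y^i$, which lies in a \emph{fixed} sublattice of $H_{\mathbb{Z}}$ independent of $n$, intersected with the fixed-$L$ ellipsoid determined by the limiting inner product; this is a bounded region of a fixed lattice and hence contains finitely many points.

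The main obstacle, and where care is genuinely needed, is the \emph{bookkeeping of the mixed weights and the error terms}: the multi-variable $\mathrm{Sl}(2)$-orbit theorem does not give a single $\mathfrak{sl}(2)$ but a nested tower, and $C_{\mathrm{nil}}$ is only asymptotically block-diagonal with respect to the associated multi-grading, with off-diagonal corrections suppressed by ratios $y^i/y^{i+1}$. One must argue inductively on the number of saxions sent to infinity — first freeze the hierarchy $y^1 \gg \dots \gg y^m$, handle the $y^1$-grading, then descend to the boundary component where $y^1$ has been ``used up'' and repeat — making sure at each stage that the subleading corrections are genuinely negligible compared to the leading power of the relevant saxion, so that the positivity estimate on $\|v\|^2$ is not spoiled. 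This is precisely the point where the full machinery of Section~\ref{sec:asymp_Hodge_inner_products} (the multi-variable CKS recursion and the explicit form of the corrections) enters: one needs the expansion to be uniform enough that the ``quadratic form is bounded below by the leading graded piece'' statement holds for all $n$ large, not just pointwise. Once that uniformity is in hand, the reduction to a fixed lattice-point-in-an-ellipsoid count is routine, and the theorem follows.
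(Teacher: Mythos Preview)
Your one-variable intuition is correct and matches the paper's Section~\ref{subsec:self-dual_locus_example}: tadpole kills positive-weight components, self-duality then kills the paired negative-weight components, and only weight zero survives. The gap is in your extension to several variables. The clean trichotomy ``positive / negative / zero weight'' does not exist there: the growth factor attached to $\ell = (\ell_1,\ldots,\ell_m)$ is $\prod_i (y_i/y_{i+1})^{\ell_i}$, and for mixed-sign $\ell$ this can tend to $0$ or $\infty$ depending on the path \emph{even within a fixed growth sector} (e.g.\ $y_1 > y_2$ does not imply $y_1 > y_2^2$). So ``positive weight in some $y^i$ must vanish'' is not well-posed. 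The paper does \emph{not} resolve this by induction on the number of saxions; instead it introduces a finite partition of the asymptotic region into subsectors $R^{\mathrm{heavy}}_\ell \cup R^{\mathrm{light}}_\ell$ indexed by the weight vectors themselves, defined by whether $\prod_i (y_i/y_{i+1})^{\ell_i}$ exceeds a threshold $L/\lambda$ built from the tadpole bound and the quantization constant. Passing to a subsequence lying in one cell of the resulting (finite) common refinement fixes the behaviour of every weight component simultaneously.

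The second and more serious oversimplification is your claim that self-duality ``pairs the positive- and negative-weight spaces, so killing one side kills the other''. This is only true for the leading $\mathrm{Sl}(2)$-orbit Weil operator $C_\infty$. The actual operator $C_{\mathrm{nil}} = h\,C_\infty\,h^{-1}$ mixes weight components through the full expansion \eqref{eq:hinv}, so when $\hat v_{-\ell}(n)=0$ the self-duality relation $(h^{-1}\hat v)_\ell = C_\infty (h^{-1}\hat v)_{-\ell}$ still receives contributions from all $\hat v_{-\ell-s^{(m)}}$ via the correction coefficients $f_{i,k_i}^{s^i}$. Showing that these residual terms are small enough to force $\hat v_\ell(n)\to 0$ is the hard step (Step~3b in Section~\ref{subsec:self-dual_proof}), and it requires the precise weight restrictions \eqref{eq:expansion_coeff_weights} and scaling bounds \eqref{eq:f_bounds_new} on the $f_{i,k_i}$ --- this is the genuine technical content of the theorem, not a uniformity issue in the norm estimate as you suggest.
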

In particular, note the replacement of the general Weil operator $C$ by its nilpotent orbit approximation $C_{\mathrm{nil}}$. Of course, this will, therefore, not quite constitute a full independent proof of Corollary \ref{cor:finiteness_self-dual_local}. Nevertheless, the discussion will provide some valuable intuition for the asymptotic behaviour of vacua and will include some new insights into the asymptotic form of the Weil operator and generic Hodge inner products, which may be of independent interest for some readers.

\subsection{Summary}
We close this section by providing the reader with an overview of the various theorems we have discussed, see figure \ref{fig:theorems_overview}. Let us also highlight the variety of strategies that are employed in the proofs of these various theorems. For Hodge vacua, both in the single-variable and multi-variable case, the proof relies heavily on the machinery of mixed Hodge structures, as is explained in appendix \ref{app:Hodge_locus}. Instead, our analysis of the self-dual vacua in the nilpotent orbit approximation makes use of the asymptotic expansion of the Weil operator, as is described in sections \ref{sec:asymp_Hodge_inner_products} and \ref{sec:self_dual_locus}. Finally, for the general proof of the finiteness of self-dual flux vacua the recent advances in o-minimal geometry have played an essential role. 

\begin{figure}[h!]
    \centering
    
\scalebox{0.9}{\begin{tikzpicture}

\node[draw, text width=6cm, align=center] at (0,0) {self-dual,\\ multi-variable \cite{Bakker:2021uqw}};

\draw[-implies,double equal sign distance] (3.5,0) -- (5,0);

\draw[-implies,double equal sign distance] (3.5,-5) -- (5,-5);

\draw[-implies,double equal sign distance] (0,-1) -- (0,-4);

\draw[-implies,double equal sign distance] (8.5,-3) -- (8.5,-4);

\draw[-implies,double equal sign distance] (3.5,-0.5) -- (5,-2);

\node[draw, text width=6cm, align=center] at (0,-5) {Hodge,\\ multi-variable \cite{CDK}\\ appendix \ref{subsubsec:proof_Hodge_locus_multi}};

\node[draw, text width=5cm,align=center] at (8.5,0) {self-dual,\\ multi-variable\\ (nilpotent orbit approx.)\\ section \ref{sec:self_dual_locus}};

\node[draw, text width=5cm,align=center] at (8.5,-2) {self-dual,\\ single-variable \cite{Grimm:2020cda,Schnellletter}};

\node[draw, text width=5cm,align=center] at (8.5,-5) {Hodge,\\ single-variable \cite{CDK}\\ appendix \ref{subsubsec:proof_Hodge_locus_single}};

\node[] at (4,2) {\Large Web of Finiteness Theorems};
 
\end{tikzpicture}}
    \caption{An overview of the various finiteness theorems discussed in this work, including the implications between them. }
    \label{fig:theorems_overview}
\end{figure}
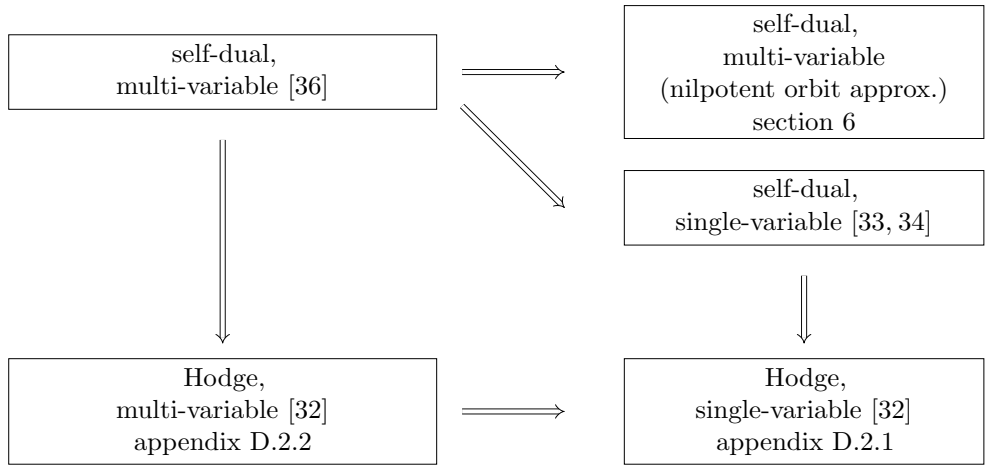

\section{Conjectures about the Flux Landscape}
\label{sec:future_questions}

In the preceding sections we have focused our attention on relatively rudimentary properties of the flux landscape, in particular with regards to its finiteness. In this section we would like to point out some additional questions that could feasibly be addressed in the near-future, whose answers would further elucidate more precise features of the flux landscape, and formulate them into precise mathematical conjectures. These conjectures would pose interesting challenges which can likely be tackled by the application and development of techniques in asymptotic Hodge theory and o-minimality. 

\subsection{Recounting flux vacua}

Having established that the number of self-dual flux vacua is finite, a natural follow-up question would be: how many are there? The early works of Douglas et al.~\cite{Ashok:2003gk,Denef:2004ze} suggest that such numbers could be very large, giving rough estimates of the order $10^{500}$ to $10^{272,000}$, see also \cite{Taylor:2015xtz}. At the same time, it has also been pointed out that these analyses have their shortcomings. In particular, it is possible that the smearing approximation used to effectively ignore the quantization condition significantly affects the precise counting of vacua. It is a challenging task to establish robust mathematical counting results. 

One might ask if this problem becomes attainable for the case of Hodge vacua. Here one faces the fact that the approximations of \cite{Ashok:2003gk,Denef:2004ze} are likely even less reliable. 
As discussed also in section \ref{subsubsec:locus_Hodge}, a Hodge vacuum is expected to be relatively rare. The main reason for this is the fact that a Hodge vacuum has to satisfy $h^{3,1}+1$ equations for only $h^{3,1}$ variables, hence the system is overdetermined. Importantly, after solving the $D_i W_{\mathrm{flux}}=0$ equations for the complex structure moduli in terms of the fluxes and inserting the result into the remaining $W_{\mathrm{flux}}=0$ equation, one is left with a highly transcendental equation for the fluxes. This transcendentality originates from the fact that the flux-induced superpotential is expressed in terms of the periods of the Calabi--Yau fourfold. The crucial point is that, due to the quantization condition, this highly transcendental equation needs to be solved over the integers, hence its solutions are expected to be rare. Indeed, in the context of o-minimal geometry, some intuition for this is provided by the celebrated counting theorem of Pila and Wilkie \cite{Pila:2006}. Very roughly speaking, the Pila--Wilkie theorem states that there are very few rational points on the transcendental part of a definable set. More precisely, the number of such points grows slower than any positive power of their multiplicative height.\footnote{For an integral flux $v=(v_1,\ldots, v_k)\in H_{\mathbb{Z}}$, its multiplicative height is simply $\mathrm{max} |v_i|$.}\textsuperscript{,}\footnote{In \cite{barroero2013counting} this theorem was applied to provide bounds on the number of lattice points in the fibers of definable families.}

In an earlier version of this work it was conjectured, based on the above considerations, that for those variations of polarized Hodge structure which are ``sufficiently transcendental'' (dictated by a property called the ``level'' \cite{baldi2022distribution}), the number of connected components in the locus of Hodge classes with a fixed self-intersection $L$ should grow sub-polynomially in $L$. In particular, this would imply that the number of $W_{\mathrm{flux}}=0$ vacua in F-theory grows much slower than expected. However, it was recently shown in an explicit example investigated in \cite{Grimm_vdHeisteeg_algebraicity_to_appear} that this conclusion is not quite correct, but for a very interesting reason. Namely, it can happen that a collection of Hodge classes actually lies on a higher-dimensional locus where additional Hodge \textit{tensors} appear, see appendix \ref{app:Hodge-tensors} for a basic introduction to Hodge tensors. The important point is that, because of the presence of these additional Hodge tensors, the restriction of the variation of Hodge structure to this higher-dimensional locus typically has a reduced level and thus becomes ``less transcendental'', such that the original logic based on the Pila--Wilkie counting theorem may not apply. As discussed in \cite{Grimm_vdHeisteeg_algebraicity_to_appear} this reduction in transcendentality on these loci indicates the presence of an underlying symmetry in the compactification manifold. In order to take into account these subtle matters, we therefore propose the following refined version of the counting conjecture.

\begin{conjecture}
    \label{conjecture:scaling_Hodge_vacua}
    Consider a variation of polarized Hodge structure $E\rightarrow\mathcal{M}$ of even weight $D=2k$. Fix a positive integer $L$ and consider the locus of Hodge classes with a fixed self-intersection $L$,
    \begin{equation}
        \hat{E}_{\mathrm{Hodge}}(L) = \{(z^i,v)\in E: v\in H^{k,k}\cap H_{\mathbb{Z}}\,,(v,v)= L\}\,.
    \end{equation}
  Furthermore, denote by $\hat{E}_{\mathrm{Hodge}}^{\mathrm{iso}}(L)$ the subset of points $(z^i,v)\in\hat{E}_{\mathrm{Hodge}}(L)$ for which $z^i$ are isolated points in the locus of Hodge tensors, see appendix \ref{app:Hodge-tensors}.
  
  We claim that if the level of the variation of Hodge structure is at least 3, then the number points in $\hat{E}^{\mathrm{iso}}_{\mathrm{Hodge}}(L)$ grows sub-polynomially in $L$. More precisely, for every $\epsilon>0$ there exists a $C>0$, such that
    \beq
       \# \hat{E}^{\mathrm{iso}}_{\mathrm{Hodge}}(L) < C L^\epsilon\ ,  
    \eeq
    where $\# \hat{E}^{\mathrm{iso}}_{\mathrm{Hodge}}(L)$ is the number points in $\hat{E}^{\mathrm{iso}}_{\mathrm{Hodge}}(L)$ and $C$ is independent of $L$.
\end{conjecture}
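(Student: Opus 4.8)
\medskip
\noindent\emph{Towards a proof.} The natural strategy is to recast the count of $\hat{E}^{\mathrm{iso}}_{\mathrm{Hodge}}(L)$ as a point-counting problem for a definable set and to bring in the Pila--Wilkie theorem \cite{Pila:2006}, or its uniform/effective refinements in (sharply) o-minimal structures \cite{barroero2013counting,binyamini2022}. The starting data are in place: by the definability of the period map \cite{BKT} the relevant incidence sets are definable in $\mathbb{R}_{\mathrm{an,exp}}$, and by Theorem~\ref{thm:CDK} each $E_{\mathrm{Hodge}}(L)$ is in fact an algebraic variety, finite over $\mathcal{M}$. The scheme of the argument would be: (i) show that the points of $\hat{E}^{\mathrm{iso}}_{\mathrm{Hodge}}(L)$ lie on the \emph{transcendental part} of an appropriate fixed definable set --- this is where the hypothesis on the level enters; (ii) bound the multiplicative height $H(v)=\max_i|v_i|$ of the underlying integral classes by a fixed power of $L$; and (iii) combine (i) and (ii) with Pila--Wilkie: if $H(v)\le C_1 L^{d}$ and the points are transcendental, then for every $\epsilon'>0$ there are at most $O_{\epsilon'}\!\big(L^{d\epsilon'}\big)$ of them, and $\epsilon'=\epsilon/d$ gives $\#\hat{E}^{\mathrm{iso}}_{\mathrm{Hodge}}(L)<CL^{\epsilon}$.

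For step (i) the role of the Hodge-tensor locus is as follows. If $(z,v)$ lay on a positive-dimensional irreducible component $Z$ of $E_{\mathrm{Hodge}}(L)$, then by Theorem~\ref{thm:CDK} the image of $Z$ in $\mathcal{M}$ would be a positive-dimensional algebraic subvariety along which $v$ stays a Hodge class; the generic Hodge structure there carries strictly more Hodge tensors (appendix~\ref{app:Hodge-tensors}), and, since being a Hodge tensor is a closed condition, these are also Hodge at $z$, so $z$ would fail to be isolated in the locus of Hodge tensors. Hence an isolated Hodge class is an isolated point of the variety $E_{\mathrm{Hodge}}(L)$ and so lies on no positive-dimensional semialgebraic subset of it. The subtle part is to transfer this to a single definable set \emph{uniformly in} $L$, and in particular to rule out the degenerate behaviour occurring when the variation has level $\le 2$ --- a Shimura-type regime in which Hodge classes of bounded self-intersection genuinely proliferate (growing like a power of $L$), so that no sub-polynomial bound can hold. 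This is precisely where level $\ge 3$ is needed: one would invoke the Ax--Schanuel theorem for period maps together with the structural results of \cite{baldi2022distribution} confining the atypical part of the Hodge locus to a proper algebraic subvariety of $\mathcal{M}$, away from which the isolated Hodge classes are ``as transcendental as possible''. Formulating the right definable set and its transcendental part places this squarely within the circle of ideas around atypical intersections.

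Step (ii) is, I expect, the main obstacle. Since $v$ is a real integral class of pure type $(k,k)$ at $z$, the Weil operator fixes it and the Hodge norm at $z$ equals $\|v\|_z^2=(v,C\bar v)=(v,v)=L$. On a compact subset of $\mathcal{M}$ the Hodge norm is comparable, with $L$-independent constants, to the Euclidean norm of $v$ in a fixed integral basis, so there $H(v)=O(\sqrt{L})$; everything hinges on the near-boundary region. There one would use the multi-variable $\mathrm{Sl}(2)$-orbit and CKS norm estimates developed in section~\ref{sec:asymp_Hodge_inner_products}, which describe exactly how $\|v\|_z^2$ degenerates as the saxions $y^i\to\infty$ in terms of the position of $v$ in the relative weight filtrations of the $N_i$. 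What must be proved is a \emph{quantitative} version of the finiteness in Theorem~\ref{thm:finiteness_Hodge_loci_local}: an isolated Hodge class of self-intersection $L$ can only become Hodge at a point whose ``depth'' $\max_i y^i$ is bounded by a polynomial in $L$ --- for at larger depth the nilpotent-orbit approximation would force $v$ to remain Hodge along a positive-dimensional orbit of the $N_i$, contradicting, via the argument of step (i), the isolation of $z$ in the Hodge-tensor locus --- and the norm estimates then translate this into the required polynomial bound on $H(v)$. In the one-modulus case this is exactly the computation of the critical value $\mathrm{Im}\,\tau\sim L$ in the rigid $Y_3\times T^2$ example of section~\ref{subsec:finiteness_intro}; the difficulty is to carry it out in the genuinely multi-variable, path-dependent setting while keeping the dependence on $L$ polynomial. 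Finally, it is worth stressing that point-counting with complexity bounds merely polynomial in $L$ would only yield polynomial, not sub-polynomial, growth; extracting the sub-polynomial bound is what makes essential use of the transcendence input of step (i) and likely requires effective Zilber--Pink-type techniques beyond what is presently available.
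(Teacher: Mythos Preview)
This statement is a \emph{conjecture} in the paper, not a theorem; the paper offers no proof, only motivation. So there is no ``paper's own proof'' to compare against. Your write-up is appropriately framed as a strategy (``Towards a proof''), and its ingredients --- definability of the period map, Pila--Wilkie counting on the transcendental part, the level $\ge 3$ hypothesis as the input that makes isolated Hodge classes atypical in the sense of \cite{baldi2022distribution}, and Ax--Schanuel for period maps --- are exactly the circle of ideas the paper invokes as motivation in the paragraphs surrounding the conjecture. In that sense your outline is faithful to what the authors have in mind.

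Two remarks on the outline itself. First, in step (i) your contrapositive argument (positive-dimensional component of $E_{\mathrm{Hodge}}(L)$ through $(z,v)$ forces $z$ non-isolated in the Hodge-tensor locus) tacitly assumes $v$ is not a \emph{generically} Hodge class; otherwise the component $\mathcal{M}\times\{v\}$ contributes no ``extra'' Hodge tensor and the implication fails. One expects no such classes when the level is $\ge 3$, but this should be stated and justified rather than absorbed into the phrase ``strictly more Hodge tensors''. More substantively, the Pila--Wilkie input is about rational points on the transcendental part of a \emph{fixed} definable set, whereas $E_{\mathrm{Hodge}}(L)$ is algebraic for each $L$ by CDK --- so the definable set you need is not $E_{\mathrm{Hodge}}(L)$ itself but (a fiber of) the graph of the period map, with the Hodge condition read as a rational-point condition there. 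You gesture at this (``transfer this to a single definable set uniformly in $L$''), but it is the crux of step (i), not a side remark.

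Second, your identification of step (ii) as the main obstacle is well taken, and the paper says nothing quantitative here; your proposed mechanism --- that isolation in the Hodge-tensor locus caps the boundary depth polynomially in $L$, which via the $\mathrm{Sl}(2)$-orbit norm estimates bounds $H(v)$ --- is plausible but genuinely open, and your closing caveat that effective Zilber--Pink-type input may be required is consistent with the paper's own assessment that the conjecture ``may be addressed in the near future'' rather than now.
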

% This motivates us to formulate the following
% \begin{conjecture}    \label{conjecture:scaling_Hodge_vacua}
%     Consider a variation of polarized Hodge structure $E\rightarrow\mathcal{M}$ of weight $D$. Fix a positive integer $L$ and consider the locus of Hodge classes with a fixed self-intersection $L$,
%     \begin{equation}
%         \hat{E}_{\mathrm{Hodge}}(L) = \{(z^i,v)\in E: v\in H^{k,k}\cap H_{\mathbb{Z}}\,,(v,v)= L\}\,,\qquad D=2k\,.
%     \end{equation}
%   %  \Tnote{Do we need CYD, $D>=3$ with full holonomy here?} 
%   We claim that if the level of the variation of Hodge structure is at least 3, then the number of connected components in $\hat{E}_{\mathrm{Hodge}}(L)$ grows sub-polynomially in $L$. More precisely, for every $\epsilon>0$ there exists a $C>0$, such that
%     \beq
%        \# \hat{E}_{\mathrm{Hodge}}(L) < C L^\epsilon\ ,  
%     \eeq
%     where $\# \hat{E}_{\mathrm{Hodge}}(L)$ is the number of connected components of $\hat{E}_{\mathrm{Hodge}}(L)$ and $C$ is independent of $L$.
% \end{conjecture}
Some remarks are in order. First, we note that in \cite{baldi2022distribution} a related conjecture has been proposed. The latter states that, under similar conditions, the number of points in $\hat{E}^{\mathrm{iso}}_{\mathrm{Hodge}}$ is in fact finite, \textit{without fixing the self-intersection}. It is important to stress that while a similar statement for higher-dimensional loci has, rather strikingly, been proven in \cite{baldi2022distribution}, the statement for isolated points, which is the case of interest for us, is still a wide open problem. 

Second, let us briefly elaborate on the notion of the `level' of a variation of Hodge structure. The precise definition is somewhat technical and is explained in \cite{baldi2022distribution}. Roughly speaking, it is related to the length of the Hodge filtration and serves as a measure of its `complexity'. However, it should not be confused with the weight $D$ of the Hodge structure. For example, while the Hodge structure on the middle cohomology of a K3 surface is of weight $D=2$, its level is in fact equal to one. As another example, while one generically expects that the middle cohomology of a Calabi--Yau fourfold has level equal to four, one can show that for special cases such as $Y_3\times T^2$ or $\mathrm{K3}\times \mathrm{K3}$ the level is again equal to one. In particular, Conjecture \ref{conjecture:scaling_Hodge_vacua} does not apply to these cases. 

To elaborate on this point, consider the weak-coupling limit corresponding to type IIB orientifold compactifications, in which case one effectively reduces to a direct product $Y_4=Y_3\times T^2$ and hence the level reduces to one. In this setting, known scans of vacua in one-parameter and two-parameter Calabi--Yau manifolds, defined as hypersurfaces in weighted projective space, indicate that the number of vacua with $W_{\mathrm{flux}}=0$ in fact scales polynomially in $L$ \cite{DeWolfe:2004ns,Giryavets:2004zr,Conlon:2004ds}. This is confirmed by the recent work \cite{Plauschinn:2023hjw} in which a complete counting of vacua, including $W_{\mathrm{flux}}=0$ vacua, was performed for the mirror octic. To be clear, this is not in contradiction with Conjecture \ref{conjecture:scaling_Hodge_vacua}, due to the reduction in the level in the weak coupling limit. We believe, however, that this counting is actually not representative for the number of exact Hodge vacua in the non-perturbative setting of F-theory. Indeed, the observed polynomial scaling in the type IIB setting should be viewed as an artifact of truncating the axio-dilaton dependence to the polynomial, i.e.~algebraic, level. To emphasize this point, recall that the axio-dilaton $\tau$ can trivially be solved for in terms of the $F_3$ and $H_3$ fluxes as
\begin{equation}
\label{eq:solution_tau_IIB}
    \bar{\tau} = \frac{\int \Omega\wedge F_3}{\int \Omega\wedge H_3}\,.
\end{equation}
In contrast, as soon as one includes exponential corrections in $\tau$ it is clear that this is no longer so straightforward and we expect that the transcendental nature of the equations greatly restricts the number exact Hodge vacua.\footnote{Of course, there can also be perturbative corrections which break the simple relation \eqref{eq:solution_tau_IIB}, but these do not affect the transcendentality of the equations.} Put shortly, one should perform the counting of $W_{\mathrm{flux}}=0$ vacua in the full F-theory setting, which, in particular, requires a non-trivial elliptic fibration. Mathematically, this is captured by the condition that the level of the variation of Hodge structure should be at least three. A further motivation for this comes from the recent work \cite{baldi2022distribution}, in which it was shown that, when the level is at least three, the locus of Hodge classes corresponds to an atypical intersection, reflecting the fact that it is expected to occur only rarely.

Finally, let us mention some recent developments in mathematics concerning the issues of algebraicity and transcendentality in a Hodge-theoretic context. From a more number-theoretic point of view, a Hodge vacuum effectively requires that some of the $h^{3,1}+1$ equations are no longer algebraically independent. It is a long-standing question when there exist algebraic relations among transcendental numbers, which lies at the heart of the Schanuel conjecture. More concretely, given a collection of complex numbers $\alpha_1,\ldots,\alpha_n$ which are algebraically independent over $\mathbb{Q}$, the Schanuel conjecture gives a bound on the number of algebraic relations among the numbers $\alpha_1,\ldots, \alpha_n, e^{\alpha_1},\ldots, e^{\alpha_n}$. A functional analogue of this question, where one is considering algebraic relations between $f_1(x),\ldots, f_n(x), e^{f_1(x)},\ldots, e^{f_n(x)}$, is addressed by the Ax--Schanuel theorem \cite{Ax:1971}, which has also been generalized for certain transcendental functions besides the exponential function. Recently, techniques from o-minimal geometry and the theory of atypical/unlikely intersections have lead to great developments in this field as well as a proof of the Ax--Schanuel conjecture in the Hodge-theoretic setting \cite{klingler2017hodge,bakker2017axschanuel}. Very roughly speaking, the latter relates the appearance of an atypical intersection, meaning the existence of additional algebraic relations among e.g.~the periods, to a reduction of the so-called Mumford--Tate group. In a similar spirit, the recent work \cite{baldi2022distribution} has elucidated further properties of the Hodge locus using the theory of unlikely intersections. It would be very interesting to further investigate these techniques in the context of F-theory flux compactifications and ascertain whether they could lead to improved quantitative results on the counting of Hodge vacua and possibly prove or disprove Conjecture \ref{conjecture:scaling_Hodge_vacua}. Whether these techniques could also be applied to study self-dual vacua is not so clear.     

\subsection{Complexity of the flux landscape}

Another exciting avenue to explore with regards to the counting of flux vacua is using a certain notion of complexity that has recently been developed in the context of sharp o-minimality, which moreover may be applicable to study both Hodge vacua and self-dual vacua. The basic idea of sharp o-minimality, introduced by Binyamini and Novikov \cite{binyamini2022,binyamini2022sharply}, is to endow definable sets, and thereby definable functions, with some additional positive integers $(F,D)$, called the ``format'' $F$ and ``degree'' $D$ , that reflect the inherent geometric complexity of that set/function. This is in analogy with the degree of a polynomial, which clearly gives the number of zeroes of said polynomial over the complex numbers, but can also be used to give bounds on the number of its zeroes over the real numbers.\footnote{More generally, this falls under Khovanskii's theory of fewnomials \cite{Khovanskii:1980}.} Roughly speaking sharply o-minimal structure are defined in such a way that the functions arising in these structures have similar bounds on their number of zeros \cite{binyamini2022sharply}. Recently, the concept of sharp o-minimality has been explored in a variety of quantum mechanical systems in order to assign a well-defined notion of complexity to various physical observables \cite{Grimm:2023xqy}, see also \cite{Grimm:2021vpn,Douglas:2022ynw, Douglas:2023fcg}. It is natural to ask if a similar strategy can be applied to assign a complexity to e.g.~the F-theory flux scalar potential, which may then provide a new method of estimating the number of flux vacua. In this regard, we propose the following
\begin{conjecture}
    \label{conjecture:complexity}
    We conjecture that the locus of self-dual flux vacua is definable in a sharply o-minimal structure. Furthermore, we expect that its associated sharp complexity $(F,D)$ depends on the tadpole bound $L$ and the number of moduli $h^{3,1}$ in the following way: 
    \begin{equation}
        D=\mathrm{poly}(L)\,,\qquad F=\mathcal{O}(h^{3,1})\,.
    \end{equation}
\end{conjecture}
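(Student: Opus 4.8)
The plan is to establish the two assertions of the conjecture separately, building on Theorem~\ref{thm:finiteness_self_dual} and on the explicit asymptotic expansions of the Weil operator developed in section~\ref{sec:asymp_Hodge_inner_products}. For definability in a \emph{sharply} o-minimal structure, the idea is to revisit the proof that the period map is definable in $\mathbb{R}_{\mathrm{an,exp}}$ \cite{BKT} and make it quantitative. One covers $\overline{\mathcal{M}}$ by finitely many charts adapted to the boundary stratification; on each chart the nilpotent orbit theorem \eqref{eq:nilpotent_orbit_theorem} together with the multi-variable $\mathrm{Sl}(2)$-orbit theorem \cite{CKS} expresses the Hodge filtration, and hence the Weil operator $C(z)$, as a convergent expansion whose building blocks are polynomials in the saxions $y^i$ and $1/y^i$, the bounded axionic exponentials $e^{2\pi i x^i}$, and the exponentially small corrections $e^{-2\pi y^i}$, with coefficients that are the algebraic data of the limiting mixed Hodge structure. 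One would then argue that the structure generated by precisely these functions --- essentially a restricted-exponential/restricted-Pfaffian type structure in the sense of Binyamini--Novikov \cite{binyamini2022sharply} --- is sharply o-minimal, and track the format and degree of $C(z)$ through the CKS recursion. Since each chart involves at most the $h^{3,1}$ saxions together with a bounded amount of auxiliary $\mathfrak{sl}_2$-data, the format of $C(z)$ should come out as $\mathcal{O}(h^{3,1})$, with degree independent of $L$ since $L$ has not yet entered.

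The second step is to impose the defining conditions of $E_{\mathrm{self\text{-}dual}}(L)$, namely $C(z)v=v$, $v\in H_{\mathbb{Z}}$, and $(v,v)\le L$. The equation $C(z)v=v$ is linear in $v$ with definable coefficients, so it leaves the format unchanged and adds only $\mathcal{O}(1)$ to the degree. The tadpole inequality $(v,v)\le L$ is where the $L$-dependence enters: combined with self-duality it confines $v$ to an ellipsoid of volume $\sim L^{h^{3,1}}$ (cf.\ section~\ref{subsec:finiteness_intro}), so any honest bound on the number of connected components of $E_{\mathrm{self\text{-}dual}}(L)$ --- which in a sharply o-minimal structure is $\mathrm{poly}_F(D)$ --- forces $D$ to grow at least polynomially in $L$. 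To obtain $D=\mathrm{poly}(L)$ as an \emph{upper} bound one would feed the explicit shape of $C_{\mathrm{nil}}$ into a quantitative lattice-point count, in the spirit of the Pila--Wilkie theorem with effective dependence on height \cite{barroero2013counting}, using the finiteness already established in Theorem~\ref{thm:finiteness_selfdual_nilpotent} as the qualitative input and promoting it to an effective estimate.

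The hard part will be the first step: making the definability of the period map sharp. The multi-variable $\mathrm{Sl}(2)$-orbit theorem is proved by an intricate induction on the number of variables, and, a priori, each inductive step could multiply the degree, so the naive bound on $D$ would be exponential in the number of moduli rather than $\mathcal{O}(1)$; one therefore needs a genuinely uniform version of the CKS recursion in which the complexity of the successive $\mathfrak{sl}_2$-splittings does not accumulate --- essentially a ``tame'' refinement of the constructions in section~\ref{sec:asymp_Hodge_inner_products}. A second, more foundational difficulty is that $\mathbb{R}_{\mathrm{an}}$ itself is not known to be sharply o-minimal, so one cannot merely invoke definability in $\mathbb{R}_{\mathrm{an,exp}}$; the argument must isolate the much smaller sub-structure actually used, whose generating functions are concrete enough that the Binyamini--Novikov machinery applies and produces explicit format and degree. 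Finally, matching lower bounds --- showing that $D$ genuinely grows polynomially and that $F$ genuinely scales with $h^{3,1}$ --- would require exhibiting families of self-dual vacua realizing this complexity, for which the $Y_3\times T^2$ and $\mathrm{K3}\times\mathrm{K3}$ examples are natural first test cases, keeping in mind that their reduced level may make them atypically simple.
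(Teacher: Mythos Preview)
This statement is a \emph{conjecture}, and the paper does not prove it. The paper offers only motivation: the expected scaling $D=\mathrm{poly}(L)$, $F=\mathcal{O}(h^{3,1})$ is obtained by matching the Ashok--Douglas index density, which grows like $L^{h^{3,1}}$, against the general fact that in a sharply o-minimal structure the number of zeros is bounded polynomially in $D$ and exponentially in $F$. For the first assertion (sharp definability), the paper explicitly notes that $\mathbb{R}_{\mathrm{an,exp}}$ is \emph{not} sharply o-minimal, and that the required input --- that period integrals are definable in some sharply o-minimal structure --- is itself an open conjecture of Binyamini \cite{binyamini2022}. No further argument is given.

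Your proposal is not a proof either, and to your credit you essentially say so: what you have written is a research program, not an argument. The central gap is exactly the one you flag at the end --- you need a sharply o-minimal structure containing the period map, and no such structure has been constructed. Your suggestion to ``argue that the structure generated by precisely these functions \ldots\ is sharply o-minimal'' is the whole problem, not a step in its solution; this is precisely Binyamini's conjecture, and nothing in the CKS expansion or in section~\ref{sec:asymp_Hodge_inner_products} gives you a handle on it, since sharpness is a global property of the structure and not of any finite list of generating functions. Your second step also contains a subtle confusion: the tadpole inequality $(v,v)\le L$ is a polynomial inequality of degree $2$ in the flux variables regardless of $L$, so it does not by itself increase the degree $D$ of the defining formula; the $L$-dependence you want must instead enter through the range of the integer variables $v$, and sharp complexity is not sensitive to that in the way you suggest. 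In short, both the paper and your proposal leave the conjecture open, and the paper's heuristic for the scaling is considerably more modest than the quantitative CKS-tracking you propose.
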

Our expectation for the scaling of $D$ and $F$ is rather conservative, and is motivated by the form of the Ashok--Douglas index density \cite{Ashok:2003gk,Denef:2004ze}. Indeed, the latter grows as $L^{h^{3,1}}$, while generically the number of zeroes of functions that are definable in a sharply o-minimal structure depends polynomially on $D$ and exponentially on $F$. Since the sharp complexity $(F,D)$ only gives upper bounds on the number of such zeroes, it could also be the case that already for self-dual vacua, the scaling is in fact more restricted. Certainly, this is expected for the special class of Hodge vacua, as captured by Conjecture \ref{conjecture:scaling_Hodge_vacua}. 

Nevertheless, we stress that the statement of Conjecture \ref{conjecture:complexity} is highly non-trivial. Indeed, while Theorem \ref{thm:finiteness_self_dual} establishes that the locus of self-dual flux vacua is definable in the o-minimal structure $\mathbb{R}_{\mathrm{an},\mathrm{exp}}$, it has been shown that this structure is not sharply o-minimal. Roughly speaking, a generic restricted analytic function does not have a well-defined notion of complexity, because one has too much freedom in specifying the coefficients in its series expansion. Nevertheless, it is currently conjectured \cite{binyamini2022}, that period integrals are in fact definable in a sharply o-minimal structure, meaning that they actually live in a much smaller o-minimal structure than $\mathbb{R}_{\mathrm{an},\mathrm{exp}}$. This would, in particular, imply a positive answer to the first part of Conjecture \ref{conjecture:complexity}. Lastly, let us mention the recent work \cite{binyamini2022wilkies} in which a proof was given for Wilkie's conjecture \cite{Pila:2006} when restricting to certain sharply o-minimal structures. Together with Conjecture \ref{conjecture:complexity}, the latter suggests that the scaling in Conjecture \ref{conjecture:scaling_Hodge_vacua} may be even more restricted by replacing the sub-polynomial scaling with a logarithmic scaling in $L$. It would be very interesting to investigate this further. 

\subsection{A generalized tadpole conjecture for the Hodge locus}

In the previous points we have focused on counting the number of flux vacua or, more precisely, the number of connected components of the vacuum locus. A related question concerns the dimension of the various connected components, in particular whether it can be zero. In other words, one might ask whether all complex structure moduli can always be stabilized for a suitable choice of flux. When one is only solving the vacuum conditions, it is reasonable to expect that this can indeed be achieved, since one imposes at least $h^{3,1}$ complex conditions for the same number of complex variables. However, it is not obvious whether this can be done whilst also imposing the tadpole condition. Indeed, the tadpole conjecture postulates that one cannot stabilize a large number of complex structure moduli within the tadpole bound, i.e.~when $h^{3,1}$ is much larger than all other Hodge numbers \cite{Bena:2020xrh}. More concretely, it states that for large $h^{3,1}$ and all moduli stabilized, one has
\begin{equation}
    \frac{1}{2}\int_{Y_4}G_4\wedge G_4 > \alpha h^{3,1}\,,
\end{equation}
with $\alpha>1/3$. Recalling that $\frac{\chi(Y_4)}{24}\sim \frac{1}{4}h^{3,1}$, this implies that for large $h^{3,1}$ the tadpole grows too quickly to be contained within the tadpole bound. We refer the reader to \cite{Braun:2020jrx,Bena:2021wyr,Marchesano:2021gyv,Lust:2021xds,Plauschinn:2021hkp,Grana:2022dfw,Lust:2022mhk,Coudarchet:2023mmm,Braun:2023pzd} for related works on the tadpole conjecture. 

Let us attempt to formulate a version of the tadpole conjecture in a more mathematical fashion. Let $v\in H_{\mathbb{Z}}$ be an integral class, playing the role of the flux, and denote by 
\begin{equation}
    (v,v) = L\,,
\end{equation}
its self-intersection. The spirit of the tadpole conjecture is that when the flux $v$ defines a vacuum in which all moduli are stabilized, one necessarily has $L>\mathcal{O}(1)\cdot\mathrm{dim}\,\mathcal{M}$, where we recall that $\mathcal{M}$ denotes the complex structure moduli space. Conversely, if $\mathrm{dim}\,\mathcal{M}>\mathcal{O}(1)\cdot L$, then it must be that not all moduli are stabilized. The latter statement can be formalized as follows. Generically, the vacuum locus consists of several connected components, each having a well-defined notion of dimension.\footnote{See also \cite{Becker:2022hse} for a related discussion.} For the locus of Hodge classes this is immediately clear, since it is algebraic. For the locus of self-dual classes this follows from its definability in an o-minimal structure, since a natural notion of dimension is provided by the cell decomposition \cite{dries_1998}. Within this locus, some components may correspond to points, having dimension zero, while other components may correspond to higher-dimensional loci, having strictly positive dimension. The statement that not all moduli are stabilized then means that all components of the vacuum locus of with a fixed self-intersection $L$ have strictly positive dimension. For the class of Hodge vacua, such a special feature of the vacuum locus appears to be more plausible. Thus we are lead to the following
\begin{conjecture}
\label{conjecture:tadpole}
    Consider a variation of polarized Hodge structure $E\rightarrow\mathcal{M}$ of weight $D$. Fix a positive integer $L$ and recall the notation
    \begin{equation}
        E_{\mathrm{Hodge}}(L) = \{(z^i,v)\in E: v\in H^{k,k}\cap H_{\mathbb{Z}}\,,(v,v)\leq L\}\,,\qquad D=2k\,,
    \end{equation}
    for the locus of Hodge classes with self-intersection bounded by $L$. We conjecture that for certain positive constants $C_1, C_2$, which are independent of $L$ and $\mathrm{dim}\,\mathcal{M}$ (but may depend on other details of the variation of Hodge structure, such as the weight $D$), the following holds: if
    \begin{equation}   
    \label{eq:conj3_conditions}\mathrm{dim}\,\mathcal{M}>C_1\,\qquad \text{and}\qquad \mathrm{dim}\,\mathcal{M}>C_2\cdot L\,,
    \end{equation}
    then every connected component of $E_{\mathrm{Hodge}}(L)$ has strictly positive dimension.\footnote{Note that since $L\geq 1$ for non-trivial fluxes, the two conditions in \eqref{eq:conj3_conditions} reduce to a single condition whenever $C_2\geq C_1$.} Furthermore, when the variation of Hodge structure comes from the middle cohomology of a family of Calabi--Yau fourfolds, we expect that the constant $C_2$ is of order one. 
\end{conjecture}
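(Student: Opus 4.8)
The plan is to reduce the conjecture to a single infinitesimal ``rank versus self-intersection'' inequality and then to prove that inequality by combining the polarization with the integrality of the flux, using the asymptotic Hodge theory of Sections~\ref{sec:asymp_Hodge_inner_products} and~\ref{sec:self_dual_locus} for the quantitative control. First I would set up the reduction. Fix a component of $E_{\mathrm{Hodge}}(L)$ through a point $(z,v)$; by Theorem~\ref{thm:CDK} it is algebraic, and locally near $(z,v)$ it is the zero locus of the holomorphic obstruction map $\mathcal{F}(z')=v\bmod F^k(z')$ valued in $H_{\mathbb{C}}/F^k$. (It is here, and only here, that the restriction to the \emph{Hodge} locus is used, since $\mathcal{F}$ is genuinely holomorphic.) By Griffiths transversality $\mathcal{F}(z)=0$ and $d\mathcal{F}_z(X)=\nabla_X v\bmod F^k=\rho_v(X)$, which lands in $F^{k-1}/F^k\cong H^{k-1,k+1}$, where $\rho_v(X)=\rho(X)v$ for the infinitesimal variation operator $\rho(X)$. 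Hence the component through $(z,v)$ has dimension at least $\dim\mathcal{M}-\mathrm{rank}\,\rho_v$, so a zero-dimensional component forces $\rho_v\colon T_z\mathcal{M}\to H^{k-1,k+1}$ to be of full rank $\dim\mathcal{M}$. (One technical point to settle is that a nonzero $\ker\rho_v$ actually integrates to a positive-dimensional Hodge locus rather than merely a non-reduced point; I expect this to follow because the components of the Hodge locus are themselves polarized sub-VHS, but it must be checked.) The conjecture then reduces to showing
\begin{equation}
    \mathrm{rank}\,\rho_v \;\le\; C_2\,(v,v)\qquad\text{(when }\dim\mathcal{M}>C_1\text{)}
\end{equation}
for $v$ integral and of type $(k,k)$ at $z$ --- a Hodge-theoretic avatar of the tadpole conjecture of~\cite{Bena:2020xrh}.

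The second step is this rank--norm inequality. The model case is the weak-coupling reduction $Y_4=Y_3\times T^2$, where $\rho_v$ becomes essentially an integral matrix $M$ and one has the elementary bound $\mathrm{rank}\,M\le\|M\|_F^2\sim(v,v)$: the sum of squares of the $r\times r$ minors of an integral matrix of rank $r$ is $\prod_{i=1}^r\sigma_i^2$, a positive integer, hence $\ge 1$, and AM--GM gives $\sum_i\sigma_i^2\ge r$. I would try to make this ``curved''. Choose an orthonormal basis $\{e_a\}$ of $T_z\mathcal{M}$ for the Hodge metric and form the positive-semidefinite Gram matrix $G_{ab}=\langle\rho_v(e_a),\rho_v(e_b)\rangle$, whose rank equals $\mathrm{rank}\,\rho_v$. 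One then needs (i) an upper bound $\mathrm{tr}\,G=\sum_a\|\rho_v(e_a)\|^2\le C\,(v,v)$, which uses operator-norm control of the $\rho(e_a)$ together with the identity $\|v\|^2=(v,v)$ valid for real $(k,k)$-classes, and (ii) a lower bound, by a universal constant, on the product of the nonzero eigenvalues of $G$ --- this is where the integrality of $v$ must enter, as the analogue of ``the sum of squares of minors is a positive integer''. Combining (i), (ii) and AM--GM would yield $\mathrm{rank}\,\rho_v\le C\,(v,v)$.

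For the quantitative parts, and especially for the claim $C_2=\mathcal{O}(1)$ in the fourfold case, I would bring in the multi-variable $\mathrm{Sl}(2)$-orbit expansion of Section~\ref{sec:asymp_Hodge_inner_products}, which gives the leading behaviour of $\|v\|^2$ near any boundary as a monomial in the ratios $y^i/y^{i+1}$ with exponents fixed by where $v$ sits in the monodromy weight filtrations. Boundedness of $(v,v)=\|v\|^2$ then confines $v$ to low pieces of those filtrations and bounds how many directions $\rho_v$ can annihilate in terms of the combinatorics of the nilpotent cone. Since an isolated Hodge class need not lie near a boundary, this has to be combined with the theory of atypical intersections: for level $\ge 3$ the positive-dimensional part of the Hodge locus is already tame by~\cite{baldi2022distribution}, and an isolated Hodge class should be forced onto a special sub-VHS on whose closure the $\mathrm{Sl}(2)$-data can be read off; inserting the fourfold Hodge numbers $(1,h^{3,1},h^{2,2}_{\mathrm{prim}},h^{3,1},1)$ and the structure of the Yukawa couplings should pin $C_2$ to order one, matching the expected $\alpha>1/3$.

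The main obstacle is step (ii): there is no known mechanism forcing the Gram determinant to be bounded below by a universal constant once the periods, and hence $\rho_v$, are genuinely transcendental, and it is precisely this ``integrality lower bound'' that is the arithmetic heart of the tadpole conjecture. A secondary difficulty is uniformity: every estimate must be independent of $\dim\mathcal{M}$, so one cannot restrict to a fixed compact region of $\mathcal{M}$ and invoke compactness --- the uniform control has to be built into the asymptotic analysis from the start, which is what makes the $\mathrm{Sl}(2)$-orbit machinery, rather than a soft o-minimality argument, the natural tool here.
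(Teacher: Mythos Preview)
The paper does not prove this statement: Conjecture~\ref{conjecture:tadpole} is explicitly presented as an open problem (``the conjecture remains rather speculative''), motivated by the tadpole conjecture of~\cite{Bena:2020xrh} but with no proof offered. So there is no argument in the paper to compare your proposal against; what follows is an assessment of your strategy on its own merits.

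Your reduction step contains a genuine gap beyond the one you flag. You write that the component through $(z,v)$ has dimension \emph{at least} $\dim\mathcal{M}-\mathrm{rank}\,\rho_v$, but the standard fact is the opposite inequality: the Zariski tangent space to the Hodge locus at $(z,v)$ is $\ker\rho_v$, so the component has dimension \emph{at most} $\dim\mathcal{M}-\mathrm{rank}\,\rho_v$. A zero-dimensional component therefore gives only the trivial bound $\mathrm{rank}\,\rho_v\le\dim\mathcal{M}$. To get your desired implication ``zero-dimensional $\Rightarrow$ $\rho_v$ has full rank'' you need the Hodge locus to be smooth at $(z,v)$, or at least that a positive-dimensional Zariski tangent space forces a positive-dimensional component. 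You gesture at this with the remark about sub-VHS, but there is no general mechanism preventing the Hodge locus from having an embedded or non-reduced isolated point, and the sub-VHS property of components does not by itself rule this out. Without this, the rank--norm inequality you aim for is neither necessary nor sufficient for the conjecture.

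You are right that step~(ii) --- the lower bound on the product of nonzero eigenvalues of the Gram matrix --- is the arithmetic heart of the matter, and your honest acknowledgment that ``there is no known mechanism'' for it is correct. The flat $Y_3\times T^2$ model works precisely because the periods are essentially rational and the Gram minors are literally integers; in the curved case the entries of $G$ are values of transcendental functions of the moduli, and integrality of $v$ alone gives no lower bound on $\det'G$. The $\mathrm{Sl}(2)$-orbit asymptotics of Section~\ref{sec:asymp_Hodge_inner_products} control norms near boundaries but, as you note, an isolated Hodge class need not lie near any boundary, and invoking atypical-intersection results to force it onto a special sub-VHS is itself conjectural in the zero-dimensional case (this is exactly the open part of~\cite{baldi2022distribution}). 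So both the reduction and the core estimate remain open, which is consistent with the paper's decision to state this as a conjecture rather than a theorem.
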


\begin{figure}[t]
    \centering
    \begin{tikzpicture}[scale=0.6]
\draw[->, thick] (1.5, 0) to[out=10, in=160] (10, -0.5) to [out=60, in=210] (13.5, 3)
    to[out=140,in=10] (4,4) to[out=250, in=90] cycle;

\draw[line width=1.5pt, color=blue!60] (3,1.5) to[in=220, out=30] (12,3);

\fill[, color=red!60] (10,2.05) circle[radius=3pt];

\draw[dashed] (4,2.5) to[in=270, out=90] (3,6);

\draw[dashed] (10,2.5) to[in=280,out=90] (11,6);

\draw[->,thick] (10,6.5) -- (10,9.5);
\draw[->,thick] (10,6.5) -- (13,6.5);
\draw[thin] (11,6.5) -- (11,9.5);
\draw[thin] (12,6.5) -- (12,9.5);
\draw[thin] (10,7.5) -- (13,7.5);
\draw[thin] (10,8.5) -- (13,8.5);
\draw[->, line width=1.5pt, color=red!60] (10,6.5) -- (12,9.5);
\node at (12,9.8) {$v_2$};

\draw[->,thick] (2,6.5) -- (5,6.5);
\draw[->,thick] (2,6.5) -- (2,9.5);
\draw[thin] (3,6.5) -- (3,9.5);
\draw[thin] (4,6.5) -- (4,9.5);
\draw[thin] (2,7.5) -- (5,7.5);
\draw[thin] (2,8.5) -- (5,8.5);
\draw[->, line width=1.5pt, color=blue!60] (2,6.5) -- (5,7.5);
\node at (5.3,7.5) {$v_1$};

\node at (5,9.5) {$H_{\mathbb{Z}}$};
\node at (13,9.5) {$H_{\mathbb{Z}}$};

\node at (12,1) {$\mathcal{M}$};
\end{tikzpicture}
    \caption{Schematic illustration of the Hodge bundle, in which the flux lattice $H_{\mathbb{Z}}$ is fibered over the moduli space $\mathcal{M}$. In blue and in red we have depicted two components of the vacuum locus with different dimensionality, corresponding to two choices of flux $v_1$ and $v_2$, respectively.}
    \label{fig:Hodge_bundle}
\end{figure}
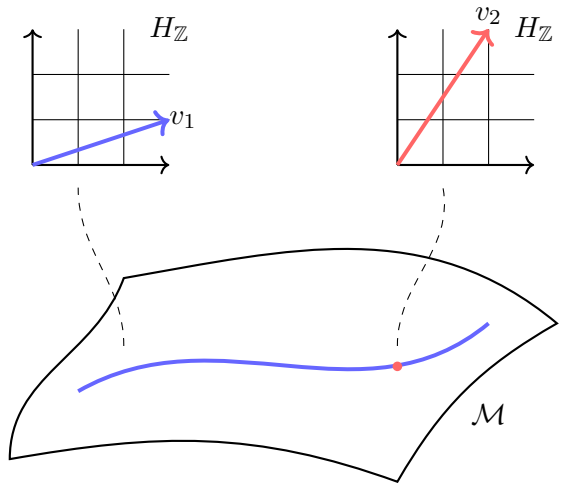
On the one hand, the statement of Conjecture \ref{conjecture:tadpole} is more general than the original tadpole conjecture of \cite{Bena:2020xrh}, as it is formulated for a general variation of Hodge structure. On the other hand, it should be emphasized that, in the specific setting of Calabi--Yau fourfold compactifications, the statement of Conjecture \ref{conjecture:tadpole} is weaker than the original tadpole conjecture, for a number of reasons. Firstly, Conjecture \ref{conjecture:tadpole} is formulated for Hodge vacua only, corresponding to vacua with $W_{\mathrm{flux}}=0$, while the original tadpole conjecture  applies to all self-dual vacua. Additionally, in the formulation of Conjecture \ref{conjecture:tadpole} there is no restriction on how many moduli are left unstabilized, as long as there is at least one. Finally, the exact values of the constants $C_1$ and $C_2$ are left undetermined. Especially for the physical application of studying the landscape of fully stabilized Hodge vacua, it is of utmost importance to quantify the exact values of $C_1,C_2$.

In figure \ref{fig:Hodge_bundle} we have illustrated two possible components of the locus of Hodge classes to exemplify the statement of Conjecture \ref{conjecture:tadpole}, for the case of a two-dimensional flux lattice $H_{\mathbb{Z}}$ and a real two-dimensional moduli space $\mathcal{M}$, so that $\mathrm{dim}\,E=4$. Suppose that $v_1$ is a choice of flux with sufficiently small tadpole $L_1$ so that Conjecture \ref{conjecture:tadpole} applies. Then the  vacuum locus corresponding to $v_1$ inside the full Hodge bundle may, for example, be a one-dimensional curve. Thus, this component of $E_{\mathrm{Hodge}}(L_1)$ has positive dimension. If instead $v_2$ is another choice of flux, with tadpole $L_2$, for which the corresponding vacuum is simply a point, then $L_2$ must be sufficiently large, in particular $L_2>L_1$. Note that it could additionally happen that this point lies on the component of the vacuum locus corresponding to $v_1$, as indicated in figure \ref{fig:Hodge_bundle}. In order to disentangle the two components, one should always consider the vacuum locus within the full Hodge bundle. To conclude, we believe a positive or negative answer to Conjecture \ref{conjecture:tadpole} would be an important step towards proving or disproving the tadpole conjecture. Although the conjecture remains rather speculative, it is conceivable that at least for Hodge classes a definite answer can be given in the near future. 

\section{Asymptotic Hodge Inner Products}
\label{sec:asymp_Hodge_inner_products}

In this section we provide some additional  material on asymptotic Hodge theory. This includes some results which have not yet appeared in the physics literature, which in fact comprise the core of the multi-variable $\mathrm{Sl}(2)$-orbit theorem of Cattani, Kaplan, and Schmid. In particular, we discuss the multi-variable nilpotent orbit expansion and show how this can be used to obtain general formulae for asymptotic Hodge inner products which include infinite series of sub-leading corrections. For the purpose of the present work, the main application of these results will be presented in section \ref{sec:self_dual_locus}, where the detailed properties of the nilpotent orbit expansion play a central role in the proof of Theorem \ref{thm:finiteness_selfdual_nilpotent}. However, the range of possible applications for these results goes far beyond just the finiteness proof. Indeed, as an example we present a general asymptotic formula for the central charge of D3-particles in type IIB  compactifications on Calabi--Yau threefolds in section \ref{subsec:inner_prod}. As such, this section may be of independent interest to some readers.

\subsection{Asymptotic Hodge theory and the nilpotent orbit expansion}
\label{subsec:nilpotent_orbit_expansion}
In the remainder of this section, we will assume the underlying variation of Hodge structure to be given exactly by a nilpotent orbit, in other words
\begin{equation}
\label{eq:nilpotent_orbit_2}
    F^p(t) = F^p_{\mathrm{nil}}(t) = e^{t^iN_i}F_0^p\,,
\end{equation}
recall also the discussion in section \ref{subsubsec:asymp_Hodge_theory_1}. In the following, we sometimes drop the subscript `nil' to avoid cluttering the notation. In order to tackle the finiteness of self-dual flux vacua, it is clearly necessary to understand the properties of the Weil operator $C_{\mathrm{nil}}$ associated to $F^p_{\mathrm{nil}}$, as it plays a central role in the vacuum conditions. In particular, it is necessary to know how $C_{\mathrm{nil}}$ can degenerate in the limit $\mathrm{Im}\,t^i\rightarrow\infty$. Unfortunately, the characterization \eqref{eq:nilpotent_orbit_2} of the nilpotent orbit is not immediately useful in this regard. The reason is that $F_0^p$ itself generically does not define a polarized Hodge structure, hence there is no Weil operator associated to it. 

Nevertheless, there exists a completely general procedure which characterizes $F^p_{\mathrm{nil}}$ and, consequently, $C_{\mathrm{nil}}$. The procedure lies at the heart of the proof of the $\mathrm{Sl}(2)$-orbit theorem of Cattani, Kaplan, and Schmid \cite{CKS}, and was referred to as a ``bulk reconstruction''  procedure in \cite{Grimm:2021ikg}. The idea is that, to each boundary in the moduli space, in particular to a given nilpotent orbit, one can naturally associate a set of so-called ``boundary data''
\begin{equation}
\label{eq:boundary_data}
\boxed{\rule[-.3cm]{0cm}{.8cm} \quad \left\{F^p_{\infty}, N_{(i)}^+, N_{(i)}^0, N_{(i)}^-, \delta_{(i)}\right\}\,,\qquad i=1,\ldots, m\,,\quad }
\end{equation}
consisting of
\begin{itemize}
    \item a boundary Hodge structure $F^p_\infty$\,,
    \item a collection of $m$ real $\mathfrak{sl}(2)$-triples $\{N_{(i)}^+, N_{(i)}^0, N_{(i)}^-\}$ and
    \item a collection of $m$ ``phase operators'' $\delta_{(i)}\in\mathfrak{g}_{\mathbb{R}}$. 
\end{itemize}
This will be explained in more detail shortly. The important point is that $F_\infty^p$ \textit{does} define a polarized Hodge structure, and therefore has an associated Weil operator $C_\infty$. Furthermore, given a set of boundary data, there exists a moduli-dependent $G_{\mathbb{R}}$-valued function $h$ with which the original nilpotent orbit as well as the Weil operator can be ``reconstructed'' via the relations
\begin{equation}
\label{eq:bulk-reconstruction}
    F^p_{\mathrm{nil}} = h F^p_{\infty}\,,\qquad C_{\mathrm{nil}} = h C_\infty h^{-1}\,.
\end{equation}
The operator $h$ is given via a completely algorithmic manner in terms of the $\mathfrak{sl}(2)$-triples and phase operators that comprise the boundary data. Together with the fact that all the possible boundary data can be classified \cite{robles_2016,Kerr2017}, this therefore provides a complete characterization of nilpotent orbits. In the following two subsections, we will provide some additional background on the boundary data, as well as the general form and properties of the function $h$.

\subsubsection{Asymptotic Hodge theory (2): Boundary data}
In this subsection we describe some of the essential properties of the boundary data \eqref{eq:boundary_data}. For the purpose of this work, it will not be necessary to understand exactly how the boundary data can be obtained, or classified, in general. Instead, it will be sufficient to use the existence of this data as well as their properties and role in the bulk reconstruction procedure. The interested reader may consult \cite{Grimm:2018cpv,Grimm:2021ckh} for further details, as well as appendix \ref{app:MHS}.

\subsubsection*{$\mathfrak{sl}(2,\mathbb{R})$-decomposition}
One of the central results of asymptotic Hodge theory is that, given a variation of Hodge structure on a product of $m$ punctured disks, there is a procedure to construct a set of $m$ commuting $\mathfrak{sl}(2,\mathbb{R})$-triples that is naturally associated to the limit. In other words, each boundary in the moduli space has, in an appropriate sense, an emergent $\mathfrak{sl}(2,\mathbb{R})^m$ symmetry. To be precise, each boundary actually has multiple of such emergent symmetries, depending on the hierarchy among the moduli that become large. To this end, we introduce a growth sector
\begin{equation}
\label{eq:growth_sector}
    R_{12\cdots m} = \{(t^1,\ldots, t^m)\in\mathbb{H}^m: \mathrm{Im}\,t^1>\mathrm{Im}\,t^2>\cdots > \mathrm{Im}\,t^m>1\,, \mathrm{Re}\,t^i \in [0,1]\}\,,
\end{equation}
and will, without loss of generality, restrict the remainder of our discussion to this particular growth sector. Pictorially, this means that, within this growth sector, one is always closest to the $z_1=0$ singularity, followed by the $z_2=0$ singularity, et cetera. Of course, by a reordering of the coordinates one can always restrict to this case. However, it is important to stress that in practical applications, when computing the explicit generators of the $\mathfrak{sl}(2,\mathbb{R})$-algebras, one will get different results in the different sectors.

For each $i=1,\ldots, m$, the corresponding $\mathfrak{sl}(2,\mathbb{R})$-triple will be denoted by a set of three real operators $N^+_i, N^0_i,N^-_i\in\mathfrak{g}_{\mathbb{R}}$, satisfying the usual commutation relations
\begin{equation}
    [N^0_i, N^\pm_j] = \pm 2 N^\pm_i\,\delta_{ij}\,,\qquad [N^+_i,N^-_j]=N^0_i\,\delta_{ij}\,.
\end{equation}
Furthermore, we define
\begin{equation}
\label{eq:sl2_operators_sum}
    N^\bullet_{(i)} = N^\bullet_1+\cdots+N^\bullet_i\,,\qquad \bullet=+,0,-\,.
\end{equation}
The operators $N^0_{(i)}$ (which are also mutually commuting) will be of particular importance in the rest of the discussion. This is because they induce a decomposition of the vector space $H_{\mathbb{R}}$ in terms of weights with respect to each $N^0_{(i)}$. Indeed, for a given vector $v\in H_{\mathbb{R}}$ its weight-decomposition will be denoted by\footnote{A word of caution: it is of course also possible to use $N^0_i$ to define a weight-decomposition. This is simply a matter of convention, which differs across different works. For the present work, we find this choice to be most convenient and natural.}
\begin{equation}
\label{eq:sl2_decomp_vector}
    v = \sum_{\ell} v_\ell\,,\qquad \ell=(\ell_1,\ldots, \ell_m)\,,\qquad N^0_{(i)}v_\ell = \ell_i\,v_\ell\,.
\end{equation}
Here the values of the $\ell_i$ run at most from $-D$ to $D$. In a similar fashion, the adjoint representation of the $\mathfrak{sl}(2,\mathbb{R})$-triples on the algebra $\mathfrak{g}_{\mathbb{R}}$ induces a weight-decomposition of an operator $\mathcal{O}\in\mathfrak{g}_\mathbb{R}$ as 
\begin{equation}
\label{eq:sl2_decomp_operator}
    \mathcal{O}=\sum_{s} \mathcal{O}^{s}\,,\qquad s=(s_1,\ldots, s_m)\,,\qquad [N^0_{(i)},\mathcal{O}^s] = s_i\,\mathcal{O}^s\,.
\end{equation}
Here the values of the $s_i$ run at most from $-2D$ to $2D$.

\subsubsection*{Phase operators}
From a computational perspective, the phase operators $\delta_{(i)}$ are, arguably, the most important part of the boundary data. This is because the form of the $\delta_{(i)}$ dictates how complicated the resulting expression for the map $h$ becomes. In particular, if all $\delta_{(i)}$ vanish the procedure essentially trivializes. The construction of the $\delta_{(i)}$ associated to a given nilpotent orbit is somewhat involved and is described in e.g.~\cite{Grimm:2021ckh}. Roughly speaking, the presence of the $\delta_{(i)}$ is intertwined with the reality of the $\mathfrak{sl}(2)$-triples. In order to ensure this reality, it is generically necessary to perform certain rotations, generated by the $\delta_{(i)}$, on the limiting filtration $F_0^p$ in order to remove complex phase factors. Hence the name ``phase operators''. Some further details are also described in appendix \ref{app:MHS}.

The phase operators satisfy two properties which play a crucial role in the proof of Theorem \ref{thm:finiteness_selfdual_nilpotent}, namely\footnote{To be precise, the phase operators $\delta_{(i)}$ should satisfy the following condition with respect to the so-called Deligne splitting
\begin{equation}
\label{eq:propertes_delta_Deligne}
    \delta_{(i)} = \sum_{p,q>0} \left[\delta_{(i)}\right]_{-p,-q}\,,\qquad \left[\delta_{(i)}\right]_{-p,-q}\left(\tilde{I}^{r,s}_{(i)}\right)\subseteq \tilde{I}^{r-p,s-q}_{(i)}\,,
\end{equation}
see also appendix \ref{app:MHS}. This condition implies \eqref{eq:properties_delta_2}. Alternatively, one may also formulate \eqref{eq:propertes_delta_Deligne} by introducing a so-called charge operator as was done in \cite{Grimm:2021ikg}.}
\begin{align}
\label{eq:properties_delta_1}
   (1):&\qquad [N_{(j)}^-, \delta_{(i)}] = 0\,,\qquad j\leq i\,,\\
\label{eq:properties_delta_2}
   (2):&\qquad \delta_{(i)}^{s^i}=0\,,\quad s^i=\left(s_1^i,\ldots, s_m^i\right)\,,\quad \text{if $s_i^i>-2$}\,.
\end{align}
The first property states that each $\delta_{(i)}$ is a lowest-weight operator with respect to $N^-_{(1)},\ldots, N_{(i)}^-$, while the second property imposes the additional restriction that its weight with respect to $N_{(i)}^0$ is less than or equal to $-2$.

\subsubsection*{Boundary Hodge structure}
Another important result of asymptotic Hodge theory is that it is possible to assign a sensible ``boundary Hodge structure'' to the puncture of the polydisc, which will be denoted by $H^{p,q}_{\infty}$. Correspondingly, the associated boundary Hodge filtration will be denoted by $F^p_{\infty}$, and its Weil operator by $C_\infty$. Again, it will not be necessary to understand the full details of how this is constructed, for which we refer the reader to \cite{Grimm:2021ckh} and appendix \ref{app:MHS}. However, an important point we would like to stress is that the boundary Hodge structure has a well-defined inner product, namely the one induced by $C_\infty$, which is moreover coordinate-independent. 

Let us now explain the sense in which one should think of $F^p_\infty$ as the boundary Hodge structure. To this end, we introduce the following real operator
\begin{equation}
    e(y) = \prod_{i=1}^m\left(\frac{y_i}{y_{i+1}}\right)^{\frac{1}{2}N^0_{(i)}}\,,\qquad y_{m+1}\equiv 1\,, 
\end{equation}
which takes values in $G_{\mathbb{R}}$. Then one can show that (assuming the axions remain bounded)
\begin{equation}
\label{eq:Fsharp_F_relation_limit}
    \lim_{y_1,\ldots, y_m\rightarrow\infty} e(y)F^p = F^p_{\infty}\,.
\end{equation}
There are two ways to interpret this property. On the one hand, one can use \eqref{eq:Fsharp_F_relation_limit} to investigate the properties of an element in $F^p$ as one approaches the boundary. This is the perspective we will take in the remainder of this section. On the other hand, equation \eqref{eq:Fsharp_F_relation_limit} roughly implies that $e(y)^{-1}F^p_{\infty}$ is the `leading approximation' to $F^p$. One might wonder whether it is possible to then compute the sub-leading corrections in order to give a detailed description of elements in $F^p$. This can indeed be done, as will be explained in the next section, and will be of central importance to analyse the fate of self-dual vacua.

\subsubsection{Asymptotic Hodge theory (3): Multi-variable bulk reconstruction}

Let us now turn to the computation of the operator $h$, by which the full nilpotent orbit $F_{\mathrm{nil}}$ can be recovered from just the boundary data. First, one may factor out the axion dependence and write
\begin{equation}
\boxed{\rule[-.5cm]{0cm}{1.2cm} \quad
    F_{\mathrm{nil}}= \mathrm{exp}\left[\sum_{i=1}^m x_i N_i\right]\cdot h(y_1,\ldots, y_m)\cdot F_{\infty}\,,\quad}
\end{equation}
with the map $h(y_1,\ldots, y_m)$ depending only on the saxions. In the case of a single variable ($m=1$) this map has been constructed explicitly for all possible boundary data that can arise for Calabi--Yau threefolds in \cite{Grimm:2021ikg}, where the procedure has been referred to as a ``bulk reconstruction'', in analogy with a similar procedure in holography. We also refer the reader to \cite{Grimm:2020cda}, in which such a notion of ``moduli space holography'' was first proposed. Moving on, the multi-variable bulk reconstruction can be viewed as a clever recursive application of the one-variable procedure. For the details of the exact construction of the map $h(y_1,\ldots, y_m)$ we refer the reader to the proof of the $\mathrm{Sl}(2)$-orbit theorem of \cite{CKS}, as well as the upcoming doctoral thesis of the second author. In the following, we will simply state the important properties.

\subsubsection*{Properties of $h(y_1,\ldots, y_m)$}
Thus, let us provide some details on the map $h(y_1,\ldots,y_m)$. First off, due to the inductive nature of its construction, it takes the form of a product
\begin{equation}
    h(y_1,\ldots, y_m) = \prod_{i=m}^1 h_i\left(\frac{y_i}{y_{i+1}}; \frac{y_1}{y_i},\ldots, \frac{y_{i-1}}{y_i}\right)\,,\qquad y_{m+1}\equiv 1\,.
\end{equation}
By the notation on the right-hand side it is meant that each factor $h_i$ should be viewed as a function of $\frac{y_i}{y_{i+1}}$, while the ratio's $\frac{y_1}{y_i},\ldots, \frac{y_{i-1}}{y_i}$ are kept fixed. These functions are each of the form
\begin{equation}
    h_i =  g_{i}\cdot\left(\frac{y_i}{y_{i+1}}\right)^{-\frac{1}{2}N^0_{(i)}}\,,\qquad g_{i}=1+\sum_{k_i=1}^\infty g_{i,k_i}\left(\frac{y_i}{y_{i+1}}\right)^{-k_i}\,.
\end{equation}
In particular, each $h_i$ admits a series expansion in $y_i/y_{i+1}$. The expansion coefficients $g_{i,k_i}$ are then the most important and complicated part of the expression. In the one-variable case, they can be expressed as moduli-independent universal Lie polynomials involving certain projections of the (single) phase operator $\delta$ and raising operator $N^+$ onto specific eigenspaces of the grading operators $N_{(i)}^0$ and $Q_{\infty}$. This is explained in detail in \cite{Grimm:2021ikg}. A major complication of the multi-variable case is that the expansion coefficients $g_{i,k_i}$ become moduli-dependent. Specifically, one has
\begin{equation}
    g_{i,k_i}=g_{i,k_i}\left(\frac{y_{i-1}}{y_i},\ldots,\frac{y_{1}}{y_2}\right)\,.
\end{equation}
In other words, for $i>1$, each $g_{i,k_i}$ is a function of all the previous $y_j$, with $j\leq i$. More precisely, each $g_{i,k_i}$ itself admits a power series expansion in $y_{i-1}/y_i,\ldots, y_1/y_2$. The origin of this additional moduli-dependence can roughly be understood as follows. In order to compute $h_1$, one should apply the one-variable bulk reconstruction procedure using the data $\delta_{(1)}$ and $N_{(1)}^+$. However, due to the nature of the inductive construction, in order to compute $h_2$, one should then apply the one-variable bulk reconstruction procedure using the data $\delta_{(2)}$ and $h_1 N_{(2)}^+ h_1^{-1}$. In other words, one should first `translate' the operator $N_{(2)}^+$ using the adjoint action of $h_1$, which introduces an additional moduli dependence. The same principle then extends in the natural way to all $h_i$. The final result for the map $h(y_1,\ldots, y_m)$ that is obtained via this recursive procedure should therefore be seen as a very non-trivial collection of nested power series. 

\subsubsection*{A useful rewriting}
Let us present a useful rewriting of the map $h(y_1,\ldots, y_m)$. In fact, for our purposes it will be more important to consider the form of the inverse operator $h^{-1}$, which admits a similar expansion
\begin{equation}
    h^{-1} =h_1^{-1}\cdots h_m^{-1}\,,
\end{equation}
where
\begin{equation}
\label{eq:hinv_before_commutation}
    h_i^{-1} = \left(\frac{y_{i}}{y_{i+1}}\right)^{\frac{1}{2}N^0_{(i)}}\left[\sum_{k_i=0}^\infty \left(\frac{y_{i}}{y_{i+1}}\right)^{-k_i} f_{i,k_i}\right]\,,\qquad f_{i,0}\equiv 1\,.
\end{equation}
Of course, the expansion coefficients $f_{i,k_i}$ can be straightforwardly related to the $g_{i,k_i}$ via an order-by-order inversion. For practical purposes, it will be useful to commute the factor left of the square brackets \eqref{eq:hinv_before_commutation} to the right by employing the weight-decomposition \eqref{eq:sl2_decomp_operator} of the $f_{i,k_i}$ coefficients. This yields the following expression
\begin{equation}
\label{eq:hinv}
\boxed{ \rule[-.7cm]{0cm}{1.6cm}
\quad     h^{-1} = \sum_{k_1,\ldots, k_m=0}^\infty \sum_{s^1,\ldots, s^m}\prod_{i=1}^m \left[\left(\frac{y_i}{y_{i+1}}\right)^{-k_i+\frac{1}{2}s_i^{(m)}}f^{s^i}_{i,k_i}\right] h^{-1}_{\mathrm{Sl}(2)}\,,\quad }
\end{equation}
where we introduced the notation
\begin{equation}
    s_i^{(m)} = s_i^{i}+\cdots +s_i^{m}\,,
\end{equation}
and identified
\begin{equation}
    h^{-1}_{\mathrm{Sl}(2)} = e(y)=\prod_{i=1}^m \left(\frac{y_i}{y_{i+1}}\right)^{\frac{1}{2}N_{(i)}^0}\,,
\end{equation}
which corresponds to the $\mathrm{Sl}(2)$-orbit approximation of the inverse period map. It is usually denoted by $e(y)$ and we will use this notation as well. 

\subsubsection*{Properties of $f_{i,k_i}$}
We end our discussion by stating two important properties of the expansion functions $f_{i,k_i}$. Both of these properties will feature prominently in the finiteness proof of self-dual vacua, as will be explained in section \ref{subsec:self-dual_proof}. Recalling the notation introduced in \eqref{eq:sl2_decomp_operator}, the first property is
\begin{equation}
\label{eq:expansion_coeff_weights}
    f_{i,k_i}^{s^i} = 0\,,\qquad s^i=(s_1^i,\ldots, s_m^i)\,,\qquad \text{if $s_i^i> k_i-1$ or $s_j^i\neq 0$ for $j>i$}\,.
\end{equation}
In words, this means that the last non-trivial weight of each $f_{i,k_i}$ is given by $s_i^i$, and that the value of this weight is restricted by the order $k_i$ at which this coefficient appears in the expansion. In particular, this implies that the second sum in \eqref{eq:hinv} only runs over $s_i^i\leq k_i-1$. The attentive reader may note that the properties \eqref{eq:expansion_coeff_weights} are somewhat similar to those of the phase operators $\delta_{(i)}$, see in particular \eqref{eq:properties_delta_2}. In fact, the former are a partial consequence of the latter, illustrating the importance of the properties of the phase operators within the bulk reconstruction procedure.

While the first property \eqref{eq:expansion_coeff_weights} restricts the possible weights of the expansion functions, the second property gives a bound on the scaling of the expansion functions in terms of their weights
    \begin{equation}  
    \label{eq:f_bounds_new}
    f_{i,k_i}^{s^i}\prec \prod_{j=1}^{i-1} \left(\frac{y_j}{y_{j+1}}\right)^{-s_j^i}\,,\qquad 2\leq i\leq m\,.
    \end{equation}
For the case $i=1$ the coefficients are simply constant matrices. Here the notation $\prec$ means the following: for two functions $f,g$ we write $f\prec g$ when $f$ is bounded by a constant multiple of $g$. The relation \eqref{eq:f_bounds_new} effectively describes the scaling of the various $f_{i,k_i}$ corrections in terms of their $\mathrm{sl}(2)$-weights. We would like to mention that, to our knowledge, the bound \eqref{eq:f_bounds_new} has not been written down explicitly before, although it does follow directly from the proof of the $\mathrm{Sl}(2)$-orbit theorem of \cite{CKS}. We have therefore presented a proof of \eqref{eq:f_bounds_new} in Appendix \ref{app:additional_proofs}. In simple terms, one needs to use the details of how the expansion coefficients are derived from the boundary data \eqref{eq:boundary_data} and, in particular, use some of the restrictions on the $\mathrm{sl}(2)$-weights of the phase operators $\delta_{(i)}$.

\subsubsection{Asymptotic Hodge inner products}
\label{subsec:inner_prod}

With the result \eqref{eq:hinv} for the nilpotent orbit approximation of the period map at hand, it is possible to evaluate general Hodge inner products in great generality. Indeed, given two elements $v,w\in H$, the nilpotent orbit approximation of their Hodge inner product can be expressed as
\begin{equation}
    \langle v, w\rangle_{\mathrm{nil}} = \left(v, C_{\mathrm{nil}} \bar{w}\right) = \left(v, h C_\infty h^{-1}\bar{w}\right) = \left(h^{-1}, C_\infty h^{-1}\bar{w}\right) = \langle h^{-1}v, h^{-1}w\rangle_\infty\,.
\end{equation}
In particular, all the moduli-dependence of the Hodge inner product is captured by the factors of $h^{-1}$. In general, the expressions that result from inserting \eqref{eq:hinv} can become rather involved, in particular due to the fact that many cross-terms can emerge. 

\subsubsection*{Central charge of D3-particles}
There is at least one case, however, in which the resulting expressions simplify nicely. This is the case where one is considering generic Hodge inner products between the period vector $\mathbf{\Pi}$, or equivalently the $(D,0)$-form $\Omega$, and an element of a definite $\mathrm{sl}(2)$-weight, denoted by $q$ in the following. This is of particular relevance when studying BPS states that arise from D3-branes wrapping a particular class of three-cycles in a Calabi--Yau threefold, in the context of type IIB string theory compactifications. Indeed, for a given BPS state, parametrized by a charge vector $q\in H^3(Y_3,\mathbb{Z})$, its mass is given by
\begin{equation}
\label{eq:central_charge}
    |\mathcal{Z}| = \frac{|\langle q,\Omega\rangle|}{||\Omega||}\,,
\end{equation}
where $\mathcal{Z}$ denotes the central charge of the BPS state, and $\Omega$ the holomorphic $(3,0)$-form on $Y_3$. Let us evaluate \eqref{eq:central_charge} in the nilpotent orbit approximation. First, since $\Omega\in H^{3,0}$, we may apply the relation \eqref{eq:bulk-reconstruction} and write
\begin{equation}
    h^{-1}\Omega = f\cdot \Omega_\infty\,,
\end{equation}
for some moduli-independent element $\Omega_\infty\in H^{3,0}_\infty$. Since \eqref{eq:bulk-reconstruction} is a vector-space identity, there is of course the freedom of an overall (possibly moduli-dependent) scaling, which we parametrize by the function $f$. Importantly, in the expression \eqref{eq:central_charge} for the central charge, this overall factor will drop out. Indeed, one finds
\begin{equation}
    |\mathcal{Z}| = \frac{|\langle h^{-1}q, h^{-1}\Omega\rangle_\infty|}{||h^{-1}\Omega||_\infty} = \frac{|\langle h^{-1}q, \Omega_\infty\rangle_\infty|}{||\Omega_\infty||_\infty}\,.
\end{equation}
Assuming then, for simplicity, that $q$ has a definite $\mathrm{sl}(2)$ weight given by $\ell$, and applying the result \eqref{eq:hinv} one finds 
\begin{equation}
\label{eq:central_charge_nilpotent}
    | \mathcal{Z} |= \prod_{i=1}^m \left(\frac{y_i}{y_{i+1}}\right)^{\frac{1}{2}\ell_i}\left|\sum_{k_1,\ldots, k_m=0}^\infty\sum_{s^1,\ldots, s^m} \left[\prod_{i=1}^m \left(\frac{y_i}{y_{i+1}} \right)^{-k_i+\frac{1}{2}s_i^{(m)}}\right] \frac{\langle f_{1,k_1}^{s^1}\cdots f_{m,k_m}^{s^m} q, \Omega_\infty\rangle_\infty}{||\Omega_\infty||_\infty}\right|
\end{equation}
where, for simplicity, we have set the axions $x^i$ to zero. Note that the overall prefactor in \eqref{eq:central_charge_nilpotent} comes from evaluating the action of $h^{-1}_{\mathrm{Sl}(2)}$ on the weight-eigenvector $q$. Focusing for the moment on the leading term in \eqref{eq:central_charge_nilpotent}, one may write
\begin{equation}
    |\mathcal{Z}| = \prod_{i=1}^m \left(\frac{y_i}{y_{i+1}}\right)^{\frac{1}{2}\ell_i}\cdot \left|\frac{\langle q, \Omega_\infty\rangle_\infty}{||\Omega_\infty||_\infty}+\mathrm{corrections}\right|\,.
\end{equation}
In particular, the asymptotic behaviour of the mass of the BPS state, given by $|\mathcal{Z}|$, is determined by weight of the charge vector $q$. This first-order expression has been used in \cite{Bastian:2020egp}, see also \cite{Gendler:2020dfp}, to compute the charge-to-mass ratio for this class of BPS states, under the assumption that the pairing $\langle q,\Omega_\infty\rangle$ is non-vanishing. Physically, this assumption can be interpreted as the statement that the asymptotic coupling of the BPS state to the graviphoton is non-zero. However, if the pairing $\langle q,\Omega_\infty\rangle$ does vanish, then it becomes necessary to consider the correction terms coming from the nilpotent orbit expansion in order to properly characterize the scaling of the mass. Notably, depending on the choice of $q$, as well as the type of boundary that one is expanding around, it may happen that a state which naively appears to become massive in fact becomes massless as one approaches the boundary. It would be interesting to generalize the analysis of \cite{Bastian:2020egp} to include the sub-leading corrections using the multi-variable bulk reconstruction procedure.

\section{The Asymptotic Self-dual Locus}
\label{sec:self_dual_locus}
In section \ref{sec:finiteness_intro} we have presented a number of finiteness theorems for both Hodge vacua and self-dual vacua, from both a global and a local perspective. The aim of this section is to apply the results of section \ref{sec:asymp_Hodge_inner_products} to prove Theorem \ref{thm:finiteness_selfdual_nilpotent}, and address the finiteness of self-dual vacua in the nilpotent orbit approximation. Before discussing the general proof, we first restrict to a simple one-variable setting in section \ref{subsec:self-dual_locus_example} in order to exemplify some features of the vacuum locus using the abstract machinery introduced in section \ref{sec:asymp_Hodge_inner_products}. In section \ref{subsec:self-dual_proof}, we present the full  proof of Theorem \ref{thm:finiteness_selfdual_nilpotent}.

\subsection{Example: one-variable $\mathrm{Sl}(2)$-orbit}
\label{subsec:self-dual_locus_example}
Before delving into the detailed proof of the finiteness result, let us first consider a very simple case in which there is just a single modulus, so $m=1$, and all the expansion coefficients (except the leading ones) in the nilpotent orbit expansion \eqref{eq:hinv} vanish. Effectively, this case will correspond to a generalization of the example discussed in section \ref{subsec:finiteness_intro}, though written in more abstract language. Mathematically, the stated assumptions imply that the variation of Hodge structure under consideration is given by a one-variable $\mathrm{Sl}(2)$-orbit
\begin{equation}
\label{eq:sl2_orbit}
    F^p = e^{xN}y^{-\frac{1}{2}N^0}F_{\infty}^p\,.
\end{equation}
We would like to investigate the set of points in the moduli space where a given $v\in H_{\mathbb{Z}}$ is self-dual and $v$ has a bounded Hodge norm, as imposed by the tadpole condition. In this simple setting, it is straightforward to evaluate these two conditions explicitly.
\subsubsection*{Tadpole constraint}
The Hodge norm of $v$ is given by
\begin{equation}
    ||v||^2 = \sum_{\ell} y^\ell \,||\hat{v}_\ell||^2_\infty\,,\qquad \hat{v}=e^{-xN}v\,,
\end{equation}
where we recall the notation \eqref{eq:sl2_decomp_vector} for the weight-decomposition. We are interested in the properties of vacua close to the boundary, i.e.~for large values of the saxion $y$. Clearly, for sufficiently large $y$, it is necessary to impose that $\hat{v}_\ell=0$ for all $\ell>0$, in order for $||v||^2$ to not exceed the tadpole bound. In other words, beyond some critical value of $y$, the flux is only allowed to have non-positive weights with respect to the $\mathfrak{sl}(2)$ grading operator $N^0$. 

\subsubsection*{Self-duality condition}
Using the fact that $C_\infty$ interchanges the $+\ell$ and $-\ell$ eigenspaces of $N^0$, it is straightforward to see that the self-duality condition, projected onto a weight $\ell$ component, can be written as
\begin{equation}
\label{eq:self-duality_sl2_one-variable}
    y^{\frac{1}{2}\ell} \hat{v}_\ell = y^{-\frac{1}{2}\ell} C_\infty \hat{v}_{-\ell}\,.
\end{equation}
In particular, whenever $\hat{v}_\ell=0$ for $\ell>0$ the self-duality condition imposes that additionally $\hat{v}_\ell=0$ for $\ell<0$ (note that $C_\infty$ is invertible). Therefore, for sufficiently large $y$ it must be that $\hat{v}$ only has an $\ell=0$ component, so $\hat{v}=\hat{v}_0$. In particular, the tadpole condition reduces to
\begin{equation}
    ||\hat{v}_0||_\infty^2 \leq L\,.
\end{equation}
Recalling the fact that $\hat{v}=e^{-x N}v$, that the axion $x$ is bounded and that the flux $v$ is integral, it is clear that there are only finitely many choices of $v$ which satisfy the tadpole condition. Hence, there are finitely many vacua. 

\subsubsection*{The vacuum locus}
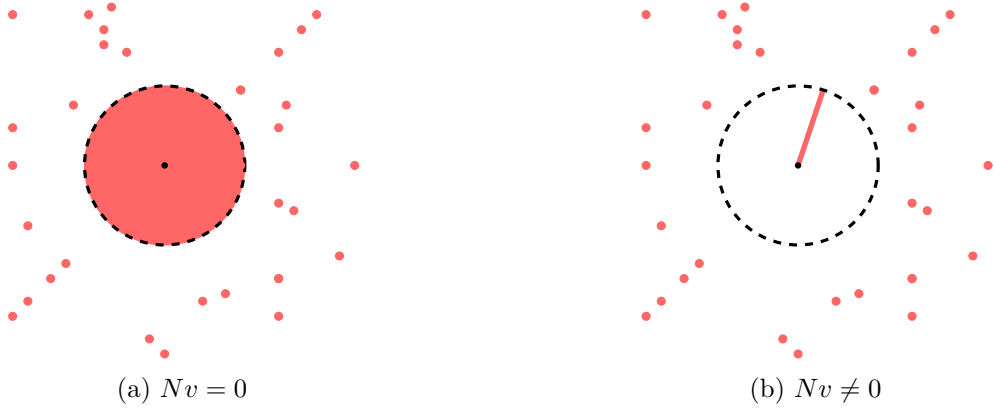
\begin{figure}[t]
\centering
\begin{subfigure}[b]{0.5\textwidth}
\centering
\begin{tikzpicture}
    \filldraw[color=red!60] (0,0) circle (30pt);    
    \filldraw[black] (0,0) circle (1pt);
    \draw[color=black, dashed, very thick] (0,0) circle (30pt);

    \foreach \i\j in {1.5 / 1.5, -0.7 / 2.1, 2.3 / -1.2, -1.8 / -1.8, 2.0 / 2.0,
 1.0 / 1.0, -1.2 / 0.8, 1.7 / -0.6, -0.6 / -0.6, 1.8 / 1.8,
 -0.2 / -2.3, 0.8 / -1.7, -2.0 / 0.0, 1.0 / 0.0, 0.0 / -2.5,
 -1.0 / 2.0, 2.5 / 0.0, -1.3 / -1.3, 1.5 / -0.5, -0.5 / 1.5,
 1.5 / -1.5, -2.0 / -2.0, 1.5 / 0.5,
 -0.8 / 1.8, 1.5 / -2.0, 1.5 / -1.5, -0.5 / -0.5, -1.8 / -0.8,
 1.0 / 1.0, 0.5 / -1.8, -0.8 / 0.0, -2.0 / 2.0, 0.5 / -0.5,
 -1.5 / -1.5, -2.0 / 0.5, 1.6 / 0.8, -0.8 / 1.6}
    {
    \filldraw[color=red!60] (\i,\j) circle (1.5pt);
    }
\end{tikzpicture}
\caption{$Nv=0$}
\label{fig:Hodge_locus}
\end{subfigure}
\hfill
\begin{subfigure}[b]{0.45\textwidth}
\centering
\begin{tikzpicture}
    \draw[color=red!60, line width = 0.7mm] (0,0)--(0.33,1);
    \filldraw[black] (0,0) circle (1pt);
    \draw[color=black, dashed, very thick] (0,0) circle (30pt);

    \foreach \i\j in {1.5 / 1.5, -0.7 / 2.1, 2.3 / -1.2, -1.8 / -1.8, 2.0 / 2.0,
 1.0 / 1.0, -1.2 / 0.8, 1.7 / -0.6, 1.8 / 1.8,
 -0.2 / -2.3, 0.8 / -1.7, -2.0 / 0.0, 0.0 / -2.5,
 -1.0 / 2.0, 2.5 / 0.0, -1.3 / -1.3, 1.5 / -0.5, -0.5 / 1.5,
 1.5 / -1.5, -2.0 / -2.0, 1.5 / 0.5,
 -0.8 / 1.8, 1.5 / -2.0, 1.5 / -1.5,  -1.8 / -0.8,
 1.0 / 1.0, 0.5 / -1.8, -2.0 / 2.0, 
 -1.5 / -1.5, -2.0 / 0.5, 1.6 / 0.8, -0.8 / 1.6}
    {
    \filldraw[color=red!60] (\i,\j) circle (1.5pt);
    }
\end{tikzpicture}
\caption{$Nv\neq 0$}
\label{fig:self-dual_locus}
\end{subfigure}
\caption{Schematic illustration of the distribution of self-dual vacua near a punctured disk (shown in red). Close enough to the singularity, dictated by the tadpole bound (indicated by the black dashed line), either (a) both axion and saxion are unstabilized, or (b) only the axion is stabilized, the latter case corresponding to a radial ray.}
\end{figure}
It is important to stress that for this simple example the only possible vacua close to the boundary have an unstabilized saxion. Indeed, since $\hat{v}$ only has an $\ell=0$ component, the self-duality condition \eqref{eq:self-duality_sl2_one-variable} simply becomes
\begin{equation}
    v = \left(e^{x N}C_\infty e^{-xN}\right)v\,,
\end{equation}
in which the saxion does not appear. Furthermore, one can make the following case distinction
\begin{enumerate}
    \item $Nv = 0$: In this case the self-duality condition reduces further to $v=C
    _\infty v$ and also the axion is unstabilized. The resulting vacuum locus is illustrated in figure \ref{fig:Hodge_locus}. In fact, it turns out that in this case $v$ actually corresponds to a Hodge class, as is explained in Appendix \ref{app:Hodge_locus}.
    \item $Nv\neq 0$: In this case a choice of $v$ uniquely fixes a value for the axion $x$. The vacuum locus therefore corresponds to a single angular ray in the disk. This is illustrated in figure \ref{fig:self-dual_locus}.  
\end{enumerate}
The general lesson of this simple one-variable example is the following: as one approaches the boundary of the moduli space, one is more and more restricted in the allowed fluxes, i.e.~the allowed $\mathrm{sl}(2)$ components of the fluxes, that can possibly satisfy the self-duality condition and the tadpole constraint. Eventually, the restrictions become so severe that one can directly show that there are only finitely many possibilities. As will be explained in the next section, a similar phenomenon happens in the multi-variable case. However, the restrictions on the fluxes become dependent on the sector of the moduli space in which the vacua are located. 

\subsection{Proof: finiteness of self-dual vacua}
\label{subsec:self-dual_proof}
For ease of reference, we repeat the exact theorem we aim to prove.
\setcounter{thm}{3}
\begin{thm}
    Let $t^i(n)\in\mathbb{H}^m$
    be a sequence of points such that $x^i(n)$ is bounded and $y^i(n)\rightarrow\infty$ as $n\rightarrow\infty$. Suppose furthermore that $v(n)\in H_{\mathbb{Z}}(L)$ is a sequence of integral fluxes with bounded self-intersection, such that
    \begin{equation}
    \label{eq:self-duality_nil}
        C_{\mathrm{nil}}(t(n))v(n) = v(n)\,,
    \end{equation}
    for all $n$. Then $v(n)$ can only take on finitely many values. 
\end{thm}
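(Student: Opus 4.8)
\emph{Overall strategy.} It suffices to show that every subsequence of $v(n)$ has a sub‑subsequence along which $v(n)$ is eventually constant; a sequence with this property takes only finitely many values. So fix such a subsequence. After relabelling the coordinates I would first assume that the points lie in a single growth sector, $y^1(n)\ge y^2(n)\ge\cdots\ge y^m(n)>1$, cf.\ \eqref{eq:growth_sector}. Since the relevant $\mathrm{sl}(2)$‑weights $\ell$ of vectors range over the finite box $\{-D,\dots,D\}^m$ (and those of operators over $\{-2D,\dots,2D\}^m$), I would pass to a further subsequence so that the finitely many monomials $w_a(n):=\prod_{i=1}^m\big(y^i(n)/y^{i+1}(n)\big)^{a_i}$ (with $y^{m+1}(n)\equiv 1$) each converge in $[0,\infty]$, and so that the bounded axions satisfy $x^i(n)\to x^i_\ast$. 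Call a weight $\ell$ \emph{increasing}, \emph{bounded}, or \emph{decreasing} according as $w_\ell(n)\to\infty$, $w_\ell(n)\to c_\ell\in(0,\infty)$, or $w_\ell(n)\to 0$; note that $\ell$ is decreasing precisely when $-\ell$ is increasing. This controlled choice of subsequence is what tames the path‑dependence.

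\emph{Setting up the two constraints.} Using the bulk reconstruction \eqref{eq:bulk-reconstruction}, I factor the axion dependence out of the Weil operator, $C_{\mathrm{nil}}(t)=e^{\sum_i x^iN_i}\,h(y)\,C_\infty\,h(y)^{-1}e^{-\sum_i x^iN_i}$, and set $\hat v(n):=e^{-\sum_i x^i(n)N_i}v(n)$, $u(n):=h(y(n))^{-1}\hat v(n)$. Then the self‑duality condition \eqref{eq:self-duality_nil} is equivalent to $C_\infty u(n)=u(n)$, and, $F^p_\infty$ being a genuine polarized Hodge structure, $\langle\cdot,\cdot\rangle_\infty$ is a moduli‑independent positive‑definite inner product for which $C_\infty$ is an isometry interchanging the weight‑$\ell$ and weight‑$(-\ell)$ subspaces of the decomposition \eqref{eq:sl2_decomp_vector}. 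Hence $u(n)_{-\ell}=C_\infty u(n)_\ell$, and by the tadpole bound $\|u(n)\|_\infty^2=\langle v(n),v(n)\rangle_{\mathrm{nil}}=(v(n),v(n))\le L$, so $u(n)$ is uniformly bounded with $\|u(n)_\ell\|_\infty=\|u(n)_{-\ell}\|_\infty$.

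\emph{Killing the increasing and decreasing weights.} Inserting the expansion \eqref{eq:hinv} of $h^{-1}$ and using the weight restrictions \eqref{eq:expansion_coeff_weights} together with the growth bounds \eqref{eq:f_bounds_new} on the coefficients $f^{s^i}_{i,k_i}$, I obtain the several‑variable Hodge‑norm estimate $\|v(n)\|^2_{\mathrm{nil}}\asymp\sum_\ell w_\ell(n)\,\|\hat v(n)_\ell\|^2_\infty$, with constants uniform along the subsequence; combined with $\|v(n)\|^2_{\mathrm{nil}}\le L$ this forces $\hat v(n)_\ell\to 0$ for every increasing $\ell$ and keeps $\hat v(n)_\ell$ bounded for bounded $\ell$. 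For the decreasing weights I would run the multi‑variable analogue of the one‑variable computation \eqref{eq:self-duality_sl2_one-variable}: projecting $C_\infty u(n)=u(n)$ (equivalently $C_{\mathrm{nil}}(t(n))v(n)=v(n)$) onto a fixed weight and re‑expressing everything through \eqref{eq:hinv} and the analogous expansion of $h$, the leading term ties $\hat v(n)_{-\ell}$ to $w_\ell(n)\,C_\infty\hat v(n)_\ell$, while by \eqref{eq:expansion_coeff_weights} and \eqref{eq:f_bounds_new} the corrections are strictly more suppressed and involve only ``less extreme'' weights. An induction on the weights, starting from the bounded ones and working outward, then shows $\hat v(n)_\ell\to 0$ for every decreasing $\ell$ as well, its increasing partner $-\ell$ having already been eliminated.

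\emph{Conclusion, and the hard part.} After a final passage to a subsequence, the bounded‑weight components of $\hat v(n)$ converge, so $\hat v(n)\to\hat v_\infty$ with $\hat v_\infty$ supported on bounded weights, whence $v(n)=e^{\sum_i x^i(n)N_i}\hat v(n)\to e^{\sum_i x^i_\ast N_i}\hat v_\infty$; a convergent sequence of integral vectors is eventually constant, which proves Theorem~\ref{thm:finiteness_selfdual_nilpotent}. The main obstacle is the third step: in the genuinely multi‑variable setting one must show that all the corrections generated by \eqref{eq:hinv} in the self‑duality relation are uniformly subleading and respect a partial order on weights along which the induction can close — and it is exactly here that the $\mathrm{sl}(2)$‑weight properties \eqref{eq:properties_delta_1}--\eqref{eq:properties_delta_2} of the phase operators, hence \eqref{eq:expansion_coeff_weights} and \eqref{eq:f_bounds_new}, are indispensable.
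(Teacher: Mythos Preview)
Your overall strategy matches the paper's: reduce to a subsequence in a fixed growth sector, use the $\mathrm{Sl}(2)$-norm comparison to eliminate the ``increasing'' weight components via the tadpole bound, and then invoke the full expansion \eqref{eq:hinv} together with the weight/growth properties \eqref{eq:expansion_coeff_weights}, \eqref{eq:f_bounds_new} of the $f_{i,k_i}$ to deal with the remaining ``decreasing'' components through the self-duality relation. You also correctly identify the third step as the hard one and the place where the phase-operator properties \eqref{eq:properties_delta_1}--\eqref{eq:properties_delta_2} enter.

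The organizational difference worth noting is how path-dependence is tamed. You pass to a subsequence along which each monomial $w_\ell(n)$ converges in $[0,\infty]$ and then trichotomize the \emph{weights}; the paper instead uses the quantization lower bound $\|\hat v_\ell\|_\infty^2>\lambda$ to partition the \emph{moduli space} into finitely many regions $R^{\mathrm{heavy}}_\ell$, $R^{\mathrm{light}}_\ell$ and handles each intersection separately. These are essentially equivalent, though the paper's version yields the explicit effective bound $\|\hat v_\ell\|_\infty^2\le L^2/\lambda$ in the easy case.

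Where your sketch diverges most from the paper is the mechanism for the hard case. You propose an ``induction on the weights, starting from the bounded ones and working outward.'' The paper does \emph{not} induct on weights. Instead, for a fixed $\ell$ with $\hat v_{-\ell}(n)=0$ (your increasing partner), it expands $(h^{-1}\hat v)_{-\ell}$ via \eqref{eq:hinv} and observes that every surviving term contains some $\hat v_{-\ell-s^{(m)}}(n)$; the mere \emph{non-vanishing} of the latter forces $y(n)\in R^{\mathrm{light}}_{-\ell-s^{(m)}}$, which directly bounds the corresponding monomial. Combined with \eqref{eq:f_bounds_new} and the observation that $-k_i+\tfrac12 s_i^i\le -\tfrac12 s_i^i-1$ for $k_i\neq 0$, one gets the clean estimate $\|(h^{-1}\hat v)_{-\ell}\|_\infty\prec w_\ell(n)^{1/2}\cdot\alpha(y)$ with $\alpha\to 0$, and then self-duality plus $e(y)h^{-1}\sim 1$ gives $\|\hat v_\ell(n)\|_\infty\prec\alpha(y)\to 0$ directly. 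This avoids the need for a partial order on weights along which an induction would close; your inductive scheme may well work, but you have not specified the order or shown that the corrections only feed in already-controlled weights, so as stated it is a gap rather than an alternative proof.
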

We emphasize again that our proof is restricted to the case where the variation of Hodge structure under consideration is described by a nilpotent orbit. In the one-variable case it is relatively straightforward to reduce to this case from the general setting of an arbitrary variation of Hodge structure by including an appropriate exponentially small correction term to the flux sequence $v(n)$ \cite{Schnellletter}. However, in the multi-variable case it is not clear whether a similar strategy can be applied.

In the following, we will often omit the argument in $y(n)$ and simply write $y$, to avoid cluttering the notation. As has been described a few times already, combining the self-duality condition with the tadpole bound on the self-intersection of $v(n)$ gives the following bound on the Hodge norm
\begin{equation}
\label{eq:bound_v(n)}
    ||v(n)||
    ^2\leq L\,.
\end{equation}
The strategy of the proof will be to show that in fact $v(n)$ is bounded with respect to the boundary Hodge norm $||\cdot||_\infty$. Then the desired finiteness follows from the fact that $v(n)$ is integral. We will divide the proof in several steps. 

\subsubsection*{Step 1: Boundedness of $\mathrm{Sl}(2)$-norm}
For the first step of the proof, we would like to translate the bound \eqref{eq:bound_v(n)} into a more detailed statement on the various $\mathrm{sl}(2)$-components of $v(n)$. Indeed, the fact that the Hodge norm of $v(n)$ is bounded implies that also its $\mathrm{Sl}(2)$-norm is bounded. This is reasonable, since one can view the latter as providing the leading approximation to the full Hodge norm. The crucial point is that the latter is also straightforward to evaluate explicitly and allows one to obtain the following bound (after possibly enlarging $L$)
\begin{equation}
\label{eq:boundedness_sl2}    \sum_{\ell}\left[\prod_{i=1}^m\left(\frac{y_i}{y_{i+1}}\right)^{\ell_i}\right]||\hat{v}_{\ell}(n)||_\infty^2\leq L\,,
\end{equation}
where we have introduced the notation
\begin{equation}
    \hat{v}(n) = \mathrm{exp}\left[-\sum_{i=1}^m x_i N_i\right]v(n)\,,
\end{equation}
and we recall that the $\hat{v}_\ell(n)$ denote the weight-components of $\hat{v}(n)$, as defined in \eqref{eq:sl2_decomp_vector}. Since the left-hand side of \eqref{eq:boundedness_sl2} consists of a sum of positive terms, this bound in fact applies for each $\ell$ separately. Because of this it will be natural to prove boundedness of each individual $\hat{v}_\ell$ component. In other words, we have the following
\begin{equation}
    \text{Goal:}\qquad \text{Show that $||\hat{v}_\ell(n)||_\infty$ is bounded, for each $\ell$.}
\end{equation}
Since the axions are assumed to take values on a bounded interval, this immediately implies that also each $||v_\ell(n)||_\infty$ is bounded. 

In the one-variable case ($m=1$) the relation \eqref{eq:boundedness_sl2} yields a natural separation of weight-components into the classes $\ell<0,\,\ell=0,\,\ell>0$, corresponding to fluxes whose Hodge norm tends to zero, stays constant, or grows as one approaches the boundary of the moduli space. In fact, this is the strategy that was used in \cite{Grimm:2020cda,Schnellletter} to prove finiteness for the one-variable case. However, in the multi-variable case such a separation is not available, as the scaling of the various terms in \eqref{eq:boundedness_sl2} highly depends on the exact hierarchy between $y_1,\ldots, y_m$, which in turn is highly path-dependent. For example, even though we do assume that $y_1>y_2>\cdots >y_m$, recall the growth sector \eqref{eq:growth_sector}, it is not necessarily the case that also $y_1>y_2^2$. Indeed, this is one of the main difficulties that were mentioned in section \ref{subsec:finiteness_intro}. In order to tackle the multi-variable case, we proceed in a different way by introducing a \textit{finite} partition of the moduli space $\mathbb{H}^m$ into subsectors on which the scaling behaviour of the various $\ell$-components is under control. Subsequently, the proof will proceed by considering the different types of subsectors individually. 

\subsubsection*{Step 2: Reduction to subsectors}
The key insight is to use the quantization of the fluxes, together with tadpole bound as formulated in \eqref{eq:boundedness_sl2}, to construct the desired partition of the moduli space. First, we note that since the flux $v(n)$ is integral, there exists a constant $\lambda>0$ such that
\begin{equation}
\label{eq:lower_bound_Hodge_infty}
    ||\hat{v}_{\ell}(n)||_\infty^2>\lambda\,,
\end{equation}
for all $n$, whenever $\hat{v}_\ell(n)$ is non-zero. In practice, one can construct $\lambda$ from the smallest eigenvalue of the boundary Hodge norm, and furthermore take $0<\lambda\leq 1$.\footnote{The reason is that, in general, the eigenvalues of the boundary Hodge norm come in pairs $(\lambda_i, \lambda_i^{-1})$, with each $\lambda_i> 0$.} In order to illustrate this rather abstract statement, we have included some basic examples of boundary Hodge norms in Appendix \ref{app:Hodge_norms}, for which one can write down the constant $\lambda$ explicitly. 

We now combine the relation \eqref{eq:lower_bound_Hodge_infty} together with the tadpole bound to define the desired partition of the moduli space into sectors. For each $\ell$, we define a sector in $\mathbb{H}^m$ by
\begin{equation}    R^{\mathrm{heavy}}_{\ell} = \left\{(y_1,\ldots,y_m):\,\prod_{i=1}^m \left(\frac{y_i}{y_{i+1}}\right)^{\ell_i}> \frac{L}{\lambda}\right\}\,,
\end{equation}
and we define another sector $R^{\mathrm{light}}_{\ell}$ as the complement of $R^{\mathrm{heavy}}_{\ell}$. In other words, for each choice of $\ell$, we split up the moduli space into two disjoint pieces. See figure \ref{fig:subsectors} for an illustration of these sectors in the case of two moduli. The motivation for this definition is as follows. Clearly, if the sequence $y(n)$ lies entirely inside the region $R^{\mathrm{heavy}}_{\ell}$ and also $v_\ell(n)$ is non-zero, for some fixed $\ell$, then 
\begin{equation}
    \sum_{\ell'} \left[\prod_{i=1}^m \left(\frac{y_i}{y_{i+1}}\right)^{\ell'_i}\right]||\hat{v}_{\ell'}(n)||^2_\infty\geq \prod_{i=1}^m \left(\frac{y_i}{y_{i+1}}\right)^{\ell_i}||\hat{v}_{\ell}(n)||^2_\infty> \frac{L}{\lambda}\lambda=L\,,
\end{equation}
which is in contradiction with the bound \eqref{eq:boundedness_sl2}. In other words, inside the region $R^{\mathrm{heavy}}_{\ell}$, the weight-$\ell$ component of $v(n)$ would have a Hodge norm that exceeds the tadpole bound. By passing to a subsequence, we may therefore assume that the sequence $y(n)$ lies entirely inside the region $R^{\mathrm{light}}_{\ell}$.

\begin{figure}
    \centering
    \includegraphics[scale=0.6]{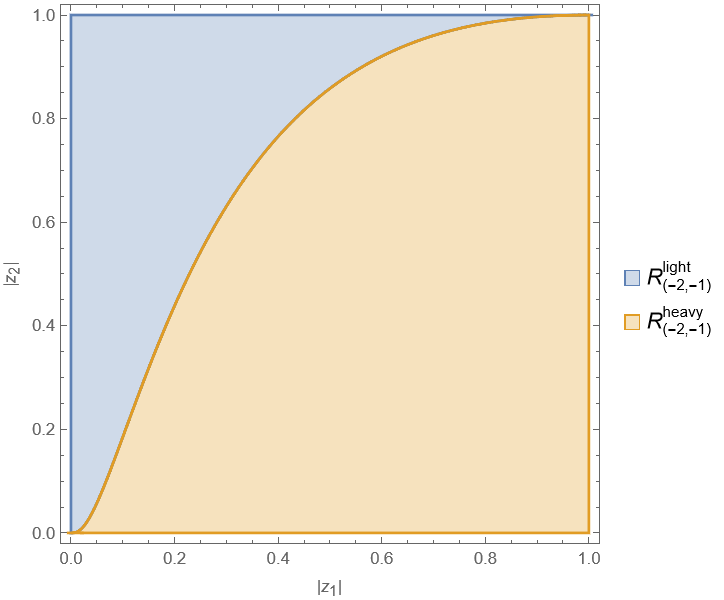}
    \caption{Depiction of the subsectors $R^{\mathrm{light}}_\ell$ (in blue) and $R^{\mathrm{heavy}}_\ell$ (in orange), in terms of the disk coordinates. Here we chose the weight vector to be $\ell=(-2,-1)$ and took $\frac{L}{\lambda}=2$. Note that the growth sector \eqref{eq:growth_sector} translates to the region $|z_1|< |z_2|$.}
    \label{fig:subsectors}
\end{figure}

We now come to the central point of the proof. Since we may assume that $y(n)$ lies entirely inside $R^{\mathrm{light}}_\ell$, we have the upper bound
\begin{equation}
    \prod_{i=1}^m \left(\frac{y_i}{y_{i+1}}\right)^{\ell_i}\leq \frac{L}{\lambda}\,.
\end{equation}
However, since there appears to be no obvious lower bound, the scaling factor that accompanies each $||\hat{v}_\ell(n)||^2_\infty$ in \eqref{eq:boundedness_sl2} could become arbitrarily small. As a result, it appears that some $\hat{v}_\ell(n)$ can be made arbitrarily large, without exceeding the tadpole bound. Note that this is very similar in spirit to the toy example discussed in section \ref{subsec:finiteness_intro}, as well as the example discussed in section \ref{subsec:self-dual_locus_example}. Of course, the missing ingredient that we have not yet exploited is the self-duality condition. To this end, it will be useful to make a further case distinction:
\begin{enumerate}[a.)]
    \item $y(n)\in R^{\mathrm{light}}_{-\ell}$.
    \item $y(n)\in R^{\mathrm{heavy}}_{-\ell}$.
\end{enumerate}
Clearly this covers all possible cases. The motivation for this additional case distinction is that, in order to address the fate of a $\hat{v}_{\ell}$ component, it is actually necessary to consider the $\hat{v}_{-\ell}$ component as well. This is because these two components are related via the self-duality condition. The first case is easy hence we will discuss it first. 

\subsubsection*{Step 3a: Case $y(n)\in R^{\mathrm{light}}_{-\ell}$}
Since $y(n)\in R_{-\ell}^{\mathrm{light}}$ we have that
    \begin{equation}
        \prod_{i=1}^m \left(\frac{y_i}{y_{i+1}}\right)^{-\ell_i}\leq \frac{L}{\lambda}\quad\text{or, equivalently} \quad \prod_{i=1}^m \left(\frac{y_i}{y_{i+1}}\right)^{\ell_i}\geq  \frac{\lambda}{L}\,.
    \end{equation}
In other words, inside the region $R^{\mathrm{light}}_{-\ell}$ we do get a lower bound on the scaling factors. In turn, this provides an upper bound on $||\hat{v}_\ell(n)||_\infty^2$, as is seen explicitly as follows
    \begin{equation}
        L\geq \sum_{\ell'} \prod_{i=1}^m \left(\frac{y_i}{y_{i+1}}\right)^{\ell'_i}||\hat{v}_{\ell'}(n)||^2_\infty\geq \prod_{i=1}^m \left(\frac{y_i}{y_{i+1}}\right)^{\ell_i}||\hat{v}_{\ell}(n)||^2_\infty\geq \frac{\lambda}{L}||v_{\ell}(n)||_\infty^2\,.
    \end{equation}
Hence, one finds the upper bound
    \begin{equation}
    \label{eq:bound_Hodge_case_a}
        ||\hat{v}_{\ell}(n)||_\infty^2 \leq \frac{L^2}{\lambda}\,.
    \end{equation}
Therefore, within the region $R^{\mathrm{light}}_{-\ell}$, we immediately obtain the desired bound on the boundary Hodge norm of $\hat{v}_\ell(n)$, in terms of an `effective tadpole bound' given by the combination $L^2/\lambda$. We stress that this bound depends on both the original tadpole bound $L$, as well as the constant $\lambda$ introduced in \eqref{eq:lower_bound_Hodge_infty}, where the latter reflects the quantization condition of the fluxes and depends on the properties of the boundary Hodge structure. It is also important to note that, in the derivation of the bound \eqref{eq:bound_Hodge_case_a} it has not been necessary to use the self-duality condition. Indeed, in this particular case the resulting finiteness of the fluxes should not be viewed as a property of only the vacuum locus. In particular, it is not obvious whether one could use the refined tadpole bound $L^2/\lambda$ to obtain an accurate estimate for the number of flux vacua in this region of the moduli space. 

\subsubsection*{Step 3b: Case $y(n)\in R^{\mathrm{heavy}}_{-\ell}$ }
This case is more involved and comprises the most difficult part of the whole proof. This is because in this case there is no obvious lower bound for the scaling factor. Instead, the only information at our disposal is that $y(n)$ lies inside $R^{\mathrm{heavy}}_{-\ell}$, which implies that $v_{-\ell}(n)=0$ by the reasoning in step 2. The strategy will be to combine this fact together with an explicit evaluation of the self-duality condition using the nilpotent orbit expansion \eqref{eq:hinv}. Indeed, recalling the definition of the map $h(y_1,\ldots, y_m)$, and using the fact that the boundary Weil operator $C_\infty$ exchanges the $+\ell$ and $-\ell$ eigenspaces, the self-duality condition \eqref{eq:self-duality_nil} can be written as 
\begin{equation}
\label{eq:duality_weights}
    \left(h^{-1}\hat{v}(n)\right)_{\ell}=C_\infty\left(h^{-1}\hat{v}(n)\right)_{-\ell}\,,
\end{equation}
The strategy, then, will be to first evaluate $\left(h^{-1}\hat{v}(n)\right)_{-\ell}$ explicitly to derive its scaling with the moduli and then use the relation \eqref{eq:duality_weights} to infer information about $\left(h^{-1}\hat{v}(n)\right)_\ell$ and subsequently $\hat{v}_{\ell}(n)$ itself. To proceed, we therefore apply the result for $h^{-1}$ stated in equation \eqref{eq:hinv} to find
\begin{equation}
\label{eq:hinv_b}
\left(h^{-1}\hat{v}(n)\right)_{-\ell}=\sum_{k_1,\ldots, k_m=0}^\infty \sum_{s^1,\ldots, s^m}\prod_{i=1}^m \left[\left(\frac{y_i}{y_{i+1}}\right)^{-k_i+\frac{1}{2}s_i^{(m)}}f^{s^i}_{i,k_i}\right]\left(e(y)\hat{v}(n)\right)_{-\ell-s^{(m)}}\,.
\end{equation}
where we recall the notation
\begin{equation}    s^{(m)}=\left(s_1^{(m)},\ldots, s_m^{(m)}\right)\,,\qquad s_i^{(m)}=s_i^i+\cdots +s_i^m\,,
\end{equation}
and note that the second sum in \eqref{eq:hinv_b} runs over all possible values of the $m$ weight vectors $s^j$, see also \eqref{eq:sl2_decomp_operator}. Due to the particular weight properties of the expansion coefficients $f_{i,k_i}$, the sum only runs over $s_i^i\leq k_i-1$ and $s_j^i=0$ for $j>i$. The leading contribution to \eqref{eq:hinv_b} is given by the $k_1,\ldots, k_m=0$ term, and is proportional to $\hat{v}(n)_{-\ell}$. The crucial point is that, because we are considering the case where $y(n)\in R^{\mathrm{heavy}}_{-\ell}$, this leading contribution vanishes. In other words, in order to properly assess the scaling of $\left(h^{-1}\hat{v}(n)\right)_{-\ell}$ it is necessary to understand the scaling of the correction coefficients $f_{i,k_i}$. In particular, we will make use of the result \eqref{eq:f_bounds_new}.

We now come to the main technical computation, namely the estimation of the scaling of the term in brackets in \eqref{eq:hinv_b}. To this end we make two additional observations. 
\begin{enumerate}
    \item First, note that
\begin{equation}
    \left(\frac{y_i}{y_{i+1}}\right)^{-k_i+\frac{1}{2}s_i^i}\prec \left(\frac{y_i}{y_{i+1}}\right)^{-\frac{1}{2}s_i^i}\cdot\begin{cases}
    \left(\frac{y_i}{y_{i+1}}\right)^{-1}\,,&k_i\neq 0\,,\\
    1\,, & k_i=0\,.
    \end{cases}
\end{equation}
This follows from the fact that for $k_i\neq 0$, there is the restriction $s_i^i\leq k_i-1$, while for $k_i=0$ one automatically has $s_i^i=0$. 
\item Second, we note that $\hat{v}_{-\ell-s^{(m)}}(n)$ is only non-zero if the sequence $y(n)$ lies in $R^{\mathrm{light}}_{-\ell-s^{(m)}}$, or, in other words, when the factor
\begin{equation*}    \prod_{i=1}^m\left[\left(\frac{y_i}{y_{i+1}}\right)^{-\left(\ell_i+s_i^{(m)}\right)}\right]
\end{equation*}
is bounded by a constant. In particular, we may apply this to all the terms appearing in \eqref{eq:hinv_b}.
\end{enumerate}

Now suppose, for the moment, that all $k_i$ are non-zero, then combining these observations with the bounds stated in \eqref{eq:f_bounds_new}, we find the following estimate
\begin{align*}    \prod_{i=1}^m\left[\left(\frac{y_i}{y_{i+1}}\right)^{-k_i+\frac{1}{2}s_i^{(m)}}f^{s^i}_{i,k_i}\right]&\stackrel{\text{(a)}}{\prec}  \prod_{i=1}^m\left[\left(\frac{y_i}{y_{i+1}}\right)^{-1-\frac{1}{2}s_i^{i}+\frac{1}{2}(s_i^{i+1}+\cdots +s_i^m)}\prod_{j=1}^{i-1} \left(\frac{y_j}{y_{j+1}}\right)^{-s_j^i}\right]\\
    &\stackrel{\text{(b)}}{=} \prod_{i=1}^m\left[\left(\frac{y_i}{y_{i+1}}\right)^{-1-\frac{1}{2}s_i^{(m)}}\right]\\    &\stackrel{\text{(c)}}{=}\prod_{i=1}^m\left[\left(\frac{y_i}{y_{i+1}}\right)^{\frac{1}{2}\ell_i}\right]\cdot \underbrace{\prod_{i=1}^m\left[\left(\frac{y_i}{y_{i+1}}\right)^{-\frac{1}{2}\left(\ell_i+s_i^{(m)}\right)}\right]}_{\prec 1}\cdot \prod_{i=1}^m\left[\left(\frac{y_i}{y_{i+1}}\right)^{-1}\right]\\
    &\stackrel{\text{(d)}}{\prec} \prod_{i=1}^m\left[\left(\frac{y_i}{y_{i+1}}\right)^{\frac{1}{2}\ell_i}\right]\cdot y_1^{-1}\,.
\end{align*}
To be clear, in step (a) we used \eqref{eq:f_bounds_new} and applied the first observation, in step (b) we simply collected all the terms, in step (c) we expanded the product to uncover the middle term and in step (d) we applied the second observation stating that the middle term is bounded. 

If, in contrast, $k_{i}=0$ for some $i$, the only difference is that the corresponding factor of $\left(y_{i}/y_{i+1}\right)^{-1}$ will not be present, see again the first observation. For example, if $k_1=0$ but all other $k_i$ are non-zero, one will instead get
\begin{equation}
    k_1=0:\qquad \prod_{i=1}^m\left[\left(\frac{y_i}{y_{i+1}}\right)^{-k_i+\frac{1}{2}s_i^{(m)}}f^{s^i}_{i,k_i}\right]\prec \prod_{i=1}^m\left[\left(\frac{y_i}{y_{i+1}}\right)^{\frac{1}{2}\ell_i}\right]\cdot y_2^{-1}\,.
\end{equation}
In particular, the factor of $y_1^{-1}$ is now replaced by a factor of $y_2^{-1}$. A similar thing happens when multiple $k_i$'s are equal to zero. The important point is that one always ends up with some rational factor which goes to zero as all $y_i\rightarrow\infty$.\footnote{Here it is important to recall that sequence $y(n)$ is restricted to lie inside the growth sector \eqref{eq:growth_sector}.} It remains to consider the term in which all $k_i$ are zero. However, as said before, this leading term vanishes since we have assumed $y(n)\in R^{\mathrm{heavy}}_{-\ell}$. To summarize, we have argued that
\begin{equation}
    ||(h^{-1}\hat{v}(n))_{-\ell}||_\infty \prec \prod_{i=1}^m\left[\left(\frac{y_i}{y_{i+1}}\right)^{\frac{1}{2}\ell_i}\right]\cdot\alpha(y_1,\ldots,y_m)\,,
\end{equation}
where $\alpha(y_1.\ldots,y_m)$ is a rational function of $y_1,\ldots, y_m$ that goes to zero as $n\rightarrow\infty$. To complete the argument, we now apply the duality condition \eqref{eq:duality_weights} to find
\begin{equation}
    ||(h^{-1}\hat{v}(n))_{\ell}||_\infty \prec \prod_{i=1}^m\left[\left(\frac{y_i}{y_{i+1}}\right)^{\frac{1}{2}\ell_i}\right]\cdot\alpha(y_1,\ldots,y_m)\,,
\end{equation}
Moving the term in square brackets to the left-hand side and noting that $e(y)h^{-1}\sim 1$, we find the result
\begin{equation}
    ||v_{\ell}(n)||_\infty\prec \alpha(y_1,\ldots,y_m)\,,
\end{equation}
where we have again used the fact that the axions $x_i(n)$ are bounded to remove the hat. In particular, we have shown that the sequence $v_{\ell}(n)$ is bounded with respect to the boundary Hodge norm. In fact, since $v(n)$ is integral, it cannot become arbitrarily small, hence after some finite $n$ we must in fact have that $v_{\ell}(n)=0$.

\subsubsection*{Step 4: Finishing the proof}
Let us collect the results so far. For a fixed $\ell$, we have effectively shown that
\begin{itemize}
    \item $y(n)\in R^{\mathrm{heavy}}_{\ell}$: $v_{\ell}(n) =0$. 
    \item  $y(n)\in R^{\mathrm{light}}_{\ell}\cap R^{\mathrm{light}}_{-\ell}$: $||v_{\pm \ell}(n)||_\infty$ is bounded. More precisely,
    \begin{equation}
        ||\hat{v}_{\pm\ell}(n)||_\infty^2\leq\frac{L^2}{\lambda}\,.
    \end{equation}
    \item  $y(n)\in R^{\mathrm{light}}_{\ell}\cap R^{\mathrm{heavy}}_{-\ell}$: $v_{-\ell}(n)=0$. Furthermore, there exists an $n'$ such that for all $n>n'$ we have that  $v_{\ell}(n)$ vanishes.
\end{itemize}
\begin{figure}
    \centering
    \includegraphics[scale=0.6]{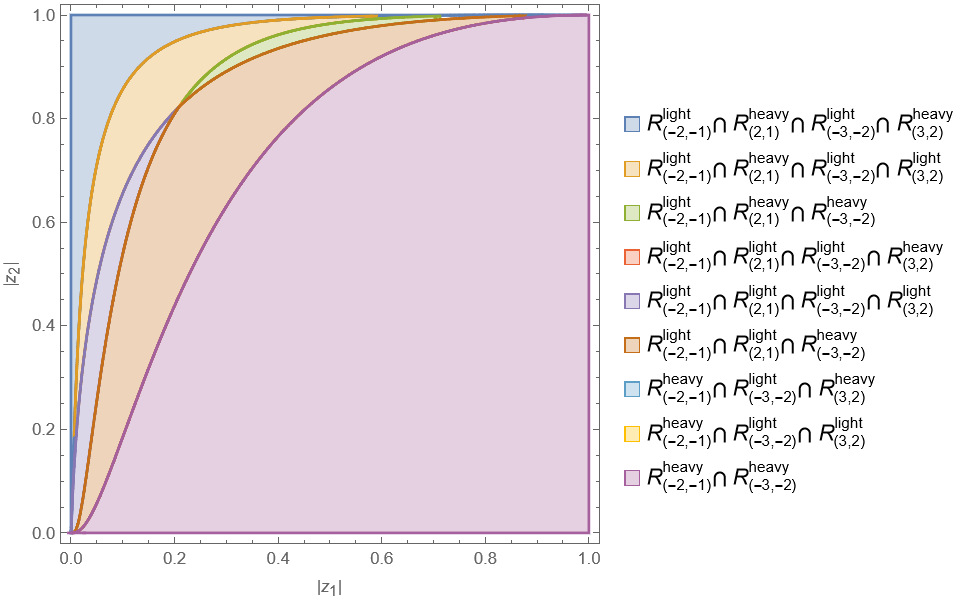}
    \caption{Depiction of the nine disjoint subsectors that arise from taking intersections between $R^{\mathrm{light}}_\ell$ and $R^{\mathrm{heavy}}_\ell$ for various values of $\ell$. Again we have taken $\frac{L}{\lambda}=2$. Note that some of the possible intersections, such as the one in red, cover only a region near small values of $|z_1|$ and $|z_2|$, and are therefore not visible due to the limited resolution. }
    \label{fig:subsectors2}
\end{figure}
This covers all possibilities. Therefore, the sequence $v_{\ell}(n)$ is bounded with respect to the boundary Hodge norm in all sectors. Furthermore, for $n$ sufficiently large, the only way in which it can attain non-zero values is if $y(n)\in R^{\mathrm{light}}_{\ell}\cap R^{\mathrm{light}}_{-\ell}$. One may now simply apply this argument for all possible values of $\ell$, by considering subsequences $y(n)$ lying in all possible intersections of subsectors. For example, one might start with the weight $\ell=(-2,-1)$  and consider the three spaces
\begin{equation}
    R^{\mathrm{light}}_{(-2,-1)}\cap R^{\mathrm{heavy}}_{(2,1)}\,,\qquad R^{\mathrm{light}}_{(-2,-1)}\cap R^{\mathrm{light}}_{(2,1)}\,,\qquad R^{\mathrm{heavy}}_{(-2,-1)}\,,
\end{equation}
which exactly cover the three cases listed above. Then one considers the same three spaces but for $\ell=(-3,-2)$, and constructs all nine pairwise intersections between these spaces. This is illustrated in figure \ref{fig:subsectors2}. One then continues this process ranging over the total number $\#$ of possible values of $\ell$, yielding at most $3^{\#}$ disjoint sectors.\footnote{Note that some intersections may be empty.} Importantly, this procedure always results in a finite partition. Therefore, it suffices to consider a finite number of subsequences of $y(n)$, each lying in a fixed intersection. In this way we conclude that $||v_{\ell}(n)||_\infty$ is bounded for all $\ell$ throughout all sectors. Combining this with the fact that $v(n)$ is integral completes the proof. 

\section{Conclusions}
In this paper we have studied various features of the vacuum locus of low-energy effective theories that originate from compactifications of string theory, focusing on the specific setting of type IIB/F-theory flux compactifications that give rise to four-dimensional $\mathcal{N}=1$ supergravity theories. In this regard, the purpose of our work has been three-fold. First, we have presented a number of known finiteness theorems for the landscape of flux vacua, which apply to the classes of self-dual flux vacua and Hodge vacua. Second, we have provided new insights into the structure of asymptotic Hodge inner products using the methods of asymptotic Hodge theory, and applied these results to obtain an alternative proof for the finiteness of self-dual flux vacua in the multi-variable case, when restricting to the nilpotent orbit approximation. Finally, we have proposed three mathematical conjectures which address finer details of the locus of flux vacua, such as its expected number of connected components, geometric complexity and dimensionality. 

Let us summarize the main technical results of the paper. The first result is a general formula for the nilpotent orbit expansion of the (inverse) period map, see equation \eqref{eq:hinv}, together with a characterization of the scaling of the expansion coefficients $f_{i,k_i}$ in terms of their $\mathfrak{sl}(2)$-weights, see equations \eqref{eq:expansion_coeff_weights} and \eqref{eq:f_bounds_new}. Notably, these results are valid for an arbitrary number of complex structure moduli, and apply to any asymptotic region in the moduli space. Indeed, the expansion is entirely determined by a simple set of boundary data \eqref{eq:boundary_data}, which can be assigned to any such boundary, via a multi-variable generalization of the bulk reconstruction procedure of \cite{CKS}, see also \cite{Grimm:2021ikg}. In this work we have mainly focused on the results that follow from this procedure, and refer the reader to the upcoming doctoral thesis of the second author for the computational details on how one can concretely perform the procedure. As a direct application, we compute the resulting asymptotic expansion for the central charge of D3-particles in type IIB compactifications, see equation \eqref{eq:central_charge_nilpotent}. Our result generalizes the expressions used in \cite{Grimm:2018ohb,Grimm:2018cpv,Bastian:2020egp} beyond the strict asymptotic regime, and is additionally applicable when the asymptotic coupling of the D3-particle to the graviphoton vanishes, see also \cite{Palti:2021ubp} for a related discussion. Another possible future application of the asymptotic expansion of Hodge inner products is a refined classification of possible F-theory scalar potentials, building on the work \cite{Grimm:2019ixq}. In particular, it would be interesting to revisit the analysis of \cite{Grimm:2019ixq} on the asymptotic de Sitter conjecture, as well as the recent work \cite{Calderon-Infante:2022nxb} on the possibility of asymptotic accelerated expansion. More generally, our results present an opportunity to revisit and refine earlier works which have so far only employed the more restrictive $\mathrm{Sl}(2)$-orbit approximation, such as \cite{Grimm:2019wtx,Calderon-Infante:2020dhm,Lanza:2020qmt,Castellano:2021yye,Grana:2022dfw,Grimm:2022ajb}.

The second technical result of this paper is an alternative proof of the finiteness of self-dual flux vacua, in the nilpotent orbit approximation. More precisely, we address the finiteness from a local point of view by showing that no infinite tails of vacua can accumulate in the asymptotic regions of the moduli space. We refer the reader to Theorem \ref{thm:finiteness_selfdual_nilpotent} for the exact statement. The proof relies on the full machinery of the multi-variable bulk reconstruction procedure and can be viewed as a generalization of the one-variable proof of \cite{Grimm:2020cda,Schnellletter}. One of the new complicating features of the multi-variable case is the issue of path dependence, i.e.~the fact that there can be many different hierarchies between the saxions that approach the boundary of the moduli space. We employ a novel strategy to deal with this issue in the nilpotent orbit approximation, by combining the quantization of the fluxes together with the tadpole bound to partition the boundary region of the moduli space into finitely many special subsectors. Together with our results on the nilpotent orbit expansion of generic Hodge inner products, this allows us to control the Hodge norm of the flux in each region separately. It would be interesting to see if this strategy can be applied to study other questions in which the path-dependence plays a crucial role. 

Besides these technical results, we have also proposed three mathematical conjectures in which we formalize some (un)expected features of the locus of self-dual flux vacua. Let us briefly touch upon these conjectures and refer the reader to section \ref{sec:future_questions} for a more in-depth discussion. Conjecture \ref{conjecture:tadpole} deals with the dimensionality of the vacuum locus and is a modified version of the well-known tadpole conjecture of \cite{Bena:2020xrh}, which we believe to be more amenable to a Hodge-theoretic proof. Here we restrict to the class of Hodge vacua and propose that when the dimension of the moduli space is larger than the self-intersection of the flux, up to an $\mathcal{O}(1)$ number, the minimal dimensionality of the locus of Hodge classes becomes nonzero. In other words, there will be at least one unstabilized modulus. It should be stressed, however, that even in this weaker form the conjecture is rather speculative, and it is not at all obvious whether it is true in general.

We additionally propose two new conjectures, both of which address the counting of flux vacua from complimentary perspectives. In Conjecture \ref{conjecture:scaling_Hodge_vacua}, we have proposed that the number of F-theory Hodge vacua, i.e.~self-dual vacua which additionally satisfy $W_{\mathrm{flux}}=0$ exactly, grows subpolynomially in the tadpole bound $L$, when restricting to isolated points in the locus of Hodge tensors. We stress that this is drastically different from the expected polynomial scaling of generic self-dual flux vacua. The conjecture is motivated by the fact that finding such a Hodge vacuum requires one to solve a highly transcendental equation over the integers, for which solutions are expected to be rare. As a first step towards a potential proof of the conjecture, it would be insightful to test it by performing a detailed counting of Hodge vacua in an explicit F-theory compactification, building on the recent works \cite{Cole:2019enn,Tsagkaris:2022apo,Dubey:2023dvu,Ebelt:2023clh,Krippendorf:2023idy,Plauschinn:2023hjw} in the type IIB setting. We hope to address this in the future. Our second Conjecture \ref{conjecture:complexity} can be viewed as a refinement of the result of Theorem \ref{thm:finiteness_self_dual}, which states that the locus of bounded self-dual flux vacua is definable in the o-minimal structure $\mathbb{R}_{\mathrm{an},\mathrm{exp}}$. Indeed, we propose that this locus is actually definable in a much smaller o-minimal structure, which is in fact sharply o-minimal. The latter would imply that the locus has a well-defined notion of geometric complexity as introduced in \cite{binyamini2022}. Additionally, we have made a concrete proposal for how this geometric complexity depends on the number of moduli and the tadpole bound. Using the properties of sharply o-minimal structures, this opens up another path to derive mathematically precise bounds on the counting of flux vacua which we believe to be worthy of future study. It would also be interesting to study the computational complexity of the flux landscape using the properties of sharply o-minimal structures, and to compare with earlier work in this direction \cite{Denef:2006ad,Denef:2017cxt}. As is emphasized by all three of the conjectures, it appears that the locus of self-dual flux vacua is remarkably constrained from a mathematical point of view, especially when focusing on the class of Hodge vacua. It would be fascinating to see if such constraints are reflected in potential phenomenological models. 

\subsubsection*{Acknowledgements}

We would like to thank Michael Douglas, Damian van de Heisteeg, Aroldo Kaplan, Bruno Klingler, Erik Plauschinn, David Prieto, Lorenz Schlechter, Christian Schnell, and Mick van Vliet for useful discussions and comments. This research is supported, in part, by the Dutch Research Council (NWO) via a
Start-Up grant and a Vici grant.

\appendix

\section{Basics of Hodge tensors}\label{app:Hodge-tensors}
In this section we introduce some basic concepts regarding so-called Hodge tensors, as well as the corresponding locus of Hodge tensors, see e.g.~\cite{Moonen:2004,Filippini_2015} for further details and references. We recall that these notions play an important role in the formulation of conjecture \ref{conjecture:scaling_Hodge_vacua}.

Let $H_{\mathbb{Z}}$ be a free abelian group of finite rank, and suppose that 
\begin{equation}\label{eq:Hodge_structure}
    H_{\mathbb{C}} = \bigoplus_{p+q=D} H^{p,q}\,,
\end{equation}
is a Hodge structure of weight $D$ on $H_{\mathbb{Z}}$. There are three basic operations one can perform to construct new Hodge structures from \eqref{eq:Hodge_structure}.
    \begin{itemize}
    \item \textbf{Direct sum:}\\
    Given two Hodge structures $\left(H_{\mathbb{Z}},H^{p,q}\right)$ and $\left(H'_{\mathbb{Z}}, H'^{p,q}\right)$ of the same weight $D$, the direct sum of the two lattices
    \begin{equation}
        H_{\mathbb{Z}}\oplus H'_{\mathbb{Z}}\,,
    \end{equation}
    carries a Hodge structure of weight $D$ given by
    \begin{equation}
        \left(H\oplus H'\right)^{p,q}:=H^{p,q}\oplus H'^{p,q}\,,\qquad p+q=D\,.
    \end{equation}
    \item \textbf{Dual:}\\
    The dual lattice 
    \begin{equation}        H^\vee_{\mathbb{Z}}:=\mathrm{Hom}\left(H_{\mathbb{Z}},\mathbb{Z}\right)\,,
    \end{equation}
    consisting of homomorphisms from the lattice $H_{\mathbb{Z}}$ to $\mathbb{Z}$, carries a Hodge structure of weight $-D$ given by
    \begin{equation}
        \left(H^\vee\right)^{p,q}:= \left(H^{-p,-q}\right)^\vee = \mathrm{Hom}\left(H^{-p,-q},\mathbb{C}\right)\,,\qquad p+q=-D\,.
    \end{equation}
    \item \textbf{Tensor product:}\\
    Given two Hodge structures $\left(H_{\mathbb{Z}},H^{p,q}\right)$ and $\left(H'_{\mathbb{Z}}, H'^{p,q}\right)$ of weights $D$ and $D'$, respectively, the tensor product
    \begin{equation}
        H_{\mathbb{Z}}\otimes H '_{\mathbb{Z}}\,,
    \end{equation}
    naturally carries a Hodge structure of weight $D+D'$ given by
    \begin{equation}
        \left(H\otimes H'\right)^{p'',q''}= \bigoplus_{\substack{p+p'=p''\\ q+q'=q''}} H^{p,q}\otimes H'^{p',q'}\,,\qquad p''+q''=D+D'\,.
    \end{equation}
\end{itemize}
Combining the last two operations one finds that the space 
\begin{equation*}
    H^{\otimes a}\otimes \left(H^\vee\right)^{\otimes b}:=\underbrace{H_{\mathbb{Z}}\otimes\cdots\otimes H_{\mathbb{Z}}}_{\text{$a$ copies}}\otimes \underbrace{H^\vee_{\mathbb{Z}}\otimes\cdots\otimes H^\vee_{\mathbb{Z}}}_{\text{$b$ copies}}
\end{equation*}
of $(a,b)$-tensors on $H_{\mathbb{Z}}$ carries a Hodge structure of weight $(a-b)D$. Note that different values of $a,b$ can give rise to a Hodge structure of the same weight. In particular, for each $k\in\mathbb{Z}$, we can apply the first operation and collect all these Hodge structures into one direct sum
\begin{equation}
    \bigoplus_{\substack{a,b\in\mathbb{N}\\a-b=k}}H^{\otimes a}\otimes \left(H^\vee\right)^{\otimes b}\,,
\end{equation}
which carries a Hodge structure of weight $kD$. Finally, one can consider the formal direct product of all these Hodge structures by summing over $k$, which results in the space
\begin{equation}\label{eq:complete_tensor_product}
    H^{\otimes} := \bigoplus_{k=-\infty}^\infty \bigoplus_{\substack{a,b\in\mathbb{N}\\a-b=k}}H^{\otimes a}\otimes \left(H^\vee\right)^{\otimes b}\,.
\end{equation}
Finally, we come to central objects we wish to study: Hodge tensors. Loosely speaking, a \textbf{Hodge tensor} is a Hodge class in $H^{\otimes}$. More precisely, a type $(p,p)$ Hodge tensor is a Hodge class in the component of $H^{\otimes}$ that carries a Hodge structure of weight $2p$, i.e.~in the component
    \begin{equation}\label{eq:Hodge_tensor_component}
        \bigoplus_{\substack{a,b\in\mathbb{N}\\(a-b)D=2p}} H^{\otimes a}\otimes \left(H^\vee\right)^{\otimes b}\,.
    \end{equation}

\subsubsection*{Examples}
\begin{itemize}
    \item \textbf{Type $\left(\frac{D}{2},\frac{D}{2}\right)$ Hodge tensors}\\
    In the following we assume that $D$ is even. Setting $p=D/2$ in \eqref{eq:Hodge_tensor_component} and working through the definitions, we are searching for $(p,p)$ classes in
    \begin{align}
        \bigoplus_{\substack{a,b\in\mathbb{N}\\a-b=1}} H^{\otimes a}\otimes \left(H^\vee\right)^{\otimes b}&= H\oplus\left[ H\otimes H\otimes H^\vee\right]\oplus\cdots\,.
    \end{align}
    Focusing on the first term on the right-hand side, we see that an example of a $\left(\frac{D}{2},\frac{D}{2}\right)$ tensor is simply a Hodge class in the original Hodge structure $H^{p,q}$. However, the notion of a $\left(\frac{D}{2},\frac{D}{2}\right)$ tensor is more general, as it can also arise from the other summands. As an example, one can also construct a $(p,p)$ tensor via
    \begin{equation}
        \left[ H\otimes H\otimes H^\vee\right]^{p,p} = \left[H^{p+1,p-1}\otimes H^{p-2,p+2}\otimes \left(H^\vee\right)^{-p+1,-p-1}\right]\oplus\cdots\,,
    \end{equation}
    where we have just chosen one of the terms that could appear to illustrate the resulting structure. 
    \item \textbf{Type (0,0) Hodge tensors}\\
    Another illuminating example is given by considering type $(0,0)$ Hodge tensors. Setting $p=0$ in \eqref{eq:Hodge_tensor_component} and working through the definitions, we are searching for $(0,0)$ classes in
     \begin{align}
        \bigoplus_{\substack{a,b\in\mathbb{N}\\a-b=0}} H^{\otimes a}\otimes \left(H^\vee\right)^{\otimes b}&= \left[H\otimes H^\vee\right]\oplus\left[ H\otimes H\otimes H^\vee\otimes H^\vee\right]\oplus\cdots\,.
    \end{align}
    Let us focus on the first term, for which we find
    \begin{align}
        \left(H\otimes H^\vee\right)^{0,0}& = \bigoplus_{\substack{p+p^\vee=0\\q+q^\vee=0}} H^{p,q}\otimes \left(H^\vee\right)^{p^\vee, q^\vee}\,,\\
        &=\bigoplus_{p,q} H^{p,q}\otimes \left(H^{p,q}\right)^{\vee}\,,\\
        &\cong \bigoplus_{p,q} \mathrm{End}\left(H^{p,q}, H^{p,q}\right)\,.
    \end{align}
    In other words, such a type $(0,0)$ Hodge tensor can be interpreted as an endomorphism of the various $H^{p,q}$ spaces, i.e.~a map which preserves the original Hodge structure we started with.
\end{itemize}

\subsubsection*{The locus of Hodge tensors}
The above considerations naturally generalize to the setting of variations of Hodge structure, where the Hodge decomposition varies over a moduli space $\mathcal{M}$. Recall that, given an integral class $v\in H_{\mathbb{Z}}$, it is a non-trivial condition on the moduli whether $v$ is a Hodge class (which may or may not have a solution). Similarly, one might ask which points in the moduli space admit Hodge tensors. To be precise, one should consider the locus of points $z\in\mathcal{M}$ for which the Hodge structure admits more Hodge tensors than the general fibre, see for example \cite{Moonen:2004} for further details. This locus will be referred to as the locus of Hodge tensors. Note that, by the first example discussed above, the locus of Hodge tensors contains the locus of Hodge classes, recall also \eqref{eq:Hodge-locus}.

\section{Properties of $f_{i,k_i}$}
\label{app:additional_proofs}
In this section we elaborate on the properties of the expansion functions $f_{i,k_i}$ appearing in the nilpotent orbit expansion, see for example equation \eqref{eq:hinv}. We focus on the property
\begin{equation}   
\label{eq:f_bounds_app}
f_{i,k_i}^{s^i}\prec \prod_{j=1}^{i-1} \left(\frac{y_j}{y_{j+1}}\right)^{-s_j^i}\,,\qquad 2\leq i \leq m\,,
\end{equation}
see also equation \eqref{eq:f_bounds_new} and the surrounding discussion. We present the proof of \eqref{eq:f_bounds_app} for the case $m=2$ to give the general idea. For arbitrary $m$, the argument will be very similar but becomes more cumbersome to write down. Setting $m=2$, we need to show that
    \begin{equation}
        f^{(s_1^2,s_2^2)}_{2,k_2}\prec \left(\frac{y_1}{y_2}\right)^{-s_1^2}\,.
    \end{equation}
It suffices to consider the case $s_1^2>0$, since it is certainly the case that $f_{2,k_2}\prec 1$. It follows from the general mechanism of the multi-variable bulk reconstruction that $f_2$ is a Lie polynomial of the form
    \begin{equation}
        f_2\left(\frac{y_1}{y_2}\right) = P\left(\mathrm{Ad}\left(h_1\left(\frac{y_1}{y_2}\right)\right)N_{(2)}^+, \delta_{(2)}\right)\,.
    \end{equation}
Importantly, the $y_1/y_2$-dependence of $f_2$ is due entirely to $h_1N^+_{(2)}h_1^{-1}$. It is therefore crucial to understand the scaling of this object. To this end, we recall that, according to the $\mathrm{Sl}(2)$-orbit theorem, we have the relation (see Lemma 4.37 of \cite{CKS})
\begin{equation}
    h_1 N^-_{(2)}h_1^{-1} = \frac{y_1}{y_2}N_1+N_2\,.
\end{equation}
In particular, this implies that $N_{(2)}^-$ can only have weights $0$ and $-2$ with respect to $N^0_{(1)}$. Therefore, since $N^0_{(2)}$ has weight $0$ with respect to $N^0_{(1)}$ (the two commute) it must be that $N_{(2)}^+$ has weights $0$ and $+2$ with respect to $N_{(1)}^0$. In other words, we have the decomposition
\begin{equation}
    h_1 N_{(2)}^+h_1^{-1} = \left(\frac{y_1}{y_2}\right)^{-1}\left(N_{(2)}^+\right)^{(2,2)}+\left(N_{(2)}^+\right)^{(0,2)}+\text{subleading terms}\,.
\end{equation}
Note that the subleading terms can also only have weights $(2,2)$ and $(0,2)$. 
To continue the proof, we recall that the phase operator $\delta_{(2)}$ satisfies 
\begin{equation}
    \delta_{(2)} = \sum_{\tilde{s}_1^ 2}\sum_{\tilde{s}_2^2\leq -2} \delta_{(2)}^{(\tilde{s}_1^2,\tilde{s}_2^2)}\,\qquad [N_{(1)}^-,\delta_{(2)}]=[N_{(2)}^-,\delta_{(2)}]=0\,.
\end{equation}
In words, $\delta_{(2)}$ is a lowest-weight operator, with weight $\tilde{s}_2^2$ at most $-2$. By the enhancement rules for limiting mixed Hodge structures, see e.g.~\cite{Grimm:2018cpv}, this implies that also $\tilde{s}_1^2\leq 0$. Importantly, since $\delta_{(2)}$ is lowest weight, we have that
\begin{equation}
    \mathrm{ad}\left(h_1 N_{(2)}^+h_1^{-1}\right)^s\delta_{(2)}^{(\tilde{s}_1^2,\tilde{s}_2^2)}=0\,,
\end{equation}
whenever $s>\mathrm{min}(\tilde{s}_1^2,\tilde{s}_2^2)$. The argument now proceeds as follows. In order for $f_2$ to have a component with positive weight $s_1^2$, this can only happen due to $\mathrm{ad}\left(N^+_{(2)}\right)^{(2,2)}$ acting on $\delta_{(2)}$, thereby raising the $N^0_{(1)}$ weight by two. However, this goes at the expense of an additional factor $\left(y_1/y_2\right)^{-1}$. Moreover, since $\delta_{(2)}$ is of lowest weight, the most conservative way to get an $s_1^2$ component is by acting exactly $s_1^2$ times with $\mathrm{ad}\left(N^+_{(2)}\right)^{(2,2)}$ on the $-s_1^2$ component of $\delta_{(2)}$ (if it is present). This therefore comes with an additional factor of $\left(y_1/y_2\right)^{-s_1^2}$. This concludes the proof of property \eqref{eq:f_bounds_app} for the case $m=2$. 

\section{Examples of Boundary Hodge Norms}
\label{app:Hodge_norms}
In this section we present two examples of boundary Hodge norms. For simplicity, we restrict to the case of Calabi--Yau threefolds having a single complex structure modulus, and consider the large complex structure point and the conifold point. In particular, we aim to illustrate the relation \eqref{eq:bound_v(n)}. To this end, it is important to express the asymptotic periods in an integral symplectic basis. We choose to follow the conventions of \cite{Bastian:2023shf} and refer the reader to this work for further details. 

\subsubsection*{Type $\mathrm{IV}_1$: LCS point}
Near the large complex structure point, corresponding to a Type $\mathrm{IV}_1$ singularity, the period vector may written as
\begin{equation}
    \mathbf{\Pi} = \begin{pmatrix}
        1\\
        t\\
        \frac{\kappa}{2}t^2-\sigma t+\frac{c_2}{24}\\
        -\frac{\kappa}{6}t^3+\frac{c_2}{24}t-i\chi
    \end{pmatrix}\,,
\end{equation}
where
\begin{itemize}
    \item $\kappa$ is the triple intersection number of the mirror Calabi--Yau,
    \item $c_2$ is the integrated second Chern class,
    \item $\sigma=\frac{\kappa}{2}\,\mathrm{mod}\,1$.
    \end{itemize}
Going through the standard procedure, as described e.g.~in \cite{Grimm:2021ckh}, one can write down the boundary Hodge norm associated to the limiting mixed Hodge structure at $t\rightarrow i\infty$. Explicitly, one finds
\begin{equation}
    S\cdot C_\infty = \left(
\begin{array}{cccc}
 \frac{2 \hat{c}_2^2}{\kappa }+\frac{\kappa }{6} & -\frac{2 \hat{c}_2 \sigma }{\kappa } & -\frac{2 \hat{c}_2}{\kappa } & 0 \\
 -\frac{2 \hat{c}_2 \sigma }{\kappa } &\frac{12 \hat{c}_2^2+\kappa ^2+4 \sigma ^2}{2 \kappa } & \frac{2 \sigma }{\kappa } & -\frac{6
   \hat{c}_2}{\kappa } \\
 -\frac{2 \hat{c}_2}{\kappa } & \frac{2 \sigma }{\kappa } & \frac{2}{\kappa } & 0 \\
 0 & -\frac{6 \hat{c}_2}{\kappa } & 0 & \frac{6}{\kappa } \\
\end{array}
\right)\,,\qquad \hat{c}_2=\frac{c_2}{24}\,.
\end{equation}
Here $S$ denotes the symplectic pairing and $C_\infty$ is the Weil operator associated to the boundary Hodge structure. It is instructive to compute the boundary Hodge norm of the various weight-components of an integral flux. Let $G_3=(g_1,g_2,g_3,g_4)\in H_{\mathbb{Z}}$. Since we are working in an integral basis, this means that $g_1,\ldots, g_4\in\mathbb{Z}$. For the $\mathrm{IV}_1$ singularity, the only possible weights are $\ell=3,1,-1,-3$. By projecting $G_3$ on the various weight eigenspaces, one straightforwardly computes
\begin{align}
    ||(G_3)_{3}||^2_\infty &= \frac{\kappa}{6}g_1^2\,,\\
    ||(G_3)_{1}||^2_\infty &= \frac{\kappa}{2}g_2^2\,,\\
    ||(G_3)_{-1}||^2_\infty &= \frac{2
    }{\kappa}\left(\hat{c}_2 g_1-g_3+\sigma g_2\right)^2\,,\\
    ||(G_3)_{-3}||^2_\infty &= \frac{6}{\kappa}\left(\hat{c}_2 g_2-g_4\right)^2\,.
\end{align}
One sees that, for example, the boundary Hodge norm of a non-zero $(G_3)_3$ is bounded from below by $\kappa/6$. On the other hand, for the $\ell=-1,-3$ components the exact bound will depend on the values of the coefficients $c_2$ and $\sigma$.

\subsubsection*{Type $\mathrm{I}_1$: conifold point}
Around the conifold point, corresponding to a Type $\mathrm{I}_1$ singularity, the period vector may be expressed as
\begin{equation}
    \mathbf{\Pi} = \begin{pmatrix}
        1\\
        0\\
        \delta-\gamma\tau\\
        \tau
    \end{pmatrix}
    +A_1 e^{2\pi i t}\begin{pmatrix}
        \gamma\\
        1\\
        t+\frac{k}{2\pi i}\\
        \delta
    \end{pmatrix}
    +e^{4\pi i t}\begin{pmatrix}
        \gamma A_2 +\frac{k}{8\pi \tau_2}A_1^2\\
        A_2\\
        A_2\left(t+\frac{k}{4\pi i}\right)+\left(\delta-\gamma \Bar{\tau}\right)A_1^2\\
        \delta A_2 +\frac{k\Bar{\tau}}{8\pi \tau_2}A_1^2
    \end{pmatrix}\,,
\end{equation}
where
\begin{itemize}
    \item $\tau=\tau_1+i\tau_2$, with $\tau_2>0$, corresponds to a rigid period,
    \item $\gamma,\delta\in\mathbb{R}$ are the extension data,
    \item $k>0$ corresponds to the order of a finite subgroup that quotients the three-sphere $S^3$,
    \item $A_1,A_2\in\mathbb{C}$ are coefficients that parametrize the exponential corrections, these will not play a role in the following discussion. 
\end{itemize}
As for the case of the LCS point, one can write down the boundary Hodge norm associated to the limiting mixed Hodge structure at $t\rightarrow i\infty$. Explicitly, one finds
\begin{equation}
    S\cdot C_\infty = \left(
\begin{array}{cccc}
 \frac{|\tau|^2}{\tau_2}+\frac{\delta ^2}{k} & \frac{\delta  \tau_1-\gamma  |\tau|^2}{\tau_2} & -\frac{\delta }{k} &
   -\frac{\tau_1}{\tau_2}-\frac{\gamma  \delta }{k} \\
 \frac{\delta \tau_1-\gamma  |\tau|^2}{\tau_2} & \frac{\gamma ^2 |\tau|^2+\delta ^2+k\tau_2-2 \gamma  \delta 
   \tau_1}{\tau_2} & 0 & \frac{\gamma  \tau_1-\delta }{\tau_2} \\
 -\frac{\delta }{k} & 0 & \frac{1}{k} & \frac{\gamma }{k} \\
 -\frac{\tau_1}{\tau_2}-\frac{\gamma  \delta }{k} & \frac{\gamma \tau_1-\delta }{\tau_2} & \frac{\gamma }{k} &
   \frac{1}{\tau_2}+\frac{\gamma ^2}{k} \\
\end{array}
\right)
\end{equation}
Let $G_3=(g_1,g_2,g_3,g_4)\in H_{\mathbb{Z}}$ again be an integral flux, with $g_1,\ldots, g_4\in\mathbb{Z}$. For the $\mathrm{I}_1$ singularity, the only possible weights are $\ell=1,0,-1$. By projecting $G_3$ on the various weight eigenspaces, one straightforwardly computes
\begin{align}
    ||(G_3)_{1}||^2_\infty &= k g_2^2\,,\\
    ||(G_3)_{0}||^2_\infty &=\frac{1}{\tau_2}\left|g_4-\delta g_2 -(g_1-\gamma g_2)\tau\right|^2\,,\\
    ||(G_3)_{-1}||^2_\infty &=\frac{1}{k}\left(g_3+\gamma g_4-\delta g_1\right)^2
\end{align}
For example, we see that if the $\ell=1$ component of $G_3$ is non-zero, then its boundary Hodge norm is bounded by $k$ (since $g_2^2\geq 1$). For the other components the exact bound will depend on the details of the geometry, namely the values of $\gamma,\delta,\tau$. To avoid possible confusion, we stress that for example
\begin{equation}
    (G_3)_{-1} = (0,0,g_3+\gamma g_4-\delta g_1,0)\,,
\end{equation}
so that indeed $||(G_3)_{-1}||_\infty=0$ if and only if $(G_3)_{-1}=0$. 

\section{The Hodge Locus}
\label{app:Hodge_locus}

In this section we will describe the proof of Theorem \ref{thm:finiteness_Hodge_loci_local} in some detail. In contrast to the proof of Theorem \ref{thm:finiteness_selfdual_nilpotent}, which relied heavily on the nilpotent orbit expansion, the proof discussed here relies more on understanding in which space a sequence of Hodge classes ends up when approaching the boundary. To describe this properly, we first recall some basics of the theory of mixed Hodge structures in section \ref{app:MHS}. Here our discussion will be brief, and is mostly intended to set the notation. For a more detailed introduction we refer the reader to \cite{Grimm:2018cpv,Grimm:2021ckh}. Subsequently, the proof of Theorem \ref{thm:finiteness_Hodge_loci_local} will be presented in section \ref{app:finiteness_Hodge}.

\subsection{Mixed Hodge structures}
\label{app:MHS}
We start by recalling the result of the nilpotent orbit theorem, which states that for sufficiently large $\mathrm{Im}\,t^i$, for $1\leq i \leq m$, the variation of Hodge structure in question can be approximation by a nilpotent orbit
\begin{equation}
    F^p\approx F_{\mathrm{nil}}^p = e^{t^i N_i}F_{(m)}^p\,.
\end{equation}
Here we have changed notation slightly by denoting the limiting filtration by $F^p_{(m)}$. It is important to note that this limiting filtration generically does not define a Hodge filtration. However, together with the log-monodromy operators $N_i$ it does define a so-called mixed Hodge structure, which we now introduce.

\subsubsection*{Weight filtration}
For each $1\leq i \leq m$, we denote
\begin{equation}
    N_{(i)} = N_1+\cdots +N_{i}\,,
\end{equation}
and define the monodromy weight filtration $W^{(i)}_\ell$ by
\begin{equation}
\label{eq:def_monodromy_weight_filtration}
    W^{(i)}_{\ell} = \sum_{j\leq \mathrm{max}(-1,\ell-D)} \mathrm{Ker}\,N_{(i)}^{j+1}\cap \mathrm{Im}\,N_{(i)}^{j-\ell+D}\,.
\end{equation}
This defines an increasing filtration satisfying the following two properties
\begin{enumerate}
    \item $N_{(i)}W^{(i)}_{\ell}\subseteq W^{(i)}_{\ell-2}$\,,
    \item $N_{(i)}^\ell : \mathrm{Gr}_{D+\ell}^{(i)}\rightarrow \mathrm{Gr}_{D-\ell}^{(i)}$ is an isomorphism for all $\ell\geq 0$, where we have defined the graded pieces $\mathrm{Gr}_{\ell}^{(i)}$ as the following quotients
    \begin{equation}
        \mathrm{Gr}_{\ell}^{(i)}:= W^{(i)}_\ell / W^{(i)}_{\ell-1}\,.
    \end{equation}
\end{enumerate}

\subsubsection*{Mixed Hodge structure and Deligne splitting}
It turns out that the weight filtration $W_\ell^{(m)}$ and the limiting filtration $F^p_{(m)}$ are such that each graded piece $\mathrm{Gr}^{(m)}_{\ell}$ admits a Hodge structure of weight $\ell$, for each $\ell\geq 0$. More precisely, the decomposition is given by
\begin{equation}
    \mathrm{Gr}_{\ell}^{(m)} = \bigoplus_{p+q=\ell} \left[\mathrm{Gr}_{\ell}^{(m)}\right]^{p,q}\,,
\end{equation}
where the individual $(p,q)$ pieces can be computed explicitly from appropriate intersections and quotients of $F^p_{(m)}$ and $W^{(m)}_\ell$, see for example \cite{Grimm:2018cpv}. For this reason, it is said that the pair $(W^{(m)},F_{(m)})$ defines a \textit{mixed Hodge structure}. 

Another useful way to collect the various constituents of a mixed Hodge structure is through the so-called Deligne splitting. It is constructed out of the weight filtration $W^{(m)}$ and limiting filtration $F_{(m)}$ as
\begin{equation}
\label{eq:def_Deligne}
    I^{p,q}_{(m)} := F_{(m)}^p\cap W^{(m)}_{p+q}\cap \big(\bar{F}_{(m)}^q\cap W^{(m)}_{p+q}+\sum_{j\geq 1}\bar{F}_{(m)}^{q-j}\cap W^{(m)}_{p+q-j-1} \big)\,,
\end{equation}
for $p,q=0,\ldots, D$. Conversely, one can recover the weight filtration as well as the limiting filtration from the Deligne splitting via the relations
\begin{equation}    W^{(m)}_{\ell}=\bigoplus_{p+q=\ell}I_{(m)}^{p,q}\,,\qquad   F^p_{(m)} = \bigoplus_{r\geq p}\bigoplus_s I_{(m)}^{r,s}\,.
\end{equation}

\subsubsection*{$\mathbb{R}$-split MHS and $\mathfrak{sl}(2,\mathbb{R})$}
Generically, the Deligne splitting $I^{p,q}_{(m)}$ does not behave nicely under complex conjugation. Rather, it satisfies the relation
\begin{equation}
    \overline{I^{p,q}_{(m)}} = I^{q,p}_{(m)}\quad\mathrm{mod}\quad \bigoplus_{r<q,s<p}I^{r,s}_{(m)}\,.
\end{equation}
When the splitting satisfies the much simpler relation $\overline{I^{p,q}_{(m)}} = I^{q,p}_{(m)}$ we say that the mixed Hodge structure is $\mathbb{R}$-split. It was shown by Deligne that, given a mixed Hodge structure $(W^{(m)}, F_{(m)})$, it is always possible to find two $\mathfrak{g}_{\mathbb{R}}$-valued operators $\zeta_{(m)},\delta_{(m)}$ such that the filtration
\begin{equation}
    \tilde{F}^p_{(m)}:= e^{\zeta_{(m)}}e^{i\delta_{(m)}} F^p_{(m)}
\end{equation}
defines a mixed Hodge structure $(W^{(m)}, \tilde{F}_{(m)})$ which is $\mathbb{R}$-split, and we denote its associated Deligne splitting by $\tilde{I}^{p,q}_{(m)}$. It will be useful to note that, due to the $\mathbb{R}$-split property, the relation \eqref{eq:def_Deligne} simplifies to
\begin{equation}
\label{eq:relation_Deligne_Rsplit}
    \tilde{I}^{p,q}_{(m)} = \tilde{F}^p_{(m)}\cap \overline{\tilde{F}^q_{(m)}}\cap W^{(m)}_{p+q}\,.
\end{equation}
A crucial property of $\mathbb{R}$-split mixed Hodge structures is that one can naturally define a real $\mathfrak{sl}(2,\mathbb{R})$-triple associated to them in terms of the Deligne splitting. Indeed, the weight operator $N^0_{(m)}$ is defined by the relation
\begin{equation}
\label{eq:def_weight_operator}
    N^0_{(m)}v = (p+q-D)\,,\qquad v\in \tilde{I}^{p,q}_{(m)}\,,
\end{equation}
while the lowering operator $N_{(m)}^-$ is obtained by an appropriate projection of the log-monodromy matrix $N_{(m)}$ as explained in \cite{Grimm:2021ckh}. Lastly, the raising operator $N_{(m)}^+$ is uniquely determined by solving the commutation relations. Note that there is a straightforward relation between the weight components of a vector $v$, recall the decomposition \eqref{eq:sl2_decomp_vector}, and its position in the weight filtration. To be precise, If $v$ has highest $N^0_{(m)}$ weight $\ell_{m}$, then it lies inside the $W^{(m)}_{D+\ell_m}$ component of the weight filtration.

\subsubsection*{Constructing the rest of the boundary data}
The operator $\delta_{(m)}$ and the $\mathfrak{sl}(2,\mathbb{R})$-triple $\{N^+_{(m)}, N^0_{(m)}, N^-_{(m)}\}$ comprise  part of the boundary data \eqref{eq:boundary_data}. 
In order to construct the rest of the boundary data, one effectively performs the above procedure inductively as follows. First, one can show that the filtration
\begin{equation}
    F^p_{(m-1)}:=e^{iN_m}\tilde{F}^p_{(m)}\,,
\end{equation}
together with the monodromy weight filtration $W_\ell^{(m-1)}$ again defines a mixed Hodge structure, with an associated Deligne splitting $I^{p,q}_{(m-1)}$ defined by the same relation \eqref{eq:def_Deligne} but with $m$ replaced by $m-1$. If this mixed Hodge structure is not $\mathbb{R}$-split, then one may again find operators $\zeta_{(m-1)}$ and $\delta_{(m-1)}$ such that the filtration 
\begin{equation}
    \tilde{F}^p_{(m-1)}:=e^{\zeta_{(m-1)}}e^{i\delta_{(m-1)}}F^p_{(m-1)}\,,
\end{equation}
together with $W_\ell^{(m-1)}$ defines an $\mathbb{R}$-split mixed Hodge structure $\tilde{I}^{p,q}_{(m-1)}$. This, in turn, also has an associated $\mathfrak{sl}(2,\mathbb{R})$ triple. Proceeding inductively, for each $1\leq k\leq m$ one defines 
\begin{equation}
\label{eq:filtrations_recursive}
    F^p_{(k-1)} = e^{i N_k}\tilde{F}^p_{(k)}\,,\qquad \tilde{F}^p_{(k)} = e^{\zeta_{(k)}}e^{i\delta_{(k)}}F^p_{(k)}\,,
\end{equation}
such that each pair $(W^{(i)},\tilde{F}_{(i)})$ defines an $\mathbb{R}$-split mixed Hodge structure, with associated Deligne splitting $\tilde{I}^{p,q}_{(i)}$ and corresponding $\mathfrak{sl}(2,\mathbb{R})$-triple. To summarize, this provides us with the following data
\begin{itemize}
    \item phase operators: $\delta_{(i)}$,
    \item $\mathfrak{sl}(2,\mathbb{R})$-triples: $\{N^+_{(i)}, N^0_{(i)}, N^+_{(i)}\}$,
    \item $\mathbb{R}$-split Deligne splittings: $\tilde{I}^{p,q}_{(i)}$,
    \item graded spaces: $\mathrm{Gr}^{(i)}_{\ell}$,
\end{itemize}
for each $1\leq i \leq m$. 

\subsubsection*{Boundary Hodge structure}
Finally, setting $k=1$ in \eqref{eq:filtrations_recursive} one obtains the boundary Hodge filtration
\begin{equation}
    F^p_{\infty}:=F^p_{(0)}\,,
\end{equation}
which is fact turns out to define a polarized Hodge structure. This comprises the last piece of the boundary data \eqref{eq:boundary_data}.

\subsection{Proof: finiteness of Hodge vacua}
\label{app:finiteness_Hodge}
Having discussed the additional necessary ingredients of mixed Hodge theory, we now turn to a description of the proof of finiteness of Hodge vacua, recall Theorem \ref{thm:finiteness_Hodge_loci_local}. Hence, we let
\begin{equation}
        \left(t^i(n), v(n)\right)\in \mathbb{H}^m\times H_{\mathbb{Z}}(L)\,,
\end{equation}
be a sequence of points such that $\mathrm{Re}\,t^i(n)$ is bounded and $\mathrm{Im}\,t^i(n)\rightarrow\infty$ as $n\rightarrow\infty$.
Furthermore, we assume that
\begin{equation}
        v(n)\in  H^{k,k}\,,\qquad D=2k\,,
    \end{equation}
is a sequence of Hodge classes. Our goal is to show that $v(n)$ can only take on finitely many values. To this end, it suffices to show that there exists a constant subsequence of $v(n)$. For this reason, we may and will freely pass to a subsequence of $v(n)$ whenever possible, without changing the notation to avoid unnecessary cluttering. To ease the reader into the proof, we start by considering the simpler one-variable case ($m=1$). Afterward, we explain how one may inductively apply the one-variable result to obtain the general result. 

\subsubsection{One variable}
\label{subsubsec:proof_Hodge_locus_single}
The proof will be divided into five main steps. 

\subsubsection*{Step 1: Incorporating exponential corrections}
It will be helpful to effectively reduce the problem to the case where the variation of Hodge structure in question is a nilpotent orbit. This can be done as follows. As explained in \cite{CDK,schnell2014extended}, one may assume, without loss of generality, that the flux is parametrized as
\begin{equation}
    v(n) = v_{\mathrm{nil}}(n)+v_{\mathrm{inst}}(n)\,,
\end{equation}
where $v_{\mathrm{nil}}(n)\in F^k_{\mathrm{nil}}$ is a sequence of Hodge classes with respect to a nilpotent orbit $F_\mathrm{nil}$, while $v_{\mathrm{inst}}(n)$ is a series of exponentially small corrections, satisfying
\begin{equation}
    \frac{||v_{\mathrm{inst}}(n)||_\infty}{||v(n)||_\infty}\sim e^{-\alpha y(n)}\,,
\end{equation}
for some constant $\alpha$. 

\subsubsection*{Step 2: Boundedness and $\mathfrak{sl}(2)$-weights}
The start of the proof is identical to the discussion in section \ref{subsec:self-dual_proof}. Indeed, we recall the important result that the self-duality condition and the tadpole cancellation condition together imply that the Hodge norm $||v(n)||_\infty$ is bounded. Consequently, the relation \eqref{eq:Fsharp_F_relation_limit} implies that also the boundary Hodge norm $||e(n)v(n)||$ is bounded. In the one-variable case, this means that
\begin{equation}
    \sum_{\ell}y^\ell \,||\hat{v}_\ell(n)||^2_\infty < L\,,\qquad \hat{v}(n) = e^{-x(n)N}v(n)\,.
\end{equation}
As discussed in section \ref{subsec:self-dual_locus_example}, this implies that $\hat{v}(n)$, for $n$ sufficiently large, can only have non-zero weight components for $\ell\leq 0$. This means that, after passing to a subsequence, the sequence $\hat{v}(n)$ lies inside the $W_{2k}$ component of the weight filtration induced by the single monodromy operator $N$ via \eqref{eq:def_monodromy_weight_filtration}. Additionally, note that the $\ell=0$ component $\hat{v}_0(n)$ can only take finitely many values, due to the quantization condition. Therefore, after passing to another subsequence, we may write
\begin{equation}
\label{eq:hatv_limit}
    \hat{v}(n) \equiv \hat{v}_0\quad \mathrm{mod}\,W_{2k-1}\,,
\end{equation}
where $\hat{v}_0$ is a constant. Intuitively, one can think of the $\hat{v}_0$ component of $\hat{v}(n)$ as the part of $\hat{v}(n)$ that remains in the limit $n\rightarrow\infty$. The next step of the proof amounts to showing that this component is rather restricted, owing to the fact that $\hat{v}(n)$ is a sequence of Hodge classes.

\subsubsection*{Step 3: Show that $N\hat{v}_0=0$}
For the moment, let us denote
\begin{equation}
    w = \lim_{n\rightarrow\infty} e(n)v(n)=\lim_{n\rightarrow\infty} e(n)v_{\mathrm{nil}}(n)\,,
\end{equation}
where the second relation follows from the fact that the contribution from $v_{\mathrm{inst}}(n)$ is sub-leading. Recalling the relation \eqref{eq:Fsharp_F_relation_limit} and using the fact that $v_{\mathrm{nil}}(n)\in F_{\mathrm{nil}}^k$, one finds that $w\in F^k_\infty$. At the same time, since $v(n)\in W_{2k}$ and $v(n)$ is real, the same is true for $w$. This is because $e(n)$ is a real operator that does not change the weights. In other words, we have
\begin{equation}
    w\in F^k_\infty\cap W_{2k}\cap H_{\mathbb{R}}\,.
\end{equation}
Such elements are very restricted, as captured by the following
\begin{lem}[{{\cite[Lemma 4.4]{CDK}}}]
\label{lem:Fsharp_W0}
    If $w\in F^k_{\infty}\cap W_{2k}(N)\cap H_{\mathbb{R}}$, then $N^0 w = Nw=0$.
\end{lem}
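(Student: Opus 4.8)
\textbf{Proof plan for Lemma \ref{lem:Fsharp_W0}.}

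The strategy is to exploit the defining property of the boundary Hodge structure, namely that it is a polarized Hodge structure of weight $D=2k$, together with the $\mathfrak{sl}(2)$-representation theory that governs the monodromy weight filtration $W_\bullet(N)$. First I would observe that the hypothesis $w \in W_{2k}(N)$ means $w$ has only non-positive $N^0$-weights in its $\mathfrak{sl}(2)$-decomposition $w = \sum_{\ell \le 0} w_\ell$, since by the last remark of Appendix \ref{app:MHS} a vector of highest weight $\ell_{\max}$ lies in $W_{D+\ell_{\max}}$, so $W_{2k}$ corresponds to $\ell_{\max}\le 0$. The goal $N^0 w = 0$ is then equivalent to showing all the components with $\ell<0$ vanish, i.e.\ $w=w_0$; and once that is established, $Nw=0$ would follow because $N=N^-$ lowers weights by $2$, so $N w_0$ has weight $-2$, but I must still argue $Nw_0 = 0$ separately (this typically uses that $w_0$, being in the $\ell=0$ graded piece, is annihilated by $N^-$ on the primitive part together with the structure of the Lefschetz decomposition).

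The key input is the positivity of the polarization form on $H_\infty^{p,q}$ combined with the interplay between the Hodge filtration $F_\infty^k$, the real structure, and $W_{2k}$. The plan is: (i) decompose $w$ according to the $\mathbb{R}$-split mixed Hodge structure $\tilde I_{(0)}^{p,q} = H_\infty^{p,q}$, so $w \in W_{2k}$ forces $w$ to live in $\bigoplus_{p+q\le 2k} H_\infty^{p,q}$, but since the boundary structure is \emph{pure} of weight $2k$ we actually have $w \in \bigoplus_{p+q=2k}H^{p,q}_\infty = H_\infty$ entirely — so the weight-filtration constraint from the boundary structure alone is vacuous, and the real content is that $w$ must simultaneously sit in $W_{2k}(N)$ as a sublattice/subspace cut out by the nilpotent $N$; (ii) use $w \in F_\infty^k$ to conclude $w \in \bigoplus_{p\ge k} H_\infty^{p,2k-p}$, and $w = \bar w$ (reality) to conclude $w \in \bigoplus_{p\le k}H_\infty^{p,2k-p}$, hence $w \in H_\infty^{k,k}$; (iii) now compute $\langle w, w\rangle_\infty = (w, C_\infty \bar w) = (w,w)$ using $C_\infty w = i^{k-k}w = w$, and pair this against the $N$-action. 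The crucial quantitative step is the polarization relation on the primitive Lefschetz decomposition: if $w$ had a nonzero component in a negative $N^0$-weight space, the polarization of the limiting mixed Hodge structure (the $\mathbf{Sl}(2)$-polarization) forces a definite-sign contribution incompatible with $w$ being real, of pure Hodge type $(k,k)$, and bounded in Hodge norm.

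I would carry this out by invoking the standard fact that $(W(N), F_\infty)$ together with $N$ satisfy the conditions of an $\mathfrak{sl}(2)$-polarized mixed Hodge structure, so the bilinear form $(\,\cdot\,,\,N^\ell\,\cdot\,)$ is (up to sign) definite on the weight-$(D+\ell)$ primitive part $P_{D+\ell}$. Concretely: write $w = \sum_{\ell\le 0} w_\ell$ with $w_\ell$ of $N^0$-weight $\ell$, further decompose each $w_\ell$ into primitive pieces $w_\ell = \sum_{j\ge 0} N^j p_{\ell+2j}$ with $p_m \in P_m \cap$ (appropriate Hodge piece), and then the reality of $w$, its pure type $(k,k)$, and the polarization positivity $i^{\cdots}(p_m, C_\infty \bar p_m) \cdot (\text{sign}) > 0$ on primitives combine to force $p_m = 0$ for all $m \ne D$. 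That leaves $w = p_{2k} \in P_{2k}$, which is exactly $\ker N \cap (\text{weight }0)$, giving both $N^0 w = 0$ and $Nw = 0$ at once.

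\textbf{Main obstacle.} The delicate point is step (iii): carefully extracting the sign-definiteness of the limiting polarization on each primitive summand and checking that the constraints ``$w$ real'' + ``$w$ of Hodge type $(k,k)$ in $H_\infty$'' + ``$w \in W_{2k}$'' genuinely over-determine $w$ unless it is primitive of weight $2k$. This is precisely where the Cattani--Deligne--Kaplan machinery of polarized mixed Hodge structures does the real work, and writing it cleanly requires being careful about which $\mathfrak{sl}(2)$-grading ($N^0_{(m)}$ versus the weight filtration index) one is using and about the sign conventions in $i^{p-q}(v,\bar v)>0$. I would expect the bulk of the write-up to be this bookkeeping, with everything else being formal.
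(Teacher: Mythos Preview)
Your step (ii) is correct and useful: $w\in F_\infty^k\cap H_{\mathbb R}$ indeed forces $w\in H_\infty^{k,k}$, equivalently $C_\infty w=w$. But step (iii) has a genuine gap. The Lefschetz primitive decomposition and the sign--definiteness of $(\,\cdot\,,N^\ell\,\cdot\,)$ on primitives are statements about the polarized Hodge structures on the graded pieces $\mathrm{Gr}_\ell^W$ of the \emph{limiting mixed} Hodge structure $(W,\tilde F)$, whereas your type constraint $w\in H_\infty^{k,k}$ lives in the \emph{boundary pure} Hodge structure. These are two different Hodge structures, related by $F_\infty=e^{iN}\tilde F$; your plan never invokes this relation, so you have no mechanism to transport the $(k,k)$--type of $w$ from $H_\infty$ down to Hodge--type information on the primitive pieces $p_m\in\mathrm{Gr}_m^W$. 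Without that, the polarization--positivity input is unavailable and the ``over--determination'' you are hoping for does not materialise.

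A related issue: even granting $N^0w=0$ (which in fact follows quickly from your observation $C_\infty w=w$ together with the fact, used in \S\ref{subsec:self-dual_proof}, that $C_\infty$ swaps the $\pm\ell$ eigenspaces of $N^0$: since $w_\ell=0$ for $\ell>0$ by $w\in W_{2k}$, one gets $w_\ell=C_\infty w_{-\ell}=0$ for $\ell<0$), this does \emph{not} yield $Nw=0$. Weight zero is not the same as lying in $\ker N$; in general $N\neq N^-$, and a weight--$0$ vector in a nontrivial $\mathfrak{sl}(2)$--string is not annihilated by $N$. To kill $Nw$ one must again pass to the $\mathbb R$--split picture.

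The paper's proof sidesteps both problems by using $F_\infty=e^{iN}\tilde F$ from the outset: one lands directly in the Deligne bigrading $\tilde I^{p,q}$ of $(W,\tilde F)$, where $Nw$ is shown to lie in $\tilde F^k\cap W_{2k-2}\cap H_{\mathbb R}$, a space which is visibly zero since it is contained in $\bigoplus_{p>q}\tilde I^{p,q}$; then $w\in\tilde I^{k,k}$ gives $N^0w=0$. No Lefschetz decomposition or polarization signs are needed. Your proposed route, if it can be completed at all, would amount to re--deriving this bridge and then doing substantially more bookkeeping than the two--line Deligne--splitting argument.
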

\begin{proof}
We follow the proof of Schnell, see Lemma 12.4 of \cite{schnell2014extended}. We proceed in two steps:
\begin{enumerate}
\item First, we show that $Nw=0$. Since $N$ acts on the weight filtration as $N W_{2k}\subseteq W_{2k-2}$ and $F_\infty = e^{iN}\tilde{F}$, with $\tilde{F}$ the limiting filtration, one finds that
\begin{equation}
    e^{-iN}w\in \tilde{F}^k\cap W_{2k}(N)\,.
\end{equation}
In particular, it follows that both $w$ and $Nw$ lie inside $\tilde{F}^{k}\cap W_{2k}(N)$. At the same time, however, noting that $N \tilde{F}^{k}\subseteq \tilde{F}^{k-1}$, it follows that $Nw\in \tilde{F}^{k-1}\cap W_{2k-2}(N)$. Combining these results, together with the fact that $w$ and $N$ are real, gives the condition
\begin{equation}
    Nw \in \tilde{F}^{k}\cap W_{2k-2}(N)\cap H_{\mathbb{R}}\,.
\end{equation}
Using the fact that $(W,\tilde{F})$ defines a mixed Hodge structure, the space on the right-hand side is empty, hence $Nw=0$. To give some feeling for this property, we have illustrated the relevant spaces in the case of a weight $D=4$ limiting mixed Hodge structure (so $k=2$) in figure \ref{fig:Deligne}.
\item Next, we show that $N^0w=0$. This follows straightforwardly from the fact that
\begin{equation}
    W_{2k}\cap \tilde{F}^k\cap \overline{\tilde{F}^k}= \tilde{I}^{k,k}\,,
\end{equation}
where $\tilde{I}^{k,k}$ denotes the $(k,k)$-component of the Deligne splitting associated to the $\mathbb{R}$-split mixed Hodge structure $(W,\tilde{F})$, recall also \eqref{eq:relation_Deligne_Rsplit} and the surrounding discussion. In particular, using the earlier result that $Nw=0$, one finds that $w\in \tilde{I}^{k,k}$. Using the fact that $N^0$ acts on $\tilde{I}^{p,q}$ as multiplication by $p+q-D$, recall equation \eqref{eq:def_weight_operator}, the result follows. 
\end{enumerate}
\end{proof}
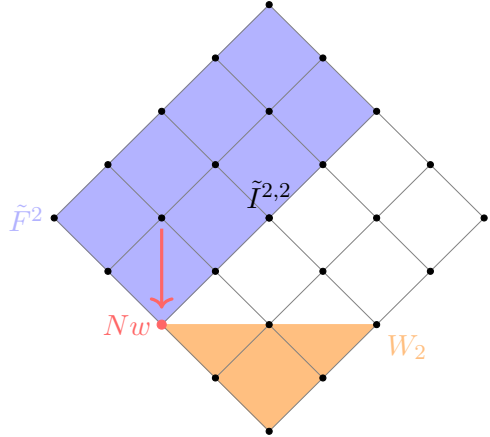
\begin{figure}[h!]
\centering
\begin{tikzpicture}[baseline={([yshift=-.5ex]current bounding box.center)},scale=1,cm={cos(45),sin(45),-sin(45),cos(45),(15,0)}]

\filldraw[color=orange!50] (2,0) node[below right]{$W_{2}$} -- (0,2) -- (0,0);
\filldraw[color=blue!30] (4,4) -- (0,4) node[left]{$\tilde{F}^2$}-- (0,2) -- (4,2);

\draw[step = 1, gray, ultra thin] (0, 0) grid (4, 4);

  \foreach \i\j in {0/0, 0/1, 0/2, 0/3, 0/4,
1/0, 1/1, 1/2, 1/3, 1/4,
2/0, 2/1, 2/2, 2/3, 2/4,
3/0, 3/1, 3/2, 3/3, 3/4,
4/0, 4/1, 4/2, 4/3, 4/4}
{
  \draw[fill] (\i,\j) circle[radius=0.04] ;
}

\draw[fill] (2,2) circle[radius=0.04] node[above]{$\tilde{I}^{2,2}$};

\draw[fill, color=red!60] (0,2) circle[radius=0.06] node[left]{$Nw$};
\draw[->,color=red!60, very thick] (0.9,2.9) -- (0.15,2.15); 

\end{tikzpicture}
\caption{Arrangement of the Deligne splitting $\tilde{I}^{p,q}$ for a weight four limiting mixed Hodge structure $(W,\tilde{F})$. In blue: $\tilde{F}^2$ component of the limiting Hodge filtration. In orange: $W_{2}$ component of the weight filtration. In red: the potential location of $Nw$, with the arrow denoting the action of the log-monodromy matrix $N$. Since complex conjugation acts on the Deligne splitting as reflection in the vertical axis, the vector $Nw$ cannot be real unless it is zero. Following the proof of Lemma \ref{lem:Fsharp_W0}, the only possible location for $w$ is in the space $\tilde{I}^{2,2}$.
\label{fig:Deligne}}
\end{figure}

As a result of Lemma \ref{lem:Fsharp_W0}, we find that 
\begin{equation}
    N^0w = 0\,,\qquad Nw = 0\,.
\end{equation}
One can summarize this result in the statement that $w$ should be a singlet under the $\mathfrak{sl}(2)$ action. Returning to our original sequence $\hat{v}(n)$, recall equation \eqref{eq:hatv_limit}, and projecting the congruence
\begin{equation}
    e(n)\hat{v}(n)\equiv \hat{v}_0\quad \mathrm{mod}\,W_{2k-1} 
\end{equation}
to the eigenspace $E_0$, one finds that indeed $w=\hat{v}_0$ and hence $N\hat{v}_0=0$. Additionally, as mentioned in the proof of Lemma \ref{lem:Fsharp_W0}, the limiting element $\hat{v}_0$ lies inside the space $\tilde{I}^{k,k}$. In other words, it lies exactly in the center of the Deligne diamond. 

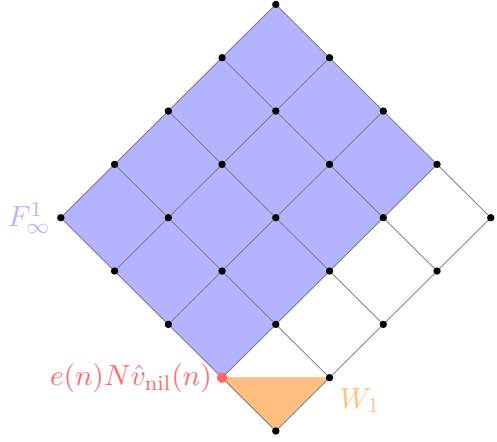
\begin{figure}[t!]
\centering
\begin{tikzpicture}[baseline={([yshift=-.5ex]current bounding box.center)},scale=1,cm={cos(45),sin(45),-sin(45),cos(45),(15,0)}]

\filldraw[color=orange!50] (1,0) node[below right]{$W_{1}$} -- (0,1) -- (0,0);
\filldraw[color=blue!30] (4,4) -- (0,4) node[left]{$F_\infty^1$}-- (0,1) -- (4,1);

\draw[step = 1, gray, ultra thin] (0, 0) grid (4, 4);

  \foreach \i\j in {0/0, 0/1, 0/2, 0/3, 0/4,
1/0, 1/1, 1/2, 1/3, 1/4,
2/0, 2/1, 2/2, 2/3, 2/4,
3/0, 3/1, 3/2, 3/3, 3/4,
4/0, 4/1, 4/2, 4/3, 4/4}
{
  \draw[fill] (\i,\j) circle[radius=0.04] ;
}

\draw[fill, color=red!60] (0,1) circle[radius=0.06] node[left]{$e(n)N\hat{v}_{\mathrm{nil}}(n)$};

\end{tikzpicture}
\caption{Arrangement of the Deligne splitting $I^{p,q}$ for a weight four limiting mixed Hodge structure $(W,F_\infty)$. In blue: $F_\infty^1$ component of the boundary Hodge filtration. In orange: $W_{1}$ component of the weight filtration. In red: the potential location of the limit of the sequence $e(n)N\hat{v}_{\mathrm{nil}}(n)$ as $n\rightarrow\infty$. Since complex conjugation acts on the Deligne splitting as reflection in the vertical axis, the vector $e(n)N\hat{v}_{\mathrm{nil}}(n)$ cannot be real unless it is zero. 
\label{fig:Deligne2}}
\end{figure}
\subsubsection*{Step 4: Show that $N\hat{v}(n)=0$}
Having shown that $N\hat{v}_0=0$, the next step is to show that in fact this relation applies to the whole sequence $\hat{v}(n)$, after passing to a subsequence. Indeed, since $N\hat{v}_0=0$ and $N W_{2k-1}\subseteq W_{2k-3}$, it follows that 
\begin{equation}
    N\hat{v}(n)\in W_{2k-3}\,.
\end{equation}
Now, let us write
\begin{equation}
    e(n)N\hat{v}(n)=e(n)N\hat{v}_{\mathrm{nil}}(n)+e(n)N\hat{v}_{\mathrm{inst}}(n)\,.
\end{equation}
Our goal will be to first show that both terms on the right-hand side are in fact exponentially small. Indeed, suppose not, then the ratio
\begin{equation}
    \frac{||e(n)N\hat{v}_{\mathrm{inst}}(n)||_\infty}{||e(n)\hat{v}(n)||_\infty}
\end{equation}
would go to zero as $n\rightarrow\infty$. Consequently, one would find that
\begin{equation}
    \lim_{n\rightarrow\infty}\frac{e(n)N\hat{v}(n)}{||e(n)N\hat{v}(n)||_\infty}=\lim_{n\rightarrow\infty}\frac{e(n)N\hat{v}_{\mathrm{nil}}(n)}{||e(n)N\hat{v}_{\mathrm{nil}}(n)||_\infty}\,,
\end{equation}
and furthermore the resulting limit would give a unit vector in the space $NF^k_\infty\cap W_{2k-3} \cap H_{\mathbb{R}}$. To see the latter, one uses the fact that $e(n)N = y(n)^{-1}Ne(n)$ to swap the order of $e(n)$ and $N$ (note that these are vector space identities, so an overall rescaling is irrelevant) and again applies the relation \eqref{eq:Fsharp_F_relation_limit}. The important observation is that $NF^k_\infty\cap W_{2k-3} \cap H_{\mathbb{R}}=\{0\}$, due to the fact that $(W,F_\infty)$ defines a mixed Hodge structure (recall also that $N F^k_\infty\subseteq F^{k-1}_\infty$). This is again illustrated in figure \ref{fig:Deligne2} for the case $D=4$, or $k=2$. We have therefore arrived at a contradiction and must conclude that in fact $||e(n)\hat{v}(n)||_\infty$ is bounded by a constant multiple of $||e(n)N\hat{v}_{\mathrm{inst}}(n)||_\infty$, in particular it is exponentially small. Since $e(n)$ grows at most polynomially, we conclude that $N\hat{v}(n)$ itself becomes exponentially small as $n\rightarrow\infty$. Since $\hat{v}(n)$ is quantized, this means that for sufficiently large $n$, we must have $N\hat{v}(n)=0$, as desired. 

\subsubsection*{Step 5: $v(n)$ is bounded}
Since $N\hat{v}(n)=0$, it immediately follows that also $Nv(n)=0$. This is a huge simplification, because it effectively allows us to remove the moduli-dependence from the problem. Indeed, recall that the original nilpotent orbit is of the form
\begin{equation}
    F^p_{\mathrm{nil}}(t) = e^{t(n)N}F^p_0\,.
\end{equation}
Now choose some fixed $t_*$ with imaginary part large enough, and note that
\begin{equation}
    v(n) = e^{t_*N-t(n)N}v(n)\in F^k_\mathrm{nil}(t_*)\,.
\end{equation}
In particular, due to the tadpole condition, $v(n)$ is bounded in Hodge norm with respect to the fixed filtration $F^k_{\mathrm{nil}}(t_*)$. Together with the quantization condition this implies that $v(n)$ can take on only finitely many values. This finishes the proof in the one-variable case.

\subsubsection{Multiple variables}
\label{subsubsec:proof_Hodge_locus_multi}
In this section we present the general multi-variable proof of finiteness of Hodge classes, based on the original work of Cattani, Deligne, and Kaplan \cite{CDK}. We also draw heavily from the formulation of the proof in \cite{schnell2014extended}. The proof is based on an inductive application of the one-variable result discussed in the previous section. To start, we set up the induction and briefly describe in which stage of the proof the induction step is used.  

\subsubsection*{Step 0: Setting up the induction}
The major complicating factor of the multi-variable proof is the fact that there are many possible hierarchies between the saxions $y_i$ that become large as one approaches a boundary of the moduli space. The strategy of \cite{CDK} is to inductively iterate over all such possible hierarchies. Indeed, it is argued that, after passing to a subsequence, one can always parametrize the sequence of saxions as
\begin{equation}
\label{eq:parametrization_saxions}
    y_k(n) = i \sum_{l=1}^d A_{kl}\,\lambda_l(n)+b_k(n)\,,
\end{equation}
where the $A_{kl}$ are real and non-negative constants, comprising the entries of an $m\times d$ matrix, and $b(n)\in\mathbb{R}^m$ is a bounded sequence. Furthermore, the sequence $\lambda(n)\in\mathbb{R}^d$ has the property that
\begin{equation}
    \frac{\lambda_k(n)}{\lambda_{k+1}(n)}\rightarrow\infty\,,\qquad \text{as $n\rightarrow\infty$}\,,
\end{equation}
for all $1\leq k\leq d$, where we put $\lambda_{d+1}(m)=1$. Next, we note that \eqref{eq:parametrization_saxions} leads to the following relation
\begin{equation}
    i\sum_{j=1}^m y_j(n)N_i = i\sum_{k=1}^d \lambda_k(n) M_k+\sum_{j=1}^m b_j(n)N_j\,,
\end{equation}
where the new monodromy matrices $M_k$ are related to the original ones by $M_k = A_{kj}N_j$. Effectively, the integer $1\leq d\leq m$ parametrizes the number of different hierarchies between the saxions. Consequently, the proof will proceed by induction on $d$. To be precise, we will show that the sequence $v(n)$ has a constant subsequence, which we denote by $v$, such that 
\begin{equation}
    M_k v=0\,,\qquad 1\leq k\leq d\,.
\end{equation}
Finally, to avoid cluttering the notation, we will set the axions $x_i$ to zero for the rest of the proof. As in the one-variable case, the axion-dependence can be straightforwardly incorporated using the monodromy matrices.

\subsubsection*{Step 1: Incorporating exponential corrections}
As in the one-variable case, it will be useful to write
\begin{equation}
    v(n) = v_{\mathrm{nil}}(n)+v_{\mathrm{inst}}(n)\,,
\end{equation}
where $v_{\mathrm{nil}}\in F^k_{\mathrm{nil}}$ is a sequence of Hodge classes with respect to a nilpotent orbit $F_{\mathrm{nil}}$, and the sequence $v_{\mathrm{inst}}(n)$ satisfies
\begin{equation}
    \frac{||v_{\mathrm{inst}}(n)||_\infty}{||v(n)||_\infty}\sim e^{-\alpha\, \mathrm{sup}(y_i)}\,,
\end{equation}
for some constant $\alpha$.

\subsubsection*{Step 2: Boundedness and $\mathfrak{sl}(2)$-weights}

Again, as in the one-variable case, the starting point of the proof is the statement that the Hodge norm $||v(n)||$ is bounded, and therefore the boundary Hodge norm $||e(n)v(n)||_\infty$ is bounded. In the one-variable case, one could immediately conclude from the latter that $v(n)$ lies inside $W_{2k}$, meaning that it only has weight components $\ell\leq 0$. In the multi-variable case, it similarly turns out to be true that $v(n)$ lies inside $W^{(1)}_{2k}$, meaning that its weight-components with respect to the first $\mathfrak{sl}(2)$ grading operator $N^0_{(1)}$ must have $\ell_1\leq 0$. However, the proof of this statement is significantly more involved. Below we present the general idea in a number of steps, but refer the reader to \cite{CDK} for the details of some of the statements.
\begin{enumerate}
    \item Suppose $v(n)\in W^{(1)}_{D+\ell_1}$ and define $\tilde{H}_{\mathbb{C}}=\mathrm{Gr}_{D+\ell_1}^{(1)}$, which supports a polarized variation of Hodge structure of weight $D+\ell_1$. Denote by $\tilde{v}(n)$ the projection of $v(n)$ onto $\tilde{H}_{\mathbb{C}}$. Suppose that $\ell_1\geq 1$, then one can show that
    \begin{equation}
    \label{eq:relation_norms_projection}
        ||\tilde{e}(n)\tilde{v}(n)||_\infty^2\leq \lambda_2(n)^{-\ell_1}||e(n)v(n)||_\infty^2 < \lambda_1(n)^{-\ell_1}||e(n)v(n)||_\infty^2\,,
    \end{equation}
    where similarly $\tilde{e}(n)$ denotes the projection of the operator $e(n)$ onto $\tilde{H}_{\mathbb{C}}$.    Intuitively, the relation \eqref{eq:relation_norms_projection} holds because the projection removes the $\lambda_1(n)$ dependence. 
    \item As a result of \eqref{eq:relation_norms_projection} and the fact that $||e(n)v(n)||_\infty$ is bounded, it follows that the expression 
    \begin{equation}
        \lambda_2(n)^{\ell_1}||\tilde{e}(n)\tilde{v}(n)||^2_\infty
    \end{equation}
    must be bounded as well. We will now argue that the highest weights of $\tilde{v}(n)$ satisfy
    \begin{equation}
    \label{eq:weights_projection}
    \ell_i=0\,,\qquad i>1\,.
    \end{equation} 
    This can be seen as follows. Since the unit vector
    \begin{equation}
        \frac{\tilde{e}(n)\tilde{v}(n)}{||\tilde{e}(n)\tilde{v}(n)||_\infty}
    \end{equation}
    converges to an element in $W^{(2)}_{D+\ell_2}\cap F^k_\infty\cap H_{\mathbb{R}}$ and $(W^{(2)}, F_\infty)$ defines a mixed Hodge structure, one has that $\ell_2\geq 0$. Now one can once more project $\tilde{v}(n)$ onto the graded space $\mathrm{Gr}_{D+\ell_2}^{(2)}$, which carries a polarized variation of Hodge structure of weight $D+\ell_2$. Denoting this projection by $\tilde{v}(n)'$ and noting that
    \begin{align}
        \lambda_3(n)^{\ell_2} ||\tilde{e}(n)' \tilde{v}(n)'||_\infty^2 &\leq \lambda_2(n)^{\ell_2} ||\tilde{e}(n)' \tilde{v}(n)'||_\infty^2\\
        &\leq \lambda_2(n)^{\ell_1}||\tilde{e}(n)\tilde{v}(n)||_\infty^2\\
        &\leq ||e(n)v(n)||^2_\infty\,,
    \end{align}
    one concludes that $\lambda_3(n)^{\ell_2} ||\tilde{e}(n)' \tilde{v}(n)'||_\infty^2$ is bounded. Proceeding by induction, it follows that $\ell_3=\ldots=\ell_d=0$. But then
    \begin{equation}
        \lambda_2(n)^{\ell_1}||\tilde{e}(n)\tilde{v}(n)||_\infty^2\geq \lambda_2(n)^{\ell_2}||\tilde{v}(n)^{(\ell_2,0,\ldots, 0)}||_\infty^2\,,
    \end{equation}
    and boundedness of the left-hand side implies that $\ell_2\leq 0$, hence $\ell_2=0$.     
    \item Returning to the sequence $v(n)$, and applying \eqref{eq:weights_projection}, one finds that
    \begin{equation}
        ||e(n)v(n)||_\infty^2\geq ||e(n)v(n)^{(\ell_1,\ldots, \ell_d)}||_\infty^2 = \left(\frac{\lambda_1(n)}{\lambda_2(n)
        }\right)^{\ell_1}||v(n)^{(\ell_1,0,\ldots, 0)}||_\infty^2\,.
    \end{equation}
    This is in contradiction with the fact that $||e(n)v(n)||_\infty^2$ must remain bounded, hence the assumption that $\ell_1\geq 1$ is false. Therefore, we conclude that $\ell_1\leq 0$.
\end{enumerate}

\subsubsection*{Step 3: Restricting the limit of $v(n)$}
The idea is now to apply the induction hypothesis to projection $\tilde{v}(n)$ of $v(n)$ onto $\tilde{H}_{\mathbb{C}}=\mathrm{Gr}_{D}^{(1)}$, which again supports a variation of Hodge structure of weight $D$. This effectively projects the relation \eqref{eq:parametrization_saxions} to
\begin{equation}
    \tilde{y}_k(n) = \sum_{l=2}^d A_{kl}\tilde{\lambda}_l(n)+\tilde{b}(n)\,.
\end{equation}
This expansion is very similar to \eqref{eq:parametrization_saxions}, but with $d-1$ terms instead of $d$. Therefore, after passing to a subsequence, the induction step implies that $\tilde{v}(n)$ has a constant value $\tilde{h}\in \tilde{H}_{\mathbb{Z}}$ and that $\tilde{M}_k \tilde{h}=0$ for $k=2,\ldots, d$. Lifting this result back to the sequence $v(n)$, we may write
\begin{equation}
    v(n) \equiv v_0\quad\mathrm{mod}\,W_{D-1}^{(1)}\,,
\end{equation}
where 
\begin{equation}
    v_0 = \sum_{\ell_2,\ldots, \ell_d\leq 0} v(n)^{(0,\ell_2,\ldots, \ell_d)}
\end{equation}
is a constant sequence that projects to $\tilde{h}$. The strategy is now similar to the one-variable case, where we have shown that $v_0$ is annihilated by the monodromy operator. In the multi-variable case, the result is not quite as strong, because the element $v_0$ contains multiple $\mathfrak{sl}(2)$-components. Indeed, we will instead show that $M_1 v_0^{(0,\ldots, 0)}=0$. To this end, let us again introduce the limiting vector
\begin{equation}
    w:=\lim_{n\rightarrow\infty}e(n)v(n)\in F^k_\infty\cap W_{2k-1}^{(1)}\cap H_{\mathbb{R}}\,.
\end{equation}
Applying the result of Lemma \ref{lem:Fsharp_W0}, one finds 
\begin{equation}
    N_1^0w = M_1w=0\,.
\end{equation}
Projecting the congruence
\begin{equation}
    e(n)v(n)\equiv e(m)v_0\quad \mathrm{mod}\,W_{2k-1}^{(1)}\,,
\end{equation}
to the weight $\ell_1=0$ eigenspace of $N_{(1)}^0$, we get that indeed $w=v_0^{(0,\ldots, 0)}$ and hence $M_1v_0^{(0,\ldots, 0)}=0$ as desired.

\begin{figure}[t!]
\centering
\begin{tikzpicture}[baseline={([yshift=-.5ex]current bounding box.center)},scale=1,cm={cos(45),sin(45),-sin(45),cos(45),(15,0)}]

\filldraw[color=blue!30] (4,4) -- (0,4) node[left]{$F_\infty^1$}-- (0,1) -- (4,1);
\filldraw[color=orange!50] (2,0) node[below right]{$W^{(1)}_{2}$} -- (0,2) -- (0,0);

\draw[step = 1, gray, ultra thin] (0, 0) grid (4, 4);

  \foreach \i\j in {0/0, 0/1, 0/2, 0/3, 0/4,
1/0, 1/1, 1/2, 1/3, 1/4,
2/0, 2/1, 2/2, 2/3, 2/4,
3/0, 3/1, 3/2, 3/3, 3/4,
4/0, 4/1, 4/2, 4/3, 4/4}
{
  \draw[fill] (\i,\j) circle[radius=0.04] ;
}

\draw[fill] (1,1) circle[radius=0.06] node[above]{$u$};
\end{tikzpicture}
\caption{Arrangement of the Deligne splitting $\tilde{I}^{p,q}_{(1)}$ associated to the mixed Hodge structure $(W^{(1)},\tilde{F}_\infty)$. In blue: $\tilde{F}^1_{(1)}$ component of the $\tilde{F}^p_{(1)}$ filtration. In orange: $W_1^{(1)}$ component of the weight filtration $W_\ell^{(1)}$. The only possible location for the limiting vector $u=\lim_{n\rightarrow\infty} u(n)$ is also indicated. Note that indeed $u$ necessarily has weight $\ell_1=-2$.}
\label{fig:Deligne_diamond_multi_var}
\end{figure}
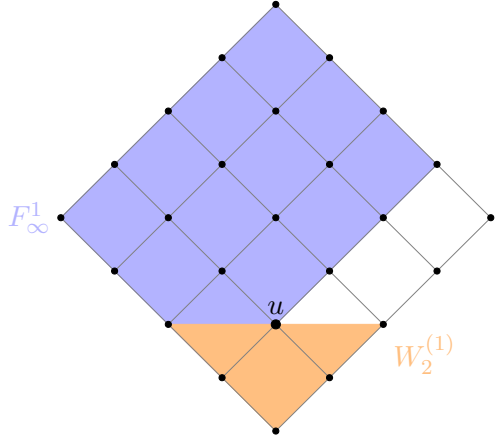

\subsubsection*{Step 4: Showing that $M_1 v(n)=0$}
Having shown that $M_1v_0^{(0,\ldots, 0)}=0$, we next show that in fact $M_1 v(n)=0$, after passing to a subsequence. The argumentation is again analogous to the one-variable case. However, it is important to note that now it is not immediately obvious whether $M_1 v_0=0$. So far we have only shown this for the $v_0^{(0,\ldots, 0)}$ component. As a result, the analysis needs to be slightly modified, but is very similar in spirit. Let us again write
\begin{equation}
    e(n)M_1 v(n)=e(n)M_1 v_{\mathrm{nil}}(n)+e(n)M_1 v_{\mathrm{inst}}(n)\,,
\end{equation}
and suppose that the ratio
\begin{equation}
    \frac{||e(n)M_1 v_{\mathrm{inst}}(n)||_\infty}{||e(n)M_1 v(n)||_\infty}
\end{equation}
goes to zero as $n\rightarrow\infty$. Consequently, one would find that
\begin{equation}
    u(n):=\frac{e(n)M_1v(n)}{||e(n)M_1v(n)||_\infty}\in W^{(1)}_{2k-2}\cap H_{\mathbb{R}}
\end{equation}
defines a sequence of unit vectors (note the appearance of $W_{2k-2}^{(1)}$ as opposed to $W_{2k-3}$ in the one-variable case!), which would converge to a unit vector $u\in F^{k-1}_\infty\cap W^{(1)}_{2k-2}\cap H_{\mathbb{R}}$. Because the index on the weight filtration is increased by one compared to the one-variable case, it is no longer the case that this space is immediately trivial. It is, however, very restricted. Indeed, another application of Lemma \ref{lem:Fsharp_W0} gives that $u$ necessarily has weight $\ell_1=-2$. This is illustrated in figure \ref{fig:Deligne_diamond_multi_var} for the case $D=4$.

At the same time, we may recall that $v(n)\equiv v_0\,\mathrm{mod}\, W^{(1)}_{2k-1}$, hence by projecting onto the $\ell_1=-2$ component, we have
\begin{equation}
    u = \frac{e(n)M_1v_0}{||e(n)M_1v(n)||_\infty}\in W^{(1)}_{2k-2}\cap \cdots \cap W^{(d)}_{2k-2}\,.
\end{equation}
Hence, we find that
\begin{equation}
    u\in F^k_\infty\cap W^{(i)}_{2k-2}\cap H_{\mathbb{R}}\,,
\end{equation}
for every $i=1,\ldots, d$. Yet another application of Lemma \ref{lem:Fsharp_W0} implies that $u$ necessarily has weights $\ell_i=-2$ for all $i=1,\ldots, d$. However, recalling that $M_1$ lowers all the weights by exactly $-2$, it must be that
\begin{equation}
    u(m)^{(-2,\ldots, -2)}=\frac{e(n)M_1 v_0^{(0,\ldots, 0)}}{||e(n)M_1v(n)||_\infty}=0\,.
\end{equation}
Hence, this is in contradiction to the fact that $u$ is a unit vector. Consequently, it must be the case that $||e(n)M_1v(n)||_\infty$ is bounded by a constant multiple of $||e(n)M_1v_{\mathrm{inst}}(n)||_\infty$. In particular, the former is also exponentially small. Since $e(n)$ grows at most polynomially, this implies that $M_1v(n)$ is exponentially small. Due to the quantization condition, we may therefore assume that $M_1v(n)=0$ after passing to a subsequence. 

\subsubsection*{Step 5: Finishing the proof}
Having argued that $M_1 v(n)=0$ the final step of the proof proceeds in the same spirit as the one-variable case. Indeed, in the one-variable case this relation allowed us to completely remove the moduli dependence of the problem. In the multi-variable case, it instead allows us to remove one of the moduli from the problem. Indeed, since the sequence of filtrations
\begin{equation}
    e^{-i\lambda_1(n)M_1} F^p_{\mathrm{nil}}
\end{equation}
no longer depends on $\lambda_1(n)$, we have effectively reduced the value of $d$ to $d-1$. By induction, we may pass to a subsequence of $v(n)$ which is constant and lies in the kernel of $M_2\,\ldots, M_d$. Since we already have $M_1 v(n)=0$ we have indeed shown that $M_k v=0$ for all $1\leq k \leq d$. Finally, when $d=1$ one may simply apply the one-variable proof. This concludes the proof of the multi-variable case. 

\bibliographystyle{JHEP}
\bibliography{references}

\end{document}